\newcounter{mynotes}
\def\notes{1}
\newcommand{\gnote}[1]{\ifnum\notes=1{{\sf\color{blue} [Gopi: #1]}}\fi}
\newcommand{\vnote}[1]{\ifnum\notes=1{{\sf\color{red} [Venkat: #1]}}\fi}
\declaretheorem[within=section]{theorem}
\declaretheorem[sibling=theorem]{corollary}
\declaretheorem[sibling=theorem]{lemma}
\declaretheorem[sibling=theorem]{claim}
\declaretheorem[sibling=theorem]{definition}
\declaretheorem[sibling=theorem]{proposition}
\declaretheorem[sibling=theorem]{example}
\newtheoremstyle{case}{}{}{}{}{}{:}{ }{}
\theoremstyle{case}
\newcommand{\R}{\mathbb{R}} 
\newcommand{\N}{\mathbb{N}} 
\newcommand{\F}{\mathbb{F}}
\renewcommand{\L}{\mathbb{L}}
\newcommand{\K}{\mathbb{K}}
\newcommand{\cB}{\mathcal B}
\newcommand{\cP}{\mathcal P}
\DeclarePairedDelimiter\ceil{\lceil}{\rceil} 
\DeclarePairedDelimiter\floor{\lfloor}{\rfloor} 
\newcommand{\rk}{\mathrm{rank}} 
\newcommand{\set}[1]{\{#1\}}
\renewcommand{\epsilon}{\varepsilon}
  \newcommand{\beq}{\begin{equation}}
  \newcommand{\eeq}{\end{equation}}
  \newcommand{\beqn}{\begin{equation*}}
  \newcommand{\eeqn}{\end{equation*}}
  \newcommand{\beqr}{\begin{eqnarray}}
  \newcommand{\eeqr}{\end{eqnarray}}
  \newcommand{\beqrn}{\begin{eqnarray*}}
  \newcommand{\eeqrn}{\end{eqnarray*}}
  \newcommand{\bmline}{\begin{multline}}
  \newcommand{\emline}{\end{multline}}
  \newcommand{\bmlinen}{\begin{multline*}}
  \newcommand{\emlinen}{\end{multline*}}
\renewcommand{\ge}{\geqslant}
\renewcommand{\le}{\leqslant}
\newcommand\conj[2]{{}^{#2}{#1}}
\newcommand\Id{\mathrm{Id}}
\newcommand\charF{\mathrm{char}}
\title{Improved Maximally Recoverable LRCs using Skew Polynomials}
\author{Sivakanth Gopi\thanks{Microsoft Research. Email: \texttt{sigopi@microsoft.com}.}\\
\and    Venkatesan Guruswami\thanks{University of California, Berkeley and Simons Institute for the Theory of Computing. Email: \texttt{venkatg@berkeley.edu}. Research supported in part by NSF grants CCF-1563742 and CCF-1814603, and CCF-2210823, and a Simons Investigator Award. Most of this work was done when the author was with the Computer Science Department at Carnegie Mellon University.}}
\date{}
\begin{document}

\maketitle
\thispagestyle{empty}

\begin{abstract}
An $(n,r,h,a,q)$-Local Reconstruction Code (LRC) is a linear code over $\mathbb{F}_q$ of length $n$, whose codeword symbols are partitioned into $n/r$ local groups each of size $r$. Each local group satisfies `$a$' local parity checks to recover from `$a$' erasures in that local group and there are further $h$ global parity checks to provide fault tolerance from more global erasure patterns. Such an LRC is Maximally Recoverable (MR), if it offers the best blend of locality and global erasure resilience---namely it can correct all erasure patterns whose recovery is information-theoretically feasible given the locality structure (these are precisely patterns with up to `$a$' erasures in each local group and an additional $h$ erasures anywhere in the codeword).

Random constructions can easily show the existence of 
MR LRCs over very large fields, but a
major algebraic challenge is to construct MR LRCs, or even show their
existence, over smaller fields, as well as understand inherent
lower bounds on their field size. We give an explicit construction of
$(n,r,h,a,q)$-MR LRCs with field size $q$ bounded by
$\left(O\left(\max\{r,n/r\}\right)\right)^{\min\{h,r-a\}}$. This significantly
improves upon known constructions in many practically relevant parameter
ranges. Moreover, it matches the lower bound
from~\cite{gopi2020maximally} in an interesting range of parameters
where $r=\Theta(\sqrt{n})$, $r-a=\Theta(\sqrt{n})$ and $h$ is a fixed
constant with $h\le a+2$, achieving the optimal field size of
$\Theta_{h}(n^{h/2}).$

Our construction is based on the theory of skew
polynomials.  We believe skew polynomials should have further
applications in coding and complexity theory; as a small illustration
we show how to capture algebraic results underlying list decoding folded
Reed-Solomon and multiplicity codes in a unified way within this theory.

\end{abstract}
\newpage
\tableofcontents
\thispagestyle{empty}
\newpage
\setcounter{page}{1}

\section{Introduction}

We present an approach to construct Maximally Recoverable Local Reconstruction Codes (MR LRCs) based on the theory of skew polynomials. Our construction matches or improves the field size of MR LRCs for most parameter regimes. We now describe the motivation of MR LRCs in the context of coding for distributed storage, and then formally define them and describe our results.

\smallskip
In modern large-scale distributed storage systems (DSS), data is partitioned and stored in individual servers, each with a small storage capacity of a few terabytes. A server can crash any time losing all the data it contains. Less catastrophically, a server often tends to become temporarily unavailable either due to system updates, network bottlenecks, or being busy serving requests of other users. There are thus two design objectives for a DSS. The first one is to never lose user data in the event of crashes (or at least make it highly improbable). The second is to service user requests with low latency despite some servers becoming temporarily unavailable. As the simple approach of replicating data is prohibitive in terms of storage costs, erasure codes are employed in DSS. Using a Reed-Solomon code, if we add $n-k$ parity check servers to $k$ data servers, we can recover user data from any $k$ available servers. But as $k$ gets larger, this does not meet our second objective of servicing user requests with low latency. Local Reconstruction Codes (LRCs) were invented precisely for achieving both the objectives while still maintaining storage efficiency. These codes have \emph{locality} which means that for a small number of erasures, any codeword symbol can be recovered quickly based on a small number of other codeword symbols.  At the same time, they can also recover the missing codeword symbols in the unlikely event of a larger number of erasures (but can do so less efficiently). Locality in distributed storage was first introduced in~\cite{HCL,CHL}, but LRCs were first formally defined and studied in \cite{GHSY} and \cite{Dimakis_0}.  Suitably optimized LRCs have been implemented in several large scale systems such as Microsoft Azure~\cite{HuangSX} and Facebook~\cite{XOR_ELE}, leading to enormous savings in storage costs and improved system reliability.

\smallskip
An $(n,r,h,a,q)$-LRC is a linear code over $\F_q$ of length $n$, whose codeword symbols are partitioned into $n/r$ local groups each of size $r$. The coordinates in each local group satisfy `$a$' local parity checks and there are further $h$ global parity checks that all the $n$ coordinates satisfy. The local parity checks are used to recover from up to `$a$' erasures in a local group by reading at most $r-a$ symbols in that local group. The $h$ global parities are used to correct more global erasure patterns which involve more than $a$ erasures in each local group. The parity check matrix $H$ of an $(n,r,h,a,q)$-LRC has the structure shown in Equation~\ref{fig:MRtopology_intro}. 
\begin{equation}\label{fig:MRtopology_intro}
H=
\left[
\begin{array}{c|c|c|c}
A_1 & 0 & \cdots & 0\\
\hline
0 &A_2 & \cdots & 0\\
\hline
\vdots & \vdots & \ddots & \vdots \\
\hline
0 & 0 & \cdots & A_g \\
\hline
B_1 & B_2 & \cdots & B_g \\
\end{array}
\right].
\end{equation}
Here $g=n/r$ is the number of local groups. $A_1,A_2,\dots,A_g$ are $a\times r$ matrices over $\F_q$ which correspond to the local parity checks that each local group satisfies. $B_1,B_2,\dots,B_g$ are $h\times r$ matrices over $\F_q$ and together they represent the $h$ global parity checks that the codewords should satisfy.

\smallskip
Equivalently, from an encoding point of view, an $(n,r,h,a,q)$-LRC is obtained by adding $h$ global parity checks to $k$ data symbols, partitioning these $k+h$ symbols into local groups of size $r-a$, and then adding `$a$' local parity checks for each local group. As a result we have $n=k + h + a\cdot \frac{k+h}{r-a}$ codeword symbols. This is shown in Figure~\ref{Fig:LRC}.
\begin{figure}[h]
\label{Fig:LRC}
\includegraphics[scale=0.45,trim={0cm 13cm 19cm 10cm},clip]{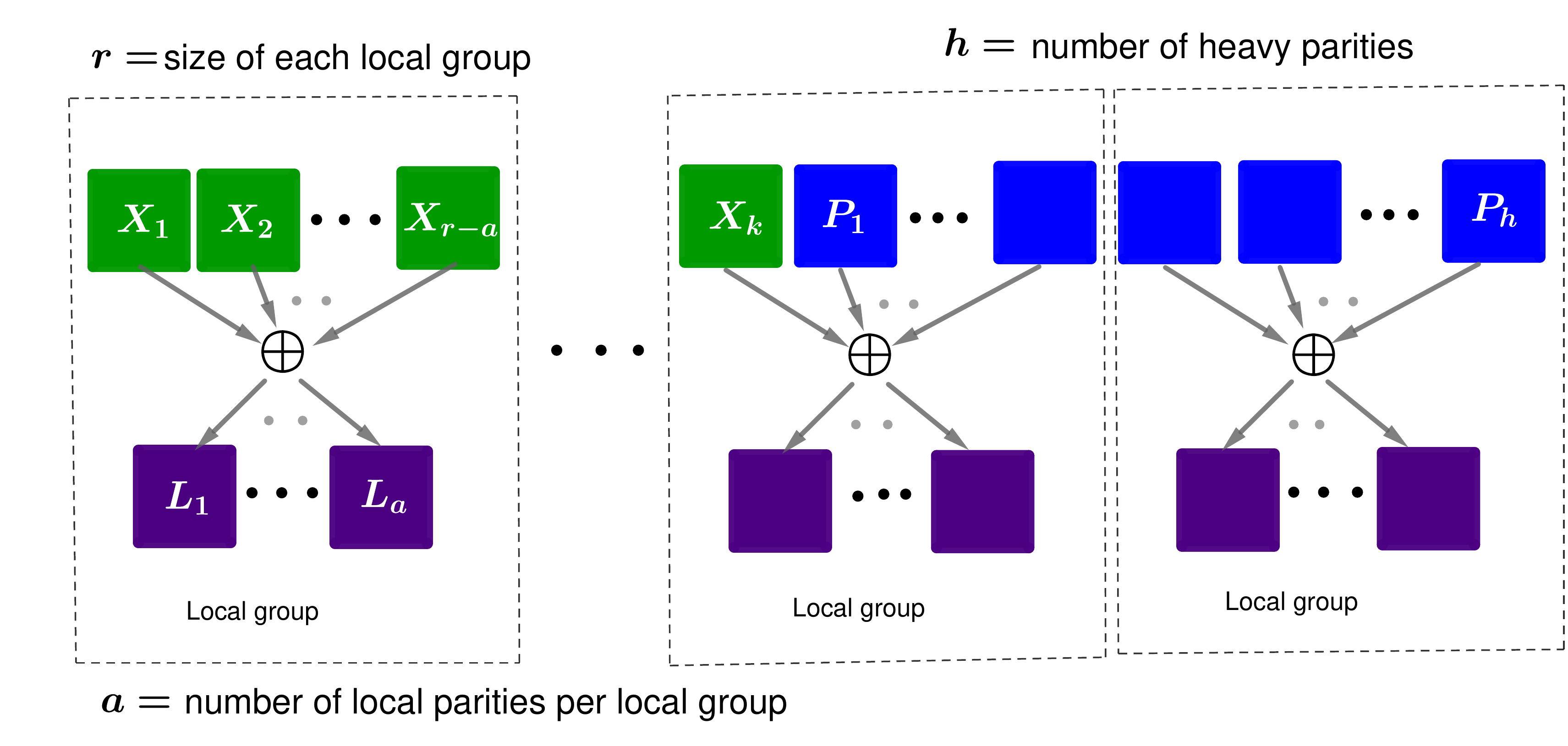}
\caption{An LRC with $k$ data symbols, $h$ heavy parities and `$a$' local parities per local group. The length of the code $n=k+h+a \cdot \frac{k+h}{r-a}$.}
\end{figure}

\smallskip
Information-theoretically, one can show that we can at best hope to correct an additional $h$ erasures distributed across global groups on top of the `$a$' erasures in each local group. LRCs which can correct all such erasure patterns which are information-theoretically possible to correct are called \emph{Maximally Recoverable (MR) LRCs}. The notion of maximal recoverability was first introduced by~\cite{CHL,HCL} and extended to more general settings in~\cite{GHJY}. But MR LRCs were specifically studied first by~\cite{BHH,Blaum} where they are called \emph{Partial-MDS (Maximum Distance Separable) codes}.

\begin{definition}\label{Def:MRLRC}
Let $C$ be an arbitrary $(n,r,h,a,q)$-local reconstruction code. We say that $C$ is maximally recoverable if:
\begin{enumerate}
	\item Any set of `$a$' erasures in a local group can be corrected by reading the rest of the $r-a$ symbols in that local group.
	\item Any erasure pattern $E\subseteq [n],$ $|E|=ga+h,$ where $E$ is obtained by selecting $a$ symbols from each of $g$ local groups and $h$ additional symbols arbitrarily, is correctable by the code $C.$
\end{enumerate}
\end{definition}

For a code $C$ with parity check matrix $H$, an erasure pattern $E$ is correctable iff the submatrix of $H$ formed by columns corresponding the coordinates in $E$ has full column rank.
Therefore, we have the following characterization of an MR LRC in terms of its parity check matrix.
\begin{proposition}
	\label{prop:MR_LRC_paritycheck}
	An $(n,r,h,a,q)$-LRC with parity check matrix given by $H$ from Equation~\ref{fig:MRtopology_intro} is maximally recoverable iff: 
\begin{enumerate}
	\item Each of the local parity check matrices $A_i$ are the parity check matrices of an MDS code, i.e., any $a$ columns of $A_i$ are linearly independent.
	\item Any submatrix of $H$ which can be formed by selecting $a$ columns in each local group and additional $h$ columns has full column rank.
\end{enumerate}
	
\end{proposition}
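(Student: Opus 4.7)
The plan is to leverage the standard criterion that, for a linear code with parity check matrix $H$, an erasure pattern $E \subseteq [n]$ is correctable if and only if the submatrix $H|_E$ formed by the columns indexed by $E$ has full column rank $|E|$. Given this, both conditions in the proposition will emerge as direct translations of the two clauses of Definition~\ref{Def:MRLRC}, so the argument is essentially an unpacking of definitions, and I would handle each clause in turn.

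For the forward direction, I would assume $C$ is MR and derive the two parity-check conditions. The second clause of Definition~\ref{Def:MRLRC} says every erasure pattern $E$ of the form ``$a$ erasures per local group plus $h$ extras anywhere'' is correctable, which by the criterion above immediately yields condition~2 of the proposition. For condition~1, I would fix a local group $i$ and any $a$-subset $S$ of it and use the first clause of the definition: $S$ must be recoverable by reading only the remaining $r-a$ symbols of group $i$. Since among the rows of $H$ only $A_i$ has support confined to group $i$ (the other local blocks $A_j$ vanish on group $i$, while the global rows $B_\ell$ involve symbols outside group $i$ which we are not permitted to read), the recovery must be effected by $A_i$ alone. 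This forces $A_i|_S$ to be an invertible $a\times a$ matrix, i.e., any $a$ columns of $A_i$ are linearly independent, which is precisely the MDS parity check property.

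For the reverse direction, I would assume conditions~1 and~2 of the proposition and recover the two clauses of the MR definition. Condition~1 immediately gives the local recoverability clause, since $A_i|_S$ being invertible means the $a$ erased symbols in group $i$ are uniquely determined by $A_i$ together with the other local symbols. Condition~2 directly gives the maximal-pattern correctability clause via the same rank criterion. The only mildly subtle point, and the place where I would be most careful, is the forward implication of condition~1: one must observe that the restriction ``read only the local group'' disallows invoking the global rows $B_\ell$, and it is precisely this restriction that pins the MDS property on $A_i$ itself rather than on some larger submatrix of $H$. Beyond this observation, the proof is a direct correspondence between the two definitions.
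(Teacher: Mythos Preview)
The paper does not give a proof of this proposition; it simply precedes the statement with the sentence ``For a code $C$ with parity check matrix $H$, an erasure pattern $E$ is correctable iff the submatrix of $H$ formed by columns corresponding to the coordinates in $E$ has full column rank,'' and treats the proposition as an immediate translation. Your approach is exactly this unpacking, and the equivalence between clause~2 of Definition~\ref{Def:MRLRC} and condition~2, as well as the implication (condition~1 $\Rightarrow$ clause~1), are handled correctly.

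There is, however, a genuine gap in your forward implication for condition~1, at precisely the spot you single out as ``mildly subtle.'' You argue that because among the rows of $H$ only the $A_i$-block has support confined to group $i$, local recovery must be effected by $A_i$ alone. But what governs recovery is not the individual rows of $H$ but its row \emph{span}: a linear combination of the global rows with the other $A_j$-rows could in principle be supported entirely on group $i$ even though no single global row is. In degenerate situations this actually occurs. For instance, with $g=1$, $a=h=1$, $r=3$, take $A_1=[1\ 1\ 0]$ and $B_1=[1\ 2\ 1]$; the full matrix $H$ has every pair of columns independent (so clause~2 and condition~2 hold), and since the single group is the whole code, clause~1 of Definition~\ref{Def:MRLRC} holds too, yet $A_1$ has a zero column and is not MDS. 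So the ``only if'' for condition~1 is not automatic from clause~1 alone; it needs either a nondegeneracy hypothesis on $H$ or the convention, implicit in the paper's informal discussion, that ``local recovery'' means recovery via the designated local checks $A_i$. The paper does not address this either, and since only the ``if'' direction is invoked in the constructions (Claim~\ref{claim:MRLRC_main} and its variants), the issue is harmless for the rest of the paper.
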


It is known that MR-LRCs exist over exponentially
large fields~\cite{GHSY}. This can be easily seen by instantiating the parity
check matrix $H$ from Equation~\ref{fig:MRtopology_intro} randomly
from an exponentially large field and verifying that the condition in
Proposition~\ref{prop:MR_LRC_paritycheck} is satisfied with high
probability by Schwartz-Zippel lemma. But codes deployed in practice
require small fields for computational efficiency, typically fields
such as $\F_{2^8}$ or $\F_{2^{16}}$ are preferred. Therefore a lot of
prior work focused on explicit constructions of MR LRCs over small
fields.

\subsection{Prior Work}

\noindent\textbf{Upper Bounds.} There are several known constructions of MR LRCs which are incomparable to each other in terms of the field size~\cite{GHJY,GYBS,GLX-ff,MK19,gopi2020maximally,Blaum,TPD,HY,GHKSWY,CK,BPSY}. Some constructions are better than others based on the range of parameters. Since there are too many parameters and there is no dominant regime of interest, it is helpful to think about what are the typical ranges of parameters that are useful in deployments of MR LRCs in practice.

\medskip\noindent\textbf{Parameter ranges useful in practice.} One should think of the number of local groups ($g$) as a constant and $n$ as growing. So $r=n/g$ is growing linearly with $n$. Typical values of $g$ used in practice are $g=2,3,4$. The number of global parities ($h$) should also be thought of as a small constant and the number of local parities $a$ is usually $1$ or $2$. The length $n$ of the code can range from $14$ to $60$. For example, an early version of Microsoft's Azure storage used $(n=14,r=7,h=2,a=1)$-MR LRCs with $g=2$ local groups~\cite{HuangSX}. These choices are mostly guided by the need to maximize storage efficiency (rate of the code) while balancing durability and fast reconstruction. This is different from the parameters of interest from a theoretical point of view, where to get locality we set $r$ to be sublinear in $n$. 

A few of the important prior constructions that work for all parameter ranges are shown in Table~\ref{tab:upper_bounds}. 
The first bound by~\cite{GYBS} is good when $r$ is close to $n$. The second bound by~\cite{GLX-ff} is better when $h\ll r\ll n$. The bound by~\cite{MK19} is better when $r-a \le h.$ The construction in~\cite{MK19} is also significantly different from the previous constructions and our construction is inspired by the construction in~\cite{MK19}. 

\renewcommand{\arraystretch}{1.5}
\begin{table}[!ht]
\label{tab:upper_bounds}
\centering
\begin{tabular}{ |c|c|c|}
\hline
Field size $q$ & \\
\hline
$O\bigl(r\cdot n^{(a+1)h-1}\bigr)$  & \cite{GYBS} \\
\hline  
$\max\bigl(O(n/r), O(r)^{\min\{r,h+a\}}\bigr)^{\min\{h,g\}}$&  \cite{GLX-ff}\\
\hline
$\left(O\bigl(\max\{n/r,r\}\bigr)\right)^{r-a}$ & \cite{MK19}\\
\hline
\end{tabular}
\caption{Table showing the best known upper bounds on the field size of $(n,r,h,a,q)$-MR LRCs.}
\end{table}

\renewcommand{\arraystretch}{1}
In some special cases, there are better constructions. \cite{GHJY} construct MR LRCs over fields of size $O_r\left(n^{\ceil{(h-1)(1-1/2^r)}}\right)$ when $a=1$ and $r=O(1)$. In the special case when $h=2$, a construction over linear sized fields for all ranges of other parameters is given in~\cite{gopi2020maximally}.


\medskip\noindent\textbf{Lower Bounds.} The best known lower bounds on the field size required for $(n,r,h,a,q)$-MR LRCs (with $g=n/r$ local groups) is from~\cite{gopi2020maximally} who show that for $h\ge 2$,
\begin{equation}\label{eqn:lowerbound}
q \ge \Omega_{h,a} \left( n\cdot r^\alpha \right) \text{ where } \alpha=\frac{\min\left\{a,h-2\ceil{h/g}\right\}}{\ceil{h/g}}.
\end{equation}
The lower bound~(\ref{eqn:lowerbound}) simplifies to 
\begin{equation}
	\label{eqn:lowerbound_special}
q\ge \Omega_{h,a}\left(nr^{\min\{a,h-2\}}\right)
\end{equation}
 when $g=n/r\ge h.$ 
When $2\le h \le \min\{a+2,g\}$, we have:
\begin{equation}
	\label{eqn:lowerbound_special_hsmall}
q\ge \Omega_h\left(\frac{n (r-a)^{h-1}}{r}\right).
\end{equation}
Note that the hidden constant in (\ref{eqn:lowerbound_special_hsmall}) only depends on $h$.

\subsection{Our Results}
\noindent We are now ready to present our main result.
\begin{theorem}[Main]
	\label{thm:main_construction}
	Let $q_0\ge \max\{g+1,r-1\}$ be any prime power where $g=n/r$ is the number of local groups. Then there exists an explicit $(n,r,h,a,q)$-MR LRC with $q=q_0^{\min\{h,r-a\}}$. Asymptotically, the field size satisfies
\begin{equation}
	\label{eqn:main_construction}
	q \le \left(O\bigl(\max\{r,n/r\}\bigr)\right)^{\min\{h,r-a\}}.	
\end{equation}
\end{theorem}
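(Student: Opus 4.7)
My plan is to construct the parity-check matrix $H$ in the block form of Equation~(\ref{fig:MRtopology_intro}) using skew polynomial evaluations over $\L = \F_{q_0^s}$ with $s = \min\{h, r-a\}$, where $\sigma$ is the $q_0$-power Frobenius of order $s$. The skew polynomial ring $R = \L[x;\sigma]$ carries a right-evaluation map $f \mapsto f(\beta)$ defined via right-division by $x - \beta$, under which the role of ``distinct nodes'' is played by the $\sigma$-conjugacy relation $\alpha \sim \beta \iff \beta = \sigma(c)\alpha c^{-1}$ on $\L^*$. This partitions $\L^*$ into exactly $q_0 - 1$ classes of size $(q_0^s - 1)/(q_0 - 1)$ each, so the hypotheses $q_0 \geq g + 1$ and $q_0 \geq r - 1$ guarantee $g$ distinct classes and $\geq r$ elements per class (the degenerate $s = 1$ case requires a separate trivial handling). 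For each local group $i$ I would pick $r$ distinct elements $\beta_{i,1}, \dots, \beta_{i,r}$ from the $i$-th class.

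With evaluation points in hand, I would define the $a \times r$ local block $A_i$ as the skew-Moore matrix whose $(k,j)$ entry is $N_k(\beta_{i,j}) = \sigma^{k-1}(\beta_{i,j})\cdots\sigma(\beta_{i,j})\beta_{i,j}$ for $0 \leq k < a$ (zero outside group $i$'s columns), and each global block $B_i$ as the analogous Moore matrix using degrees $a \leq k < a + h$ (supported on all columns). Property (i) of Proposition~\ref{prop:MR_LRC_paritycheck}, that each $A_i$ is MDS, follows because within a single conjugacy class the Moore entries factor through the diagonal conjugation by the witnesses $c_{i,j}$ of conjugacy $\beta_{i,j} = \sigma(c_{i,j})\alpha_i c_{i,j}^{-1}$, reducing MDS-ness to a Vandermonde-like minor computation on the appropriate quotient of $\L^*$. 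Property (ii) --- the MR condition --- is the heart: given any choice of $a + e_i$ columns per local group with $\sum_i e_i = h$, the resulting $(ga + h) \times (ga + h)$ submatrix must be nonsingular. I would dualize this to the non-vanishing statement that no nonzero $f \in R$ of degree $< a + h$ can have a right-root at every chosen node.

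The main obstacle is establishing this non-vanishing claim, for which I would invoke the skew-polynomial root-counting theorem of Lam-Leroy: the total number of right-roots of a nonzero $f$ in $\L$, measured by the sum of $P$-dimensions across conjugacy classes, is at most $\deg f$, and within any single class the $P$-dimension is capped at $s = [\L : \F_{q_0}]$. The calibration $s = \min\{h, r-a\}$ is chosen precisely so that across every admissible erasure profile, the $a$ local roots per class plus the $e_i$ extras either saturate the $s$-cap in some class (triggering $P$-dependence that collapses into fewer independent directions but still forces a contradiction via the global $B$ rows) or directly sum to more than $a + h - 1$ $P$-independent roots, forcing $f = 0$. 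The delicate case analysis is the concentrated regime where a single $e_{i^\star}$ approaches $h$: here the $s$-cap binds hardest in class $i^\star$, and one must trace the consequences of $P$-dependence through the cross-class Moore structure to extract enough $P$-independent roots in other classes. This calibration is exactly where the improvement over the MK19 exponent of $r - a$ comes from, and once the nonsingularity is established the field-size bound $q = q_0^s \leq \bigl(O(\max\{r, n/r\})\bigr)^{\min\{h, r-a\}}$ is immediate from the choice $q_0 = \max\{g+1, r-1\}$.
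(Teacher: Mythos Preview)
Your proposal has a concrete and fatal gap in the concentrated-erasure case, and it stems from taking both $A_i$ and $B_i$ to be slices of a single skew Vandermonde at points in one conjugacy class. Consider the case $a\ge 1$, $h\le r-a$ (so $s=\min\{h,r-a\}=h$) and the erasure pattern that places all $h$ extra erasures in group~1, i.e.\ $|E_1|=a+h$ and $|E_i|=a$ for $i>1$. After the Schur reduction through the invertible $a\times a$ blocks $A_i(E_i)$ for $i>1$, full rank of $H(E)$ is equivalent to nonsingularity of the $(a+h)\times(a+h)$ block $\bigl[\begin{smallmatrix}A_1(E_1)\\B_1(E_1)\end{smallmatrix}\bigr]$. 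In your construction this block is exactly $V_{a+h}(\beta_{1,j}:j\in E_1)$, a skew Vandermonde at $a+h$ points in a single conjugacy class. By Lemma~\ref{lem:rank_Vandermonde} (equivalently the Lam--Leroy root count you invoke), this is nonsingular iff the $a+h$ conjugacy witnesses are $\F_{q_0}$-linearly independent inside $\L=\F_{q_0^s}$. But $\dim_{\F_{q_0}}\L=s=h<a+h$, so no choice of $r$ points in the class can make this happen. The matrix is singular and the code is not MR. Your suggested escape route---``trace the consequences of $P$-dependence through the cross-class Moore structure to extract enough $P$-independent roots in other classes''---is vacuous here: in the concentrated pattern there are \emph{no} extra erasures in any other class to extract anything from. (Relatedly, your dualization to ``a single nonzero $f$ of degree $<a+h$ with a right-root at every chosen node'' is not what a left kernel vector of $H(E)$ gives you: because the $A_i$ rows are block-diagonal, you get $g$ different skew polynomials sharing only their top $h$ coefficients.)

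The paper's construction avoids this obstruction by \emph{not} making $A_i$ a skew Vandermonde. Instead each $A_i$ is an ordinary $a\times r$ Vandermonde over the base field $\F_{q_0}$, and the evaluation points $\beta_j\in\F_{q_0^m}$ feeding into $B_i$ are chosen so that, when the first row of $B_i$ is expanded into $m=\min\{h,r-a\}$ rows over $\F_{q_0}$, the stacked matrix $\bigl[\begin{smallmatrix}A_i\\ \beta\end{smallmatrix}\bigr]$ is an ordinary $(a+m)\times r$ Vandermonde over $\F_{q_0}$ (see \eqref{eq:A-def}--\eqref{eq:beta-def}). This is the missing idea: the within-group rank comes from a \emph{classical} Vandermonde of height $a+m$ (no $s$-cap), while the cross-group rank comes from the skew Vandermonde across distinct conjugacy classes. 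The proof then does the Schur complement, observes that $A_i(S_i)^{-1}A_i(T_i)$ has entries in $\F_{q_0}$ so the induced column operations on $B_i$ are $\F_{q_0}$-linear (hence respect the skew structure and keep columns in the same conjugacy class), and reduces via Lemma~\ref{lem:rank_Vandermonde} to checking that each $m\times|T_i|$ Schur block $D_i$ over $\F_{q_0}$ has full column rank---which is immediate from the classical Vandermonde $F_i$ since $|T_i|\le m$. Your construction collapses this two-layer design into one skew layer, and that is exactly where the $s$-cap kills it.
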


Our construction is better than (or matches) the first three bounds in Table~\ref{tab:upper_bounds} for \emph{all} parameter ranges. Moreover when $h$ is a fixed constant with $h \le a+2$ and $r=\Theta(\sqrt{n})$ and $r-a=\Theta(\sqrt{n})$, our construction matches the lower bound ~(\ref{eqn:lowerbound_special_hsmall}), achieving the optimal field size of $\Theta_{h}(n^{h/2}).$ This is the first non-trivial case (other than when $h=2$~\cite{gopi2020maximally}) where we know the optimal field size for MR LRCs.
\begin{corollary}
	Suppose $r=\Theta(\sqrt{n})$, $r-a=\Theta(\sqrt{n})$ and $h$ is a fixed constant independent of $n$ such that $h\le a+2$. Then the optimal field size of an $(n,r,h,a,q)$-LRC is $q=\Theta_{h}(n^{h/2}).$
\end{corollary}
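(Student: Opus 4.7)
The plan is to derive the corollary by simply combining the upper bound from Theorem~\ref{thm:main_construction} with the matching lower bound~(\ref{eqn:lowerbound_special_hsmall}), after verifying that the hypotheses of both hold in the regime $r=\Theta(\sqrt{n})$, $r-a=\Theta(\sqrt{n})$, $h$ constant with $h\le a+2$. No new ideas are needed; the proof is an asymptotic computation.

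First I would establish the upper bound. Under $r=\Theta(\sqrt{n})$ we have $n/r=\Theta(\sqrt{n})$ as well, so $\max\{r,n/r\}=\Theta(\sqrt{n})$. Since $h$ is a fixed constant and $r-a=\Theta(\sqrt{n})\to\infty$, for all sufficiently large $n$ we have $\min\{h,r-a\}=h$. Plugging into the bound~(\ref{eqn:main_construction}) from Theorem~\ref{thm:main_construction} yields
\[
q \;\le\; \bigl(O(\sqrt{n})\bigr)^{h} \;=\; O_h\!\bigl(n^{h/2}\bigr).
\]

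Next I would verify the hypotheses needed to apply the lower bound~(\ref{eqn:lowerbound_special_hsmall}), namely $2\le h\le \min\{a+2,g\}$. The inequality $h\le a+2$ is assumed. Since $g=n/r=\Theta(\sqrt{n})$ grows unboundedly while $h$ is fixed, $h\le g$ holds for $n$ large enough; and $h\ge 2$ is implicit in the stated regime (for $h\in\{0,1\}$ the claim $q=\Theta_h(n^{h/2})$ is uninteresting and handled separately by a single parity check). The lower bound then gives
\[
q \;\ge\; \Omega_h\!\left(\frac{n\,(r-a)^{h-1}}{r}\right)
\;=\; \Omega_h\!\left(\frac{n\cdot (\sqrt{n})^{h-1}}{\sqrt{n}}\right)
\;=\; \Omega_h\!\bigl(n^{h/2}\bigr),
\]
matching the upper bound up to constants depending only on $h$. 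Combining the two bounds yields $q=\Theta_h(n^{h/2})$, as claimed.

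There is no substantive obstacle: once Theorem~\ref{thm:main_construction} and the lower bound from~\cite{gopi2020maximally} are in hand, the corollary is essentially a sanity check. The only care point is confirming that both the ``$\min$'' in the exponent of the upper bound and the two conditions $h\le a+2$, $h\le g$ required for the lower bound collapse to the clean statement, which all follow from $h$ being constant while $r-a$ and $g$ both scale like $\sqrt{n}$.
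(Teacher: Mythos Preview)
Your proposal is correct and mirrors exactly how the paper derives the corollary: it is stated immediately after Theorem~\ref{thm:main_construction} as the direct consequence of plugging $r=\Theta(\sqrt{n})$, $r-a=\Theta(\sqrt{n})$, and constant $h\le a+2$ into the upper bound~(\ref{eqn:main_construction}) and the lower bound~(\ref{eqn:lowerbound_special_hsmall}), with no further argument given.
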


We also remark that the $h$ that appears in the field size upper bound in Theorem~\ref{thm:main_construction} can be replaced with $h_{\mathrm{local}}$, if we only want to correct erasure patterns formed by erasing `$a$' erasures in each local group and $h$ additional erasures, which are distributed in such a way that no local group has more than $a+h_{\mathrm{local}}$ erasures in total.




MR LRCs used in practice typically have only a small constant number of local groups i.e. $g=n/r$ is typically a small constant such as $g=2,3,4$~\cite{HuangSX} and the number of local parities $a=1$. We can further improve the construction from Theorem~\ref{thm:main_construction} in this important regime.

\begin{theorem}
	\label{thm:construction_a1}
	Suppose the number of local parities $a=1$ and $g=n/r$ is the number of local groups. Let $q_0 \ge g+1$ be any prime power and let $C_0$ be any $[r,r-s,d]_{\F_{q_0}}$-code such that its parity check matrix contains a full weight row and it has distance $d\ge \min\{h,r-1\}+2$.\footnote{Equivalently, the dual code $C_{in}^\perp$ has a full weight codeword.} Then there exists an explicit $(n,r,h,a=1,q)$-MR LRC with field size $q=q_0^{s-1}.$ Asymptotically, by instantiating $C_0$ with BCH codes, we obtain a field size of $$q \le \bigl(O(n)\bigr)^{\ceil{\min\{h,r-1\}(1-1/q_0)}}.$$
\end{theorem}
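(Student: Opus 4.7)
The plan is to specialize the skew polynomial MR LRC construction that underlies Theorem~\ref{thm:main_construction} to $a=1$, exploiting the fact that a single local parity per group lets us piggy-back on a non-MDS inner code $C_0$ and thereby replace the exponent $\min\{h,r-1\}$ of $q_0$ by $s-1$. I would work over $\F_q=\F_{q_0^{s-1}}$; since the parity check matrix of $C_0$ lives over $\F_{q_0}\subseteq\F_q$, its rows are legitimate entries of the LRC parity check matrix.

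First I would specify the blocks of $H$ from Equation~\ref{fig:MRtopology_intro}. For each $i\in[g]$ take $A_i$ to be the common full-weight row of the parity check matrix of $C_0$, which exists by hypothesis; this is a $1\times r$ vector with all entries nonzero, so any single coordinate in a local group is recoverable from the remaining $r-1$. For the global blocks, pick distinct $\alpha_1,\ldots,\alpha_g\in\F_{q_0}$ (possible since $q_0\ge g+1$) to tag the local groups, and identify the columns of the remaining $(s-1)\times r$ parity check submatrix $H_0'$ of $C_0$ with scalars $\beta_1,\ldots,\beta_r\in\F_q$ via an $\F_{q_0}$-basis of $\F_{q_0}^{s-1}$ inside $\F_q$. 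Set the $\ell$th column of $B_i$ to be the skew polynomial evaluation vector from the proof of Theorem~\ref{thm:main_construction}, with scalar $\beta_\ell$ at the conjugacy-class point $\alpha_i$ of the skew polynomial ring $\F_q[x;\sigma]$ for $\sigma$ the Frobenius of $\F_q/\F_{q_0}$.

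Next, I would verify the MR condition via Proposition~\ref{prop:MR_LRC_paritycheck}. Fix an erasure pattern $E$ picking one column $\ell_0(i)$ per local group and $h$ extras distributed so that group $i$ contains $h_i$ extras with $\sum_i h_i=h$. Let $M_E$ be the corresponding submatrix of $H$ and suppose $M_E\mathbf{x}=0$. The $A_i$-row expresses $x_{\ell_0(i)}$ as an $\F_{q_0}$-linear combination of $\{x_\ell : \ell\in E_i\}$, where $E_i:=(E\cap\text{group }i)\setminus\{\ell_0(i)\}$; substituting into the $B$-rows reduces the problem to showing non-singularity of a skew polynomial Moore matrix whose $i$th block has $h_i$ columns indexed by the residual scalars $\widetilde\beta_\ell^{(i)}=\beta_\ell-(A_i[\ell]/A_i[\ell_0(i)])\,\beta_{\ell_0(i)}$, evaluated at $\alpha_i$. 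Within each group these residual scalars are $\F_{q_0}$-linearly independent: they coincide up to an invertible change of basis with the columns of $H_0'$ at $E_i$ after shortening at $\ell_0(i)$, and the assumption $d\ge\min\{h,r-1\}+2$ ensures that any $|E_i|+1\le h+1\le d-1$ columns of $H_0$ are independent. Combined with the distinctness of the $\alpha_i$'s, the skew polynomial Wronskian/Moore non-singularity result invoked for Theorem~\ref{thm:main_construction} then forces $M_E$ to have full column rank.

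Finally, for the asymptotic bound I would instantiate $C_0$ as a primitive BCH code over $\F_{q_0}$ of length $r$ with designed distance $\delta=\min\{h,r-1\}+2$. The standard BCH dimension bound gives $s-1\le\lceil\min\{h,r-1\}(1-1/q_0)\rceil\cdot\lceil\log_{q_0}(r+1)\rceil$, and the all-ones row (evaluation at $1$) supplies the required full-weight row in the parity check matrix. Substituting yields $q=q_0^{s-1}\le(O(n))^{\lceil\min\{h,r-1\}(1-1/q_0)\rceil}$. The main obstacle I anticipate is the MR verification: the elimination via the full-weight row must not drop the rank of any residual block, and the distance bound $\min\{h,r-1\}+2$ is tight for this argument, since an erasure pattern concentrating many of the $h$ extras in one group would otherwise collapse that block.
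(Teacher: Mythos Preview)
Your plan is the paper's proof: take the full-weight parity row as the single local check, embed the remaining $s-1$ rows of the $C_0$ parity-check matrix as elements $\beta_j\in\F_{q_0^{s-1}}$, build each $B_\ell$ as a skew Vandermonde block in its own conjugacy class, and reduce maximal recoverability via a Schur-complement step to $\F_{q_0}$-independence of the residual $\widetilde\beta$'s in each group---which is exactly the statement that any $\min\{h,r-1\}+1$ columns of the full parity-check matrix of $C_0$ are independent. The BCH instantiation also matches (the paper writes the exponent as $m-\lfloor m/q_0\rfloor=\lceil m(1-1/q_0)\rceil$ with $m=\min\{h,r-1\}$).

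There is, however, one genuine slip that would break the argument as written. Choosing the conjugacy-class tags as distinct $\alpha_1,\dots,\alpha_g\in\F_{q_0}$ does \emph{not} guarantee distinct conjugacy classes in $\F_{q_0^{s-1}}[t;\sigma]$ with $\sigma(x)=x^{q_0}$. Two nonzero elements $a,b$ are conjugate iff $b/a$ is a $(q_0-1)$-th power in $\F_{q_0^{s-1}}^*$; for $a,b\in\F_{q_0}^*$ this reduces to $(b/a)^{s-1}=1$, which has nontrivial solutions whenever $\gcd(s-1,q_0-1)>1$ (e.g.\ for $q_0=4$, $s-1=3$, all of $\F_4^*$ lies in a single conjugacy class). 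The inequality $q_0\ge g+1$ is the right hypothesis, but for a different reason: there are exactly $q_0-1$ nonzero conjugacy classes, represented by $\gamma^0,\dots,\gamma^{q_0-2}$ for $\gamma$ a generator of $\F_{q_0^{s-1}}^*$, and the paper takes the $\ell$-th group's tag to be $\gamma^\ell$ (cf.\ \eqref{eq:choice-of-B}). With that correction your argument goes through and coincides with the paper's. (A smaller nit: your chain $|E_i|+1\le h+1\le d-1$ fails when $r-1<h$; the correct bound is $|E_i|+1\le\min\{h,r-1\}+1\le d-1$, for which you already have both ingredients.)
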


We also remark that our constructions can be easily modified to the variant of MR LRCs where the global parities are not protected by the local parity checks. Since we did not define this variant of MR LRCs in this paper, we omit these constructions.


\medskip\noindent\textbf{Related Work.} Shortly before we published our results, we learned that \cite{cai2020construction} have independently obtained a result analogous to Theorem~\ref{thm:main_construction} with a very similar construction. They construct $(n,r,h,a,q)$-MR LRCs with a field size of 
\begin{equation}
	\label{eqn:caietal_construction}
q=\left(O\bigl(\max\{r,n/r\}\bigr)\right)^h.
\end{equation}
Compared to this, we have a $\min\{h,r-a\}$ in the exponent in our field size bound \eqref{eqn:main_construction}. The construction in the independent work~\cite{cai2020construction} is very similar to ours, we get $\min\{h,r-a\}$ in the exponent by being more careful in our analysis.

Soon after \cite{cai2020construction}, two more constructions of MR LRCs were published by~\cite{UMP-2020} with the following field sizes:
\begin{align}
q &\le \left(\max\Bigl\{(2r)^{r-a},\frac{g}{r}\Bigr\}\right)^{\min\{h,\floor{g/r}\}}, \label{eqn:martinez20_construction_1} \\
q &\le (2r)^{r-a}\left(\left\lfloor\frac{g}{r}\right\rfloor+1\right)^{h-1}. \label{eqn:martinez20_construction_2}
\end{align}
The constructions in (\ref{eqn:martinez20_construction_1}) and
(\ref{eqn:martinez20_construction_2}) are incomparable to our
construction in (\ref{eqn:main_construction}). For example when
$r=O(1)$, the construction (\ref{eqn:martinez20_construction_2})
achieves $O(n)^{h-1}$ field size, whereas our construction achieves
$O(n)^{\min\{h,r-a\}}$ field size. In the regime when
$r=\Theta(\sqrt{n})$ and $r-a=\Theta(\sqrt{n})$ and $h\le a+2$ is a fixed constant, our construction achieves the
optimal field size of $\Theta_{h}(n^{h/2})$, whereas the constructions
from \cite{UMP-2020} require fields of size $n^{\Theta(\sqrt{n})}.$



\subsection{Our Techniques}
%

Our constructions are based on the theory of skew polynomials and is
inspired by the construction from~\cite{MK19}. Skew polynomials are a
non-commutative generalization of polynomials, but they retain many of
the familiar and important properties of polynomials. Just as
Reed-Solomon codes are constructed using the fact that a degree $d$
polynomial can have at most $d$ roots, our codes will use an analogous
theorem that a degree $d$ skew polynomial can have at most $d$ roots
\emph{when counted appropriately} (see
Theorem~\ref{thm:fundamentalthm_roots_skewpolynomials}). Unlike the
roots of the usual degree $d$ polynomials which do not have any
structure, the roots of degree $d$ skew polynomials have an
interesting linear-algebraic structure which we exploit in our
constructions. The roots in $\F_{q^m}$ of a degree $d$ skew polynomial over $\F_{q^m}$ can be partitioned into \emph{conjugacy classes} such that the roots in each conjugacy class form a subspace over the base field $\F_q$. Moreover the sum of dimensions of these subspaces across conjugacy classes is at most $d$.

To exploit this root structure of skew polynomials in an MR LRC
construction, we associate each local group with a conjugacy class,
and the matrices $B_i$ in \eqref{fig:MRtopology_intro} are chosen so
that $\lambda^T B_i$ is the evaluation of a skew polynomial of degree
$d$ (with coefficients given by $\lambda$) over different points in
the same conjugacy class. 
Across different local groups, we automatically get linear independence of
columns of matrices $B_1,B_2,\dots,B_g$ as these are associated with different conjugacy classes. Inside each local group, to
argue linear independence, the local parities $A_i$ will be chosen as
a Vandermonde matrix over the base field $\F_q$ (we can choose all the $A_i$'s to be equal), and the $B_i$ will be chosen
carefully to combine well with the Vandermonde matrix $A$ (see Equations~\eqref{eq:A-def}, \eqref{eq:beta-def}, \eqref{eq:choice-of-B}). 
In particular, we choose $B_i$ so that the $(a+m)\times r$ matrix formed by adding the first row of $B_i$ with entries in $\F_{q^m}$ (but interpreted as an $m\times r$ matrix over the base field $\F_q$) to $A_i$ is an MDS matrix. This allows us to argue that any $a+m$ erasures in that local group can be corrected and we choose $m=\min\{h,r-a\}$. This is also the main difference between our work and \cite{MK19}, which is also implicitly
based on skew polynomials. 

In this paper, we make this connection
explicit in the hope that the theory of skew polynomials will lead to
further developments in the constructions of MR LRCs and coding theory
more broadly. As an illustration, in
Appendix~\ref{sec:wronskian-moore} we show how skew polynomials can give an explanation of algebraic results concerning (generalizations of) Wronskian and Moore matrices that have recently been used in the context of list decoding algorithms for folded Reed-Solomon and univariate multiplicity codes~\cite{GW13}, rank condensers~\cite{FS12,FSS14,FG15}, and subspace designs~\cite{GK-combinatorica,GXY-tams}. We also reproduce a construction of maximum sum-rank distance (MSRD) codes due to~\cite{Martinez18} using the framework of skew polynomials in Appendix~\ref{sec:MSRD}.  Skew polynomials have also been explicitly used before to define skew Reed-Solomon codes in \cite{boucher2014linear}.
Readers familiar with the theory of skew polynomials or who directly want to get to the construction can skip most of the preliminaries in Section~\ref{sec:prelim} except for Section~\ref{sec:vandermonde}.



\section{Preliminaries}
\label{sec:prelim}
{}\subsection{Skew polynomial ring}
Skew polynomials generalize polynomials while inheriting many of the nice properties of polynomials. Skew polynomials can be defined over division rings\footnote{Rings where every non-zero element has a multiplicative inverse, but multiplication may not be commutative.} and most of the results about skew polynomials are true in this more general setting. It is known that every finite division ring is a field. Since we will only work with skew polynomial rings defined over fields, we will only define them over fields for simplicity. Most of the theory of skew polynomials presented here is from~\cite{LamL88,Lam85}, but we reprove the main results in a more accessible way. Skew polynomials were first defined by Ore~\cite{ore1933} in 1933 where it was shown that they are the unique non-commutative generalization of polynomials which satisfy (1) associativity (2) distributivity on both sides and (3) the fact that the degree of product of two polynomials is the sum of their degrees. 

\noindent Let $\K$ be a field. We will first define the key concepts of `endomorphism' and `derivation'.
\begin{definition}[Endomorphism]
	A map $\sigma:\K\to \K$ is called an endomorphism if:
	\begin{enumerate}
          \itemsep=0ex
		\item $\sigma$ is a linear map i.e. $\sigma(a+b)=\sigma(a)+\sigma(b)$ for all $a,b\in K$ and
		\item $\sigma(ab)=\sigma(a)\sigma(b)$ for all $a,b\in K$.
	\end{enumerate}
\end{definition}

For example, if $\K=\F_{q^m}$, then $\sigma(x)=x^q$ is an endomorphism called the Frobenius endomorphism. If $\K=\F(x)$ is the field of rational functions and $\gamma\in \F^*$, then $\sigma(f(x))=f(\gamma x)$ is an endomorphism.

\begin{definition}[Derivation]
	A map $\delta:\K \to \K$ is called a $\sigma$-derivation if:
	\begin{enumerate}
                    \itemsep=0ex
		\item $\delta$ is a linear map i.e. $\delta(a+b)=\delta(a)+\delta(b)$ for all $a,b\in K$ and
		\item $\delta(ab)=\sigma(a)\delta(b)+\delta(a)b$ for all $a,b\in K$.
	\end{enumerate}
\end{definition}

We will now define the skew polynomial ring.
\begin{definition}[Skew polynomial ring]
	Let $\sigma$ be an endomorphism of $\K$ and $\delta$ be a $\sigma$-derivation. The skew polynomial ring in variable $t$, denoted by $\K[t;\sigma,\delta]$, is a non-commutative ring of skew polynomials in $t$ of the form $\{\sum_{i=0}^d a_i t^i: d\ge 0, a_i\in \K\}$ (where we always write the coefficients to the left). Degree of a polynomial $f(t)=\sum_i a_i t^i$, denoted by $\deg(f)$, is the largest $d$ such that $a_d \ne 0.$\footnote{We will define the degree of the zero polynomial to be $\infty.$} Addition in $\K[t;\sigma,\delta]$ is component wise. But multiplication is distributive and done according to the following rule:
	\begin{equation}
		\label{eqn:multiplication_rule}
			\text{For }a\in \K,\ t\cdot a=\sigma(a)t+\delta(a).
	\end{equation}
\end{definition}
To multiply $f(t)g(t)$, we can first use distributivity to get $f(t)g(t)=\sum_{ij} f_it^i \cdot g_j t^j$ where $f_i,g_j\in \K$ are coefficients of $f,g$ respectively. Then we use the rule~(\ref{eqn:multiplication_rule}) for $i$ times to move the coefficient $g_j$ to the left of $t^i$. This multiplication turns out to be associative, but may not be commutative. Also $\deg(f\cdot g)=\deg(f)+\deg(g)$. Therefore the skew polynomial ring has no zero divisors. We will now give some examples of skew-polynomials.

The simplest derivation is the zero map i.e. $\delta(a)=0$ for all $a\in \K$.  In this case, the skew polynomial ring is denoted by $\K[t;\sigma]$ and is said to be of endomorphism type. Skew polynomials are interesting even in this case, and in fact the constructions in this paper only use skew polynomials with $\delta\equiv 0$. So the reader can imagine that the derivation is the zero map on a first reading. We include the general case to discuss the applications of skew polynomials to coding and complexity theory later in Appendix~\ref{sec:wronskian-moore} and in the hope that skew polynomial rings with non-zero derivations will find applications in future. For more interesting examples of skew polynomial rings, see Appendix~\ref{sec:examples_skewpolyring}

We will now collect some simple facts about skew polynomials rings. Let $\K[t;\sigma,\delta]$ be a skew polynomial ring.
\begin{lemma}[\cite{LamL88}]
	$t^na=\sum_{i=0}^n f_i^n(a) t^i$ where $f_0^n=\delta^n,\ f_1^n=\delta^{n-1}\sigma + \delta^{n-2}\sigma \delta +\dots + \sigma \delta^{n-1},\dots, f_n^n=\sigma^n$ are linear maps.
\end{lemma}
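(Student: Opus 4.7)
The plan is to prove this by induction on $n$, using the defining commutation rule $t\cdot a = \sigma(a)t + \delta(a)$ to push one factor of $t$ past the coefficient at a time. The base case $n=0$ is trivial ($f_0^0 = \mathrm{Id}$), and the case $n=1$ is precisely the multiplication rule~\eqref{eqn:multiplication_rule}, giving $f_0^1 = \delta$, $f_1^1 = \sigma$.

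For the inductive step, assume $t^n a = \sum_{i=0}^n f_i^n(a)\, t^i$. Then
\[
t^{n+1}a \;=\; t\cdot t^n a \;=\; \sum_{i=0}^n \bigl(t\cdot f_i^n(a)\bigr)\,t^i \;=\; \sum_{i=0}^n \bigl(\sigma(f_i^n(a))\,t + \delta(f_i^n(a))\bigr)\,t^i.
\]
Reindexing and collecting the coefficient of $t^i$ on the right hand side yields the recurrence
\[
f_i^{n+1} \;=\; \sigma\circ f_{i-1}^n \;+\; \delta\circ f_i^n, \qquad 0\le i \le n+1,
\]
with the boundary conventions $f_{-1}^{n}\equiv 0$ and $f_{n+1}^n\equiv 0$. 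In particular, $f_0^{n+1} = \delta\circ f_0^n = \delta^{n+1}$ and $f_{n+1}^{n+1} = \sigma\circ f_n^n = \sigma^{n+1}$, matching the claimed endpoints.

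To check the stated closed form for intermediate $i$, the cleanest way is to observe that $f_i^n$ should equal the sum, over all length-$n$ words in the alphabet $\{\sigma,\delta\}$ containing exactly $i$ occurrences of $\sigma$, of the corresponding composition of linear maps on $\K$; there are $\binom{n}{i}$ such terms. This description clearly satisfies the recurrence $f_i^{n+1} = \sigma\circ f_{i-1}^n + \delta\circ f_i^n$ by splitting on the first letter of the word, and it agrees with $f_0^n=\delta^n$, $f_n^n=\sigma^n$, and $f_1^n = \sum_{j=0}^{n-1} \delta^{\,n-1-j}\,\sigma\,\delta^{\,j}$ as asserted. Since $f_i^n$ is a sum of compositions of linear maps, it is itself linear, completing the induction.

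There is no real obstacle here: the only subtlety is bookkeeping, namely verifying that the one-step recurrence produced by $t\cdot f_i^n(a) = \sigma(f_i^n(a))t + \delta(f_i^n(a))$ coincides with the combinatorial sum over $\{\sigma,\delta\}$-words. Noncommutativity of $\sigma$ and $\delta$ is precisely why the formula must be written as a sum of all orderings rather than a single monomial like $\sigma^i\delta^{n-i}$.
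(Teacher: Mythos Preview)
Your proof is correct. The paper does not actually supply its own proof of this lemma---it is simply stated with a citation to~\cite{LamL88}---so there is nothing to compare against. Your inductive argument is the standard one: push one factor of $t$ past the coefficient using the rule $t\cdot b = \sigma(b)t + \delta(b)$, derive the recurrence $f_i^{n+1} = \sigma\circ f_{i-1}^n + \delta\circ f_i^n$, and recognize its solution as the sum over all $\{\sigma,\delta\}$-words of length $n$ with exactly $i$ occurrences of $\sigma$.
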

It turns out that the skew polynomial ring has Euclidean algorithm for right division.
\begin{lemma}[Euclidean algorithm for right division~\cite{LamL88}]
	For every two polynomial $f,g\in \K[t;\sigma,\delta]$, there exist unique polynomials $q(t),r(t)$ such that $f=q\cdot g+r$ where $\deg(r)<\deg(g)$ or $r=0.$
\end{lemma}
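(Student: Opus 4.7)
The plan is to mimic the classical polynomial long division argument, with care taken to handle the non-commutative multiplication rule via the previous lemma on how $t^n$ commutes past constants. Concretely, I will prove existence by induction on $\deg(f)$ and uniqueness via the fact that $\K[t;\sigma,\delta]$ has no zero divisors (which follows from the additivity of degrees under multiplication, noted right after the definition of the ring).

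For existence, if $\deg(f) < \deg(g)$ I take $q = 0$ and $r = f$. Otherwise let $n = \deg(f)$, $m = \deg(g)$, with leading coefficients $f_n, g_m \in \K^*$. Using the identity $t^{n-m} a = \sum_{i=0}^{n-m} f_i^{n-m}(a)\, t^i$ from the preceding lemma, together with the fact that $f_{n-m}^{n-m} = \sigma^{n-m}$, I compute that for any $c \in \K$, the polynomial $c\, t^{n-m}\, g$ has leading term
\[
  c\,\sigma^{n-m}(g_m)\, t^n.
\]
The key point is that $\sigma^{n-m}(g_m) \neq 0$: since $\sigma$ is an endomorphism of the field $\K$, its kernel is an ideal of $\K$ and hence $\{0\}$, so $\sigma$ is injective, and therefore so is $\sigma^{n-m}$. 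Thus I can set $c = f_n \cdot \sigma^{n-m}(g_m)^{-1}$, and the difference $f - c\, t^{n-m}\, g$ has strictly smaller degree than $f$. Applying the induction hypothesis to this difference yields $q', r$ with $f - c\, t^{n-m}\, g = q' g + r$ and $\deg(r) < \deg(g)$, so $q = c\, t^{n-m} + q'$ works.

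For uniqueness, suppose $f = q_1 g + r_1 = q_2 g + r_2$ with $\deg(r_i) < \deg(g)$ (or $r_i = 0$). Then $(q_1 - q_2)\, g = r_2 - r_1$. If $q_1 \neq q_2$, then using additivity of degree under multiplication in $\K[t;\sigma,\delta]$, the left hand side has degree $\deg(q_1 - q_2) + \deg(g) \geq \deg(g)$, while the right hand side has degree $< \deg(g)$, a contradiction. Hence $q_1 = q_2$ and consequently $r_1 = r_2$.

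The only step that differs in any substantive way from the commutative proof is verifying that $\sigma^{n-m}(g_m)$ is nonzero and invertible; this is the main (minor) obstacle, and it is handled by the fact that field endomorphisms are injective. In the broader division-ring setting mentioned at the start of Section~2.1 one would need to argue injectivity of $\sigma$ separately, but over a field it is automatic.
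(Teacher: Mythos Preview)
Your proof is correct. Note, however, that the paper does not actually supply its own proof of this lemma---it is stated with a citation to~\cite{LamL88} and used as a black box---so there is nothing to compare against. Your argument is the standard long-division proof, correctly adapted to the skew setting: the only new wrinkle is identifying the leading coefficient of $t^{n-m}g$ as $\sigma^{n-m}(g_m)$ (which you do via the commutation lemma $t^na=\sum_i f_i^n(a)t^i$ with $f_n^n=\sigma^n$) and checking it is invertible, which follows since a field endomorphism is injective. The uniqueness argument via degree additivity is likewise standard and valid here.
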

This brings us to the most important definition about skew polynomial rings. In the usual polynomial world, we can define the evaluation of a polynomial $f(t)=\sum_i f_i t^i$ at $t=a$ as $\sum_i f_i a^i.$ With this definition, it is true that $f(t)=q(t)(t-a)+f(a).$ But for skew polynomials, these two notions of evaluation differ with each other and the right definition is the second one.

\begin{definition}[Evaluation]
	The evaluation of a polynomial $f\in \K[t;\sigma,\delta]$ at a point $a\in \K$, denoted by $f(a)$, is defined as the remainder obtained when we divide $f$ by $t-a$ on the right i.e. $f(t)=q(t)(t-a)+f(a).$
\end{definition}
Note that evaluation is a linear map i.e. $(f+g)(a)=f(a)+g(a).$ But it is not always true that $(fg)(a)=f(a)g(a)$. We will see shortly how to compute $(fg)(a)$.
The evaluation map can be expressed using ``power functions", which are the evaluations of monomials of the form $t^i.$

\begin{definition}[Power functions]
	\label{def:power_functions}
	The power functions are defined inductively as follows. For every $a \in \K$
	\begin{enumerate}
          \itemsep=0ex
		\item $N_0(a)=1$ and
		\item $N_{i+1}(a)=\sigma(N_i(a))a+\delta(N_{i}(a)).$
	\end{enumerate}
\end{definition}

When $\delta\equiv 0$, we have $\N_i(a)=\sigma^{i-1}(a)\sigma^{i-2}(a)\cdots \sigma(a) a$. Additionally if $\sigma\equiv \Id$, then $N_i(a)=a^i$ which explains the terms ``power functions".

\begin{lemma}
	\label{lem:evaluation}
	Let $f=\sum_i f_i t^i$. Then $f(a)=\sum_i f_i N_i(a).$
\end{lemma}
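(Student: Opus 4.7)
The plan is to prove this by induction on $i$, first establishing the single-monomial case $t^i(a) = N_i(a)$, and then invoking linearity of the evaluation map to extend to arbitrary skew polynomials.

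First I would note that evaluation is linear. This follows directly from the definition: if $f = q_f(t)(t-a) + f(a)$ and $g = q_g(t)(t-a) + g(a)$, then $f + g = (q_f + q_g)(t-a) + (f(a)+g(a))$, and by uniqueness of the right-division remainder we get $(f+g)(a) = f(a) + g(a)$. A similar argument handles left scalar multiplication: for $c \in \K$, the identity $cf = (cq_f)(t-a) + c\cdot f(a)$ gives $(cf)(a) = c\cdot f(a)$. Hence if we can show $t^i(a) = N_i(a)$ for every $i \ge 0$, the lemma follows immediately from $f = \sum_i f_i t^i$.

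The core step is then to prove by induction on $i$ that there exists $q_i(t) \in \K[t;\sigma,\delta]$ with $t^i = q_i(t)(t-a) + N_i(a)$. The base case $i=0$ is clear since $t^0 = 1 = 0\cdot(t-a) + 1 = N_0(a)$. For the inductive step, assume $t^i = q_i(t)(t-a) + N_i(a)$ and multiply on the left by $t$:
\[
t^{i+1} = t\cdot q_i(t)(t-a) + t\cdot N_i(a).
\]
The key computation uses the multiplication rule \eqref{eqn:multiplication_rule}: since $N_i(a) \in \K$,
\[
t\cdot N_i(a) = \sigma(N_i(a))\, t + \delta(N_i(a)) = \sigma(N_i(a))(t-a) + \bigl(\sigma(N_i(a))\,a + \delta(N_i(a))\bigr),
\]
and the constant term is exactly $N_{i+1}(a)$ by Definition~\ref{def:power_functions}. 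Substituting back yields
\[
t^{i+1} = \bigl(t\cdot q_i(t) + \sigma(N_i(a))\bigr)(t-a) + N_{i+1}(a),
\]
so by uniqueness of the remainder $t^{i+1}(a) = N_{i+1}(a)$, completing the induction.

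There is no serious obstacle here — the only place where something non-trivial happens is the use of the commutation rule $t\cdot c = \sigma(c)\,t + \delta(c)$, which is precisely the mechanism that makes the power-function recursion in Definition~\ref{def:power_functions} match the iterated evaluation of $t^i$. Once linearity of evaluation is established and the monomial case is checked, the full statement $f(a) = \sum_i f_i N_i(a)$ is immediate.
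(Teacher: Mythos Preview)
Your proof is correct and follows exactly the approach sketched in the paper: show by induction (using the commutation rule $t\cdot c=\sigma(c)t+\delta(c)$) that $t^i(a)=N_i(a)$, and then conclude by linearity of evaluation. The paper's proof is just a one-line summary of what you spelled out in full.
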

\begin{proof}
	It is easy to prove by induction that evaluation of $t^i$ at $a$ is $N_i(a)$. The general claim follows by linearity of evaluation.
\end{proof}

We now come to the problem of evaluating $(fg)(a).$ For this, it is useful to define the notion of \emph{conjugates}, which play a big role in this theory.
\subsection{Conjugation and Product Rule}
\begin{definition}[Conjugation]
	Let $a\in \K$ and $c\in \K^*$. We define the $c$-conjugate of $a$, denoted by $\conj a c$, as $$\conj a c = \sigma(c)ac^{-1}+\delta(c)c^{-1}.$$ We say that $b$ is a conjugate of $a$ if there exists some $c\in \K^*$ such that $b=\conj a c.$
\end{definition}

We have the following lemma which shows that conjugacy is an equivalence relation, we prove it in Appendix~\ref{sec:missing_proofs}.
\begin{lemma}
	\label{lem:conjugacy_equivalence}
	\begin{enumerate}
                    \itemsep=0ex
		\item $\conj{(\conj a c)}{d}=\conj{a}{dc}$
		\item Conjugacy is an equivalence relation, i.e., we can partition $\K$ into conjugacy classes where elements in each part are conjugates of each other, but elements in different parts are not conjugates.
	\end{enumerate}
\end{lemma}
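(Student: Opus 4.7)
The plan is to prove both parts by direct expansion using the definition $\conj{x}{y}=\sigma(y)xy^{-1}+\delta(y)y^{-1}$, with part (2) following essentially as a corollary of part (1).

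For part (1), I would expand the left-hand side:
\[
\conj{(\conj a c)}{d}=\sigma(d)\bigl[\sigma(c)ac^{-1}+\delta(c)c^{-1}\bigr]d^{-1}+\delta(d)d^{-1}
=\sigma(d)\sigma(c)\,a\,c^{-1}d^{-1}+\sigma(d)\delta(c)c^{-1}d^{-1}+\delta(d)d^{-1}.
\]
The first term rewrites as $\sigma(dc)\,a\,(dc)^{-1}$ by multiplicativity of $\sigma$ and the fact that $c^{-1}d^{-1}=(dc)^{-1}$. For the remaining two terms, I would invoke the $\sigma$-derivation product rule $\delta(dc)=\sigma(d)\delta(c)+\delta(d)c$, which after right-multiplying by $(dc)^{-1}=c^{-1}d^{-1}$ gives
\[
\delta(dc)(dc)^{-1}=\sigma(d)\delta(c)c^{-1}d^{-1}+\delta(d)c\cdot c^{-1}d^{-1}=\sigma(d)\delta(c)c^{-1}d^{-1}+\delta(d)d^{-1}.
\]
Combining, $\conj{(\conj a c)}{d}=\sigma(dc)a(dc)^{-1}+\delta(dc)(dc)^{-1}=\conj{a}{dc}$, as claimed.

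For part (2), I first need the auxiliary facts $\sigma(1)=1$ and $\delta(1)=0$. The former holds because any ring endomorphism of a field fixes $1$ (or one checks $\sigma(1)=\sigma(1\cdot 1)=\sigma(1)^2$, forcing $\sigma(1)\in\{0,1\}$, and $\sigma\not\equiv 0$). The latter follows from $\delta(1)=\delta(1\cdot 1)=\sigma(1)\delta(1)+\delta(1)\cdot 1=2\delta(1)$, so $\delta(1)=0$. Hence $\conj{a}{1}=\sigma(1)a\cdot 1^{-1}+\delta(1)\cdot 1^{-1}=a$, giving reflexivity. Transitivity is then immediate: if $b=\conj{a}{c}$ and $e=\conj{b}{d}$, part (1) gives $e=\conj{a}{dc}$. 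For symmetry, if $b=\conj{a}{c}$ then applying part (1) with $d=c^{-1}$ yields
\[
\conj{b}{c^{-1}}=\conj{(\conj{a}{c})}{c^{-1}}=\conj{a}{c^{-1}c}=\conj{a}{1}=a,
\]
so $a$ is a $c^{-1}$-conjugate of $b$. Thus conjugacy is an equivalence relation and $\K$ partitions into conjugacy classes.

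The only step requiring any care is the cancellation in part (1), where the middle term $\sigma(d)\delta(c)c^{-1}d^{-1}$ must be recognized as part of $\delta(dc)(dc)^{-1}$ via the twisted Leibniz rule; once that is set up, everything else is formal manipulation and part (2) drops out without extra work.
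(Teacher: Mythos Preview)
Your proof is correct and follows essentially the same route as the paper: direct expansion of $\conj{(\conj a c)}{d}$ using multiplicativity of $\sigma$ and the twisted Leibniz rule $\delta(dc)=\sigma(d)\delta(c)+\delta(d)c$ for part (1), and then part (2) by reading off reflexivity, symmetry, and transitivity from part (1). The only difference is cosmetic---you are slightly more explicit about reflexivity (verifying $\sigma(1)=1$ and $\delta(1)=0$ so that $\conj{a}{1}=a$), whereas the paper uses this implicitly when writing $\conj{b}{x^{-1}x}=b$.
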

So $\K$ will get partitioned into conjugacy classes. To understand the structure of each conjugacy class, we need the notion of \emph{centralizer}.

 \begin{definition}[Centralizer]
 	The centralizer of $a\in\K$ is defined as:
 	$$\K_a=\{c\in \K^*: \conj{a}{c}=a\}\cup \set{0}.$$
 \end{definition}
The following lemma shows that centralizers are subfields, we prove it in Appendix~\ref{sec:missing_proofs}.
\begin{lemma}
\label{lem:centralizer_subfield}
\begin{enumerate}
            \itemsep=0ex
	\item $\K_a$ is a subfield of $\K.$\footnote{When $\K$ is a division ring, $\K_a$ will be a sub-division ring of $\K.$}
	\item If $a,b\in \K$ are conjugates, then $\K_a=\K_b$. \footnote{When $\K$ is a division ring and not a field, we have $\K_{(\conj{a}{x})}=x\K_ax^{-1}$.}
\end{enumerate}
\end{lemma}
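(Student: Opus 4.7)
The plan is to rewrite the defining condition $c \in \K_a^*$ in the equivalent additive form $\sigma(c)a + \delta(c) = ac$ (obtained by multiplying $\conj{a}{c}=a$ on the right by $c$); in this form every closure property reduces to a short manipulation of the $\sigma$-derivation identities. First I would observe that $\delta(1) = 0$ (from $\delta(1) = \delta(1\cdot 1) = \sigma(1)\delta(1) + \delta(1) = 2\delta(1)$), giving $1 \in \K_a$, and $0 \in \K_a$ by definition.

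For part 1, closure under addition is immediate from the linearity of $\sigma$ and $\delta$. For $c,d \in \K_a^*$, I would expand
\[
\sigma(cd)a + \delta(cd) = \sigma(c)\sigma(d)a + \sigma(c)\delta(d) + \delta(c)d = \sigma(c)\bigl(\sigma(d)a + \delta(d)\bigr) + \delta(c)d = \bigl(\sigma(c)a + \delta(c)\bigr)d = a(cd),
\]
so $cd \in \K_a$. The one genuinely delicate step is closure under inverses: differentiating the identity $c c^{-1} = 1$ via the $\sigma$-derivation rule gives $\delta(c^{-1}) = -\sigma(c)^{-1}\delta(c)c^{-1}$, and combining this with $\sigma(c^{-1}) = \sigma(c)^{-1}$ (from $\sigma$ being an endomorphism of the field) together with $\delta(c) = ac - \sigma(c)a$ lets a short substitution collapse $\sigma(c^{-1})a + \delta(c^{-1})$ to $ac^{-1}$. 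This is the main obstacle in part 1, though it is only a few lines of calculation.

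For part 2, I would leverage Lemma \ref{lem:conjugacy_equivalence}(1), namely $\conj{(\conj{a}{x})}{c} = \conj{a}{cx}$. Set $b = \conj{a}{x}$ and take $c \in \K_a$. Since $\K$ is commutative we have $cx = xc$, and hence
\[
\conj{b}{c} = \conj{(\conj{a}{x})}{c} = \conj{a}{cx} = \conj{a}{xc} = \conj{(\conj{a}{c})}{x} = \conj{a}{x} = b,
\]
giving $c \in \K_b$. The reverse inclusion is symmetric: applying the composition identity once more shows $\conj{b}{x^{-1}} = \conj{(\conj{a}{x})}{x^{-1}} = \conj{a}{x^{-1}\cdot x} = \conj{a}{1} = a$, so $a$ is itself a conjugate of $b$, and running the previous argument with $(a,x)$ replaced by $(b,x^{-1})$ yields $\K_b \subseteq \K_a$.
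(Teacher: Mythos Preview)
Your argument is correct. Part~2 is essentially identical to the paper's proof: both use the composition identity from Lemma~\ref{lem:conjugacy_equivalence}(1) together with commutativity of $\K$ to swap $cx$ with $xc$ inside the superscript, and then appeal to symmetry for the reverse inclusion.

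The difference is in Part~1. You verify closure under products and inverses by unwinding the $\sigma$-derivation identities directly (expanding $\delta(cd)$, and computing $\delta(c^{-1})$ from $\delta(cc^{-1})=0$). The paper instead reuses the composition identity $\conj{(\conj{a}{c})}{d}=\conj{a}{dc}$, which collapses both checks to one line each: $\conj{a}{yx}=\conj{(\conj{a}{x})}{y}=\conj{a}{y}=a$ and $\conj{a}{x^{-1}}=\conj{(\conj{a}{x})}{x^{-1}}=\conj{a}{x^{-1}x}=\conj{a}{1}=a$. Since you already invoke this identity in Part~2, you could have used it in Part~1 as well and avoided the explicit computation of $\delta(c^{-1})$. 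The paper's route is shorter and hides the $\sigma,\delta$ bookkeeping inside the already-proven lemma; your route is self-contained and makes the field axioms visible at the level of $\sigma$ and $\delta$, which has the minor advantage of showing exactly where the derivation rule for inverses enters.
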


Because of the above lemma, we can associate a centralizer subfield to each conjugacy class. 
\begin{example}
	\label{example:Frobenious}
	Let $\K=\F_{q^m}$, $\sigma(a)=a^q$ and $\delta\equiv 0$. Then $\conj{a}{c}=c^{q-1}a$. Suppose $\gamma$ is a generator for $\F_{q^m}^*$. There are $q$ equivalence classes, $E_{-1},E_0,E_1,\dots,E_{q-2}$, where $E_\ell=\{\gamma^i :i \equiv \ell \mod (q-1).\}$ and $E_{-1}=\{0\}.$ The centralizer of an element $a\in \K^*$ is $$\K_a=\set{c: c^{q-1}a=a}\cup \set{0}=\set{c:c^{q-1}=1}\cup \set{0}=\F_q.$$ Therefore the centralizer of every non-zero element is $\F_q$ and the centralizer of $0$ is $\K_0=\K.$
\end{example}


 We will now show how to evaluate $(fg)(a)$ using conjugation which plays a key role. The proof of this really important lemma is given in Appendix~\ref{sec:missing_proofs}.

\begin{lemma}[Product evaluation rule~\cite{Lam85,LamL88}]
	\label{lem:product_evaluation}
	If $g(a)=0$, then $(fg)(a)=0$. If $g(a)\ne 0$ then $$(fg)(a)=f\left(\conj{a}{g(a)}\right)g(a).$$
\end{lemma}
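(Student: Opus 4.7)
The plan is to reduce the product $fg$ modulo $(t-a)$ in two stages: first absorb $g$ and then deal with $f$ multiplied by a constant. Throughout, the only tools I will use are right Euclidean division (Lemma~2.6), the definition of evaluation as the right remainder modulo $t-a$, and the multiplication rule $tb=\sigma(b)t+\delta(b)$.

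First, apply right Euclidean division to write $g(t)=q_g(t)(t-a)+g(a)$. Multiplying on the left by $f$ gives $f(t)g(t)=f(t)q_g(t)(t-a)+f(t)g(a)$. The term $f(t)q_g(t)(t-a)$ is manifestly right-divisible by $t-a$, so it contributes nothing to the remainder. This immediately disposes of the case $g(a)=0$: there $(fg)(a)=0$. In the remaining case, setting $b=g(a)\in\K^*$, we are reduced to showing that $(f\cdot b)(a)=f(\conj{a}{b})\cdot b$.

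The crux is the commutation identity
\[
(t-\conj{a}{b})\,b \;=\; \sigma(b)\,(t-a),
\]
which I would isolate as the key step. To verify it, expand using $tb=\sigma(b)t+\delta(b)$: one has $tb - \conj{a}{b}\cdot b = \sigma(b)t + \delta(b) - (\sigma(b)ab^{-1}+\delta(b)b^{-1})b = \sigma(b)t - \sigma(b)a = \sigma(b)(t-a)$. Thus right-multiplication by $b$ intertwines $(t-a)$ on the right with $(t-\conj{a}{b})$ on the left, which is precisely why conjugacy is the natural notion governing evaluation of products.

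Now perform right Euclidean division of $f$ by $t-\conj{a}{b}$: write $f(t)=q(t)(t-\conj{a}{b})+f(\conj{a}{b})$. Multiplying on the right by $b$ and applying the commutation identity,
\[
f(t)\cdot b \;=\; q(t)\,(t-\conj{a}{b})\,b + f(\conj{a}{b})\cdot b \;=\; \bigl(q(t)\sigma(b)\bigr)(t-a) + f(\conj{a}{b})\cdot b.
\]
Since $q(t)\sigma(b)\in\K[t;\sigma,\delta]$ and the first summand is right-divisible by $t-a$, reading off the right remainder yields $(f\cdot b)(a)=f(\conj{a}{b})\cdot b$, which combined with the first step finishes the proof. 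The only conceptual obstacle I anticipate is spotting the commutation identity; once one writes out $\conj{a}{b}\cdot b = \sigma(b)a+\delta(b)$ the rest is bookkeeping, but noticing that this is the right thing to compute is where the definition of $\conj{a}{b}$ earns its keep.
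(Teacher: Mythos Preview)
Your proof is correct and follows essentially the same approach as the paper: both reduce to computing the remainder of $f(t)\cdot g(a)$ modulo $t-a$ by first dividing $f$ on the right by $t-\conj{a}{g(a)}$, and both rely on the identity $(t-\conj{a}{b})b=\sigma(b)(t-a)$ (which the paper expands inline rather than isolating as a separate step). Your presentation is slightly cleaner in that you name this commutation identity explicitly, but the argument is the same.
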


Using the product rule, one can prove an interpolation theorem for skew polynomials just like ordinary polynomials. For any $A\subset \K$ be of size $n$, there exists a non-zero degree $\le n$ skew polynomial $f\in \K[t;\sigma,\delta]$ which vanishes on $A$~\cite{LamL88}. We will later need the following lemma.
\begin{lemma}
	\label{lem:sum_of_conjugates}
	Let $f$ be any skew polynomial. Fix some $a \in \K.$ Then $D_{f,a}(y)=f(\conj{a}{y})y$ is an $\K_a$-linear map from $\K\to \K$.
\end{lemma}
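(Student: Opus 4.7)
The plan is to reinterpret $D_{f,a}$ as the evaluation of a product of skew polynomials, and then exploit the fact that the centralizer subfield is an invariant of an entire conjugacy class. First I would observe that for every $y\in \K$, one has $D_{f,a}(y) = (f\cdot y)(a)$, where $y$ is viewed as a constant skew polynomial in $\K[t;\sigma,\delta]$. When $y\neq 0$, this is the product evaluation rule (Lemma~\ref{lem:product_evaluation}) applied with $g$ equal to the constant polynomial $y$, so that $g(a)=y$ and hence $(fy)(a)=f(\conj{a}{y})\,y = D_{f,a}(y)$; when $y=0$, both sides equal $0$. Additivity then follows instantly, because $D_{f,a}(y_1+y_2) = (f(y_1+y_2))(a) = (fy_1)(a) + (fy_2)(a) = D_{f,a}(y_1)+D_{f,a}(y_2)$, using distributivity in $\K[t;\sigma,\delta]$ and linearity of evaluation.

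Next I would check compatibility with scalar multiplication by an arbitrary $c\in \K_a$. Assume $y\neq 0$; the case $y=0$ is trivial. Since $\conj{a}{y}$ is a conjugate of $a$, Lemma~\ref{lem:centralizer_subfield}(2) yields $\K_{\conj{a}{y}}=\K_a$, so $c\in \K_a$ also centralizes $\conj{a}{y}$, i.e.\ $\conj{(\conj{a}{y})}{c} = \conj{a}{y}$. Combining this with Lemma~\ref{lem:conjugacy_equivalence}(1), which gives $\conj{(\conj{a}{y})}{c}=\conj{a}{cy}$, I obtain the key identity $\conj{a}{cy} = \conj{a}{y}$. Therefore
\begin{equation*}
D_{f,a}(cy) \;=\; f\bigl(\conj{a}{cy}\bigr)(cy) \;=\; f\bigl(\conj{a}{y}\bigr)(cy) \;=\; c\,f\bigl(\conj{a}{y}\bigr)\,y \;=\; c\, D_{f,a}(y),
\end{equation*}
where the middle equality uses commutativity of multiplication in the field $\K$ to pull $c$ to the left of $f(\conj{a}{y})\in \K$.

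The only conceptually nontrivial step is the scalar-multiplication identity, and its justification is precisely the observation that the centralizer subfield depends only on the conjugacy class of $a$; everything else is formal bookkeeping with Lemmas~\ref{lem:product_evaluation}, \ref{lem:conjugacy_equivalence}, and \ref{lem:centralizer_subfield}.
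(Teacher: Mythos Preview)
Your proof is correct and follows essentially the same approach as the paper: both establish additivity by identifying $D_{f,a}(y)$ with $(f\cdot y)(a)$ via the product rule, and both reduce $\K_a$-scalar compatibility to the identity $\conj{a}{cy}=\conj{a}{y}$ for $c\in\K_a$. The only minor difference is that you justify this identity through Lemma~\ref{lem:centralizer_subfield}(2) (invoking $\K_{\conj{a}{y}}=\K_a$), whereas the paper decomposes in the other order and argues directly that $\conj{a}{yc}=\conj{(\conj{a}{c})}{y}=\conj{a}{y}$ from the definition of $\K_a$, avoiding the appeal to centralizer invariance.
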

\begin{proof}

	Linearity follows since $f(\conj{a}{y})y$ is equal to the evaluation of the polynomial $f(t)y$ at $a$ by Lemma~\ref{lem:product_evaluation}. And clearly the evaluation is linear in $y.$
	$\K_a$-linearity follows since $\forall c\in \K_a$, $$D_{f,a}(yc)=f(\conj{a}{yc})yc = f(\conj{(\conj{a}{c})}{y})yc=f(\conj{a}{y})yc=D_{f,a}(y)c. \qedhere $$

\end{proof}


\subsection{Roots of skew polynomials}
The most important and useful fact about usual polynomials is that a degree $d$ non-zero polynomial can have at most $d$ roots. It turns out that this statement is false for skew polynomials! A skew polynomial can have many more roots than its degree. But when counted in the right way, we can recover an analogous statement for skew polynomials. In this section, we will prove the ``fundamental theorem'' about roots of skew polynomials which shows that a degree $d$ skew polynomial cannot have more than $d$ roots when counted the right way. Before we state the fundamental theorem, let us try to understand the roots of a skew polynomial in the same conjugacy class. The following lemma shows that they form a vector space over a subfield of $\K.$

\begin{lemma}
	\label{lem:conjugate_roots_vectorspace}
	Let $f\in \K[t;\sigma,\delta]$ be a non-zero polynomial and fix some $a\in \K$ and let $\F=\K_a$ be the centralizer of $a$ (which is a subfield of $\K$). Define $V_f(a)=\{y\in \K^*: f(\conj{a}{y})=0\}\cup \set{0}$. Then $V_f(a)$ is a vector space over $\F$.
\end{lemma}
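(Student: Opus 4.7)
The plan is to realize $V_f(a)$ as the kernel of the $\F$-linear map $D_{f,a}$ from Lemma~\ref{lem:sum_of_conjugates}, at which point the conclusion is immediate because kernels of linear maps are subspaces.

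More precisely, recall that Lemma~\ref{lem:sum_of_conjugates} establishes that the map $D_{f,a}:\K \to \K$ defined by $D_{f,a}(y) = f(\conj{a}{y})\,y$ is $\F$-linear, where $\F=\K_a$. The key observation is that its kernel is exactly $V_f(a)$. Indeed, if $y=0$ then $D_{f,a}(y)=0$ trivially, and for $y \in \K^*$, we have $D_{f,a}(y) = f(\conj{a}{y}) \cdot y = 0$ iff $f(\conj{a}{y})=0$, using that $\K$ is a field and hence has no zero divisors.

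Therefore $V_f(a) = \ker(D_{f,a})$, and since the kernel of an $\F$-linear map is an $\F$-subspace, $V_f(a)$ is a vector space over $\F$, as required.

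There is no real obstacle here: the nontrivial content was bundled into Lemma~\ref{lem:sum_of_conjugates} (where the product evaluation rule of Lemma~\ref{lem:product_evaluation} and the identity $\conj{(\conj{a}{c})}{y}=\conj{a}{yc}$ from Lemma~\ref{lem:conjugacy_equivalence} were used to establish $\K_a$-linearity). Once that is in hand, this lemma is a one-line corollary obtained by identifying $V_f(a)$ with $\ker(D_{f,a})$.
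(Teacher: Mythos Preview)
Your proof is correct and follows essentially the same approach as the paper: both arguments rest on the $\K_a$-linearity of $D_{f,a}$ from Lemma~\ref{lem:sum_of_conjugates}. The paper verifies closure under scalar multiplication and addition separately (invoking Lemma~\ref{lem:sum_of_conjugates} for the additive part), whereas you package this more cleanly by identifying $V_f(a)$ with $\ker(D_{f,a})$ outright.
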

\begin{proof}
	For any $\lambda\in \F$ and $y\in V_f(a)$, $f(\conj{a}{\lambda y})=f(\conj{(\conj{a}{\lambda})}{y})=f(\conj{a}{y})=0$. Therefore $\lambda y\in V_f(a)$.
	If $y_1,y_2\in V_f(a)$ where $y_1+y_2\ne 0$, then by Lemma~\ref{lem:sum_of_conjugates}, $f(\conj{a}{y_1+y_2})=0$. Therefore $y_1+y_2\in V_f(a).$
\end{proof}

We are now ready to state the ``fundamental theorem'' about roots of skew polynomials, the proof appears in Appendix~\ref{sec:proof_fundamental_thm}.
\begin{theorem}[\cite{Lam85,LamL88}]
	\label{thm:fundamentalthm_roots_skewpolynomials}
	Let $f\in \K[t;\sigma,\delta]$ be a degree $d$ non-zero polynomial. Let $A$ be the set of roots of $f$ in $\K$ and let $A=\cup_i A_i$ be a partition of $A$ into conjugacy classes. Fix some representatives $a_i\in A_i$. Let $V_i=\{y: \conj{a_i}{y}\in A_i\}\cup \{0\}$ which is a linear subspace over $\F_i=\K_{a_i}$ by Lemma~\ref{lem:conjugate_roots_vectorspace}. Then $$\sum_i \dim_{\F_i}(V_i)\le d.$$
\end{theorem}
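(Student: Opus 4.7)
The plan is to proceed by induction on $d = \deg f$. The base case $d = 0$ is trivial, since a nonzero constant has no roots. For $d \ge 1$, if $A = \emptyset$ there is nothing to show; otherwise, since $a_1 \in A_1 \subseteq A$ is itself a root of $f$, right Euclidean division gives $f = f_1 (t - a_1)$ with $\deg f_1 = d - 1$. The goal will be to compare the subspaces $V_i$ associated to $f$ with the analogous subspaces $V'_i$ associated to $f_1$ (using the same representatives $a_i$), showing that passing from $f$ to $f_1$ costs at most one dimension in total.

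The central construction is, for each conjugacy class $A_i$, a map $\phi_i : V_i \to \K$ defined by $\phi_i(y) = (\conj{a_i}{y} - a_1)\, y$ for $y \ne 0$ and $\phi_i(0) = 0$. The key observation is that, by the product rule (Lemma~\ref{lem:product_evaluation}) applied to the skew polynomial $(t - a_1) \cdot y$ with $y \in \K$ viewed as a constant, $\phi_i(y)$ is exactly the evaluation of $(t - a_1) \cdot y$ at $a_i$. Thus $\phi_i$ coincides with the map $D_{t - a_1, a_i}$ of Lemma~\ref{lem:sum_of_conjugates}, and is $\F_i$-linear for free. I then verify $\phi_i(V_i) \subseteq V'_i$ as follows: for $c = \conj{a_i}{y} \in A_i$ with $c \ne a_1$, the product rule applied to $f = f_1 (t - a_1)$ gives $f_1(\conj{c}{c - a_1}) = 0$, and Lemma~\ref{lem:conjugacy_equivalence}(1) rewrites $\conj{c}{c - a_1} = \conj{a_i}{\phi_i(y)}$, placing $\phi_i(y)$ in $V'_i$.

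Next I compute $\ker \phi_i$. Observe $\phi_i(y) = 0$ iff $y = 0$ or $\conj{a_i}{y} = a_1$, and the latter forces $a_1$ into the conjugacy class of $a_i$. So $\phi_i$ is injective when $i \ne 1$; for $i = 1$, the kernel is exactly $\{y : \conj{a_1}{y} = a_1\} = \F_1 = \K_{a_1}$, one-dimensional over $\F_1$. Rank-nullity therefore yields $\dim_{\F_i} V'_i \ge \dim_{\F_i} \phi_i(V_i) = k_i - [i = 1]$. Summing over $i$ and applying the inductive hypothesis to $f_1$,
\[
\sum_i k_i \;-\; 1 \;\le\; \sum_i \dim_{\F_i} V'_i \;\le\; \deg f_1 \;=\; d - 1,
\]
which delivers $\sum_i k_i \le d$.

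The step I expect to be the main conceptual hurdle is the identification of $\phi_i$ with $D_{t - a_1, a_i}$: this one line of rewriting inherits $\F_i$-linearity from Lemma~\ref{lem:sum_of_conjugates} without effort, whereas verifying linearity by hand would require a surprisingly intricate calculation using the defining identity $\sigma(\lambda) a_i + \delta(\lambda) = a_i \lambda$ for $\lambda \in \F_i$, the commutativity of $\K$, and the automatic relation $\delta(y)(\sigma(\lambda) - \lambda) = \delta(\lambda)(\sigma(y) - y)$ forced by $\delta$ being a $\sigma$-derivation on a commutative field. The remaining ingredients---the containment $\phi_i(V_i) \subseteq V'_i$, the kernel computation, and the inductive combination---are routine applications of the product rule together with the conjugacy identities of Lemma~\ref{lem:conjugacy_equivalence}.
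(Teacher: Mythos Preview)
Your proof is correct and follows the same inductive skeleton as the paper's: factor out $(t-a_1)$, pass from $V_i$ to the analogous subspaces for the degree-$(d-1)$ quotient via the map $y \mapsto (\conj{a_i}{y}-a_1)y$, and show that at most one dimension is lost (in the class $i=1$). The paper establishes the needed dimension inequalities by picking a basis $y(i,1),\dots,y(i,d_i)$ of $V_i$, forming the elements $y(i,j)(\conj{a_i}{y(i,j)}-a_{i^*})$, and proving their $\F_i$-linear independence directly via a contradiction argument that unwinds the identity $\conj{a}{x+y}(x+y)=\conj{a}{x}x+\conj{a}{y}y$. Your version is a genuine streamlining of that same step: by recognizing $\phi_i$ as the map $D_{t-a_1,a_i}$ of Lemma~\ref{lem:sum_of_conjugates}, you inherit $\F_i$-linearity for free and replace the explicit independence calculation with a one-line kernel computation plus rank--nullity. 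The two arguments are equivalent in content, but yours is cleaner and makes more efficient use of the lemmas already developed in Section~\ref{sec:prelim}.
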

In particular, this implies that a non-zero degree $d$ polynomial can have roots in at most $d$ distinct conjugacy classes. And the dimension (over the centralizer subfield) of the subspace of roots in a single conjugacy class is at most $d.$

\subsection{Vandermonde matrix}
\label{sec:vandermonde}
\begin{definition}[Vandermonde matrix]
	Let $A=\set{a_1,\dots,a_n} \subset \K$. The $d\times n$ Vandermonde matrix formed by $A$, denoted by $V_d(a_1,\dots,a_n)$, is defined as:
	$$V_d(a_1,\dots,a_n)=
	\begin{bmatrix}
	N_0(a_1) & N_0(a_2) & \cdots & N_0(a_n)\\
	N_1(a_1) & N_1(a_2) & \cdots & N_1(a_n)\\
	\vdots & \vdots  &  & \vdots \\
	N_{d-1}(a_1) & N_{d-1}(a_2) & \cdots & N_{d-1}(a_n)
	\end{bmatrix}.$$
\end{definition}
 When the order of $a_1,a_2,\dots,a_n$ is not important, we sometimes denote $V_d(a_1,a_2,\dots,a_n)$ be $V_d(A).$
\noindent If $f(t)=\sum_{i=0}^{d-1} f_i t^i$ is a skew polynomial of degree at most $d-1$, then by Lemma~\ref{lem:evaluation}, 
\begin{equation}
\label{eqn:vandermonde_evaluation}
[f_0 f_1 \cdots f_{d-1}]\cdot V_d(a_1,a_2,\dots,a_n)=[f(a_1) f(a_2) \cdots f(a_n)].	
\end{equation}

 \begin{lemma}
 	\label{lem:rank_Vandermonde}
 	Let $A\subset \K$ of size $d$. Let $A=A_1\cup A_2 \cup \dots \cup A_r$ be the partition of $A$ into different conjugacy classes. Let $n_i=|A_i|$ and let $A_i =\{\conj{a_i}{c_{ij}}: j\in [n_i]\}$. Then $V_d(A)$ is full rank if for each $i\in [r]$, $\{c_{ij}: j\in [n_i]\}$ are linearly independent over the centralizer subfield $\K_{a_i}$.
 \end{lemma}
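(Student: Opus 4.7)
The plan is to argue contrapositively via the evaluation identity (\ref{eqn:vandermonde_evaluation}) and then invoke the fundamental theorem on roots of skew polynomials (Theorem~\ref{thm:fundamentalthm_roots_skewpolynomials}). Since $|A|=d$, the matrix $V_d(A)$ is $d\times d$, so proving it has full rank is equivalent to showing that its rows are linearly independent, i.e., that the only row vector $[f_0,f_1,\dots,f_{d-1}]$ which annihilates $V_d(A)$ on the left is the zero vector.

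Suppose toward a contradiction that some nonzero vector $[f_0,\dots,f_{d-1}]$ satisfies $[f_0\,\cdots\,f_{d-1}]\cdot V_d(A)=0$. Form the skew polynomial $f(t)=\sum_{i=0}^{d-1} f_i t^i\in\K[t;\sigma,\delta]$, which is nonzero and has degree at most $d-1$. By the evaluation identity~(\ref{eqn:vandermonde_evaluation}), $f$ vanishes at every element of $A$. In particular, for each conjugacy class $A_i$, every point $\conj{a_i}{c_{ij}}$ (for $j\in[n_i]$) is a root of $f$, so $c_{ij}\in V_i\eqdef\{y\in\K^*:f(\conj{a_i}{y})=0\}\cup\{0\}$ for each $j$.

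By hypothesis, the elements $\{c_{ij}:j\in[n_i]\}$ are linearly independent over the centralizer subfield $\F_i=\K_{a_i}$. Since Lemma~\ref{lem:conjugate_roots_vectorspace} tells us $V_i$ is an $\F_i$-subspace of $\K$, this forces $\dim_{\F_i}(V_i)\ge n_i$. Summing across conjugacy classes and applying Theorem~\ref{thm:fundamentalthm_roots_skewpolynomials} to the nonzero polynomial $f$ of degree at most $d-1$ yields
\[
d=\sum_{i=1}^r n_i \;\le\; \sum_{i=1}^r \dim_{\F_i}(V_i) \;\le\; \deg(f) \;\le\; d-1,
\]
which is the desired contradiction. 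Hence no such nonzero $[f_0,\dots,f_{d-1}]$ exists, and $V_d(A)$ has full rank.

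I do not anticipate any real obstacle: once one spots that the left-null space of $V_d(A)$ corresponds exactly to skew polynomials of degree $<d$ vanishing on $A$, the fundamental theorem does all of the heavy lifting, and the hypothesis on $\F_i$-linear independence of the $c_{ij}$ has been tailored precisely to lower bound the contribution of each conjugacy class to the root-count. The only mild point to verify along the way is that the centralizer $\F_i=\K_{a_i}$ used here is the same subfield appearing in both Lemma~\ref{lem:conjugate_roots_vectorspace} and Theorem~\ref{thm:fundamentalthm_roots_skewpolynomials}, which is immediate from the definitions.
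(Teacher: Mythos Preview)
Your proof is correct and follows exactly the paper's approach: interpret a nonzero left-null vector of $V_d(A)$ as a nonzero skew polynomial of degree at most $d-1$ vanishing on all of $A$, then derive a contradiction from Theorem~\ref{thm:fundamentalthm_roots_skewpolynomials}. The paper's proof is simply terser, omitting the explicit dimension-counting inequality that you spell out.
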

 \begin{proof}
 	If $V_d(A)$ is not full rank then there exists some non-zero row vector $[f_0\ f_1\ \dots\ f_{d-1}]$ such that $[f_0\ f_1\ \dots\ f_{d-1}]\cdot V_d(A)=0$. Therefore the non-zero skew polynomial $f(t)=\sum_{i=0}^{d-1} f_i t^i$, with degree at most $d-1$, has roots at all points of $A$. This violates Theorem~\ref{thm:fundamentalthm_roots_skewpolynomials}.
 \end{proof}


We will now see two corollaries of Lemma~\ref{lem:rank_Vandermonde} which are useful for our MR LRC construction.

\begin{corollary}
	 Let $\gamma\in \F_{q^m}^*$ be a generator of the multiplicative group. Let $d\le q-1$ and $\ell_1,\dots,\ell_d\in \set{0,1,2,\dots,q-2}$ be distinct. Then the following matrix $M$ is full rank.
	 $$
	 M=
	 \begin{bmatrix}
	 	1& 1 & \dots & 1\\
	 	\gamma^{\ell_1}& \gamma^{\ell_2}& \cdots & \gamma^{\ell_d}\\
	 	\gamma^{\ell_1(1+q)}& \gamma^{\ell_2(1+q)}& \cdots & \gamma^{\ell_d(1+q)}\\
	 	\vdots & \vdots &  &\vdots\\
	 	\gamma^{\ell_1(1+q+\dots+q^{d-2})}& \gamma^{\ell_2(1+q+\dots+q^{d-2})}& \cdots & \gamma^{\ell_d(1+q+\dots+q^{d-2})}\\
	 \end{bmatrix}
	 $$
\end{corollary}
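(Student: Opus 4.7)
The plan is to recognize $M$ as the skew-polynomial Vandermonde matrix $V_d(\gamma^{\ell_1},\ldots,\gamma^{\ell_d})$ for the Frobenius setup of Example~\ref{example:Frobenious}, and then invoke Lemma~\ref{lem:rank_Vandermonde}. The key observation is that under $\sigma(a)=a^q$ with $\delta\equiv 0$, the power functions become $N_i(a)=a^{1+q+q^2+\cdots+q^{i-1}}$, so $N_i(\gamma^{\ell_j})=\gamma^{\ell_j(1+q+\cdots+q^{i-1})}$, which is precisely the $(i,j)$-entry of $M$ (with $i$ ranging over $0,1,\ldots,d-1$).

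Next I would use the description of conjugacy classes from Example~\ref{example:Frobenious}: under the Frobenius endomorphism, the classes of $\F_{q^m}^*$ are exactly $E_\ell=\{\gamma^i : i\equiv \ell \pmod{q-1}\}$ for $\ell\in\{0,1,\ldots,q-2\}$, and the centralizer of any nonzero element is $\F_q$. Since the $\ell_1,\ldots,\ell_d$ are distinct elements of $\{0,1,\ldots,q-2\}$, the columns $\gamma^{\ell_1},\ldots,\gamma^{\ell_d}$ land in $d$ distinct conjugacy classes. In the notation of Lemma~\ref{lem:rank_Vandermonde}, the partition $A=A_1\cup\cdots\cup A_d$ has each $A_i=\{\gamma^{\ell_i}\}$ a singleton, so we may take the representatives $a_i=\gamma^{\ell_i}$ with conjugating elements $c_{i,1}=1$.

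Finally, the linear independence hypothesis of Lemma~\ref{lem:rank_Vandermonde} requires, for each $i$, that $\{c_{i,1}\}=\{1\}$ be linearly independent over the centralizer $\K_{a_i}=\F_q$, which is trivial since $1\ne 0$. Applying Lemma~\ref{lem:rank_Vandermonde} directly yields that $M=V_d(\gamma^{\ell_1},\ldots,\gamma^{\ell_d})$ is full rank, finishing the proof. There is no real obstacle here: the entire content is simply matching the combinatorial description of $M$ to the skew-polynomial Vandermonde formalism and reading off that each conjugacy class contributes one column with trivially independent conjugating scalar; the constraint $d\le q-1$ in the hypothesis is exactly what guarantees one can choose $d$ distinct residues $\ell_j\in\{0,\ldots,q-2\}$ so that the columns lie in distinct classes.
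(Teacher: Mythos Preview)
Your proof is correct and follows essentially the same approach as the paper: identify $M$ as the skew Vandermonde matrix $V_d(\gamma^{\ell_1},\ldots,\gamma^{\ell_d})$ in the Frobenius setup, observe via Example~\ref{example:Frobenious} that the evaluation points lie in distinct conjugacy classes, and apply Lemma~\ref{lem:rank_Vandermonde}. Your write-up is simply a more detailed unpacking of the paper's terse two-line argument.
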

\begin{proof}
	Let $\K=\F_{q^m}$, $\sigma(a)=a^q$ and $\delta\equiv 0$. Then $N_i(a)=a^{1+q+q^2+\dots+q^{i-1}}$. By Lemma~\ref{lem:rank_Vandermonde}, it is enough to show that $\ell_1,\dots,\ell_d$ fall in distinct conjugacy classes. This is shown in Example~\ref{example:Frobenious}.
\end{proof}
Note that when $m=1$, the matrix in the above corollary reduces to the usual Vandermonde matrix one is familiar with.
 




\begin{corollary}
	 Let $\gamma\in \F_{q^m}^*$ be a generator of the multiplicative group and let $\ell\in \set{0,1,\dots,q-2}$. Let $\beta_1,\dots,\beta_m\in \F_{q^m}$ be linearly independent over $\F_q$. Then the following matrix $M$ is full rank.
	 $$
	 M=
	 \begin{bmatrix}
	 	1& 1 & \dots & 1\\
	 	\gamma^{\ell}\beta_1^{q-1}& \gamma^{\ell}\beta_2^{q-1}& \cdots & \gamma^{\ell}\beta_m^{q-1}\\
	 	\gamma^{\ell(1+q)}\beta_1^{q^2-1}& \gamma^{\ell(1+q)}\beta_2^{q^2-1}& \cdots & \gamma^{\ell(1+q)}\beta_m^{q^2-1}\\
	 	\vdots & \vdots &  &\vdots\\
	 	\gamma^{\ell(1+q+\dots+q^{m-2})}\beta_1^{q^{m-1}-1}& \gamma^{\ell(1+q+\dots+q^{m-2})}\beta_2^{q^{m-1}-1}& \cdots & \gamma^{\ell(1+q+\dots+q^{m-2})}\beta_m^{q^{m-1}-1}
	 \end{bmatrix}
	 $$
\end{corollary}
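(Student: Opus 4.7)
The plan is to recognize the displayed $m \times m$ matrix $M$ as a skew-polynomial Vandermonde matrix $V_m(a_1,\dots,a_m)$ in the ring $\F_{q^m}[t;\sigma]$ with the Frobenius $\sigma(x)=x^q$ and $\delta\equiv 0$, and then invoke Lemma~\ref{lem:rank_Vandermonde} exactly as in the previous corollary.

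First, I would pick the points $a_j := \conj{\gamma^\ell}{\beta_j}$. By Example~\ref{example:Frobenious}, conjugation in the Frobenius setting has the simple form $\conj{a}{c} = c^{q-1} a$, so
\[
a_j \;=\; \beta_j^{q-1}\,\gamma^{\ell}.
\]
Since $\delta\equiv 0$, the power functions reduce to $N_i(a)=a^{1+q+\dots+q^{i-1}}$. Using the factorization $q^i-1=(q-1)(1+q+\dots+q^{i-1})$, I would compute
\[
N_i(a_j) \;=\; \bigl(\beta_j^{q-1}\gamma^{\ell}\bigr)^{1+q+\dots+q^{i-1}} \;=\; \gamma^{\ell(1+q+\dots+q^{i-1})}\,\beta_j^{q^i-1},
\]
which is exactly the $(i,j)$-entry of $M$. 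Hence $M=V_m(a_1,\dots,a_m)$.

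All the $a_j$'s lie in the single conjugacy class of $\gamma^{\ell}$ (the case $\ell=-1$ being excluded since $\gamma^{\ell}\ne 0$), and by Example~\ref{example:Frobenious} the centralizer $\K_{\gamma^{\ell}}$ equals the base field $\F_q$. The conjugating elements in the representation $a_j = \conj{\gamma^{\ell}}{\beta_j}$ are precisely $c_{1j}=\beta_j$. By hypothesis, $\beta_1,\dots,\beta_m$ are $\F_q$-linearly independent, so the condition of Lemma~\ref{lem:rank_Vandermonde} is met and $M$ has full rank.

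There is no real obstacle: the argument is a direct pattern match with the previous corollary, the only substantive computation being the identification of $N_i(a_j)$ via the $(q-1)(1+q+\dots+q^{i-1})=q^i-1$ identity. As a sanity check, one observes that $\F_q$-linear independence of the $\beta_j$ forces them to be nonzero and pairwise non-proportional over $\F_q$, so the $a_j$ are distinct and $V_m(a_1,\dots,a_m)$ is genuinely a square matrix with $m$ distinct columns.
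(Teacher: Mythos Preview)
Your proposal is correct and follows essentially the same approach as the paper: set $a=\gamma^{\ell}$, recognize $M=V_m(\conj{a}{\beta_1},\dots,\conj{a}{\beta_m})$ via the computation $N_i(\conj{a}{\beta_j})=\gamma^{\ell(1+q+\dots+q^{i-1})}\beta_j^{q^i-1}$, and apply Lemma~\ref{lem:rank_Vandermonde} using that the centralizer is $\F_q$ and the $\beta_j$ are $\F_q$-independent. The paper's proof is just a terser version of what you wrote.
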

\begin{proof}
	Let $\K=\F_{q^m}$, $\sigma(a)=a^q$ and $\delta\equiv 0$. Then $N_i(a)=a^{1+q+q^2+\dots+q^{i-1}}$. Let $a=\gamma^\ell$ then $M=V_m(\conj{a}{\beta_1},\dots,\conj{a}{\beta_m})$. Therefore $M$ is full rank by Lemma~\ref{lem:rank_Vandermonde}.
\end{proof}

\section{Skew polynomials based MR LRC constructions}
\label{sec:constructions_main}
Let us recall that an $(n,r,h,a,q)$-LRC admits a parity check matrix $H$ of the following form
\begin{equation}\label{fig:MRtopology}
H=
\left[
\begin{array}{c|c|c|c}
A_1 & 0 & \cdots & 0\\
\hline
0 &A_2 & \cdots & 0\\
\hline
\vdots & \vdots & \ddots & \vdots \\
\hline
0 & 0 & \cdots & A_g \\
\hline
B_1 & B_2 & \cdots & B_g \\
\end{array}
\right].
\end{equation}
Here $A_1,A_2,\cdots,A_g$ are $a\times r$ matrices over $\F_q$ which represent the local parity checks, $B_1,B_2,\cdots,B_g$ are $h\times r$ matrices over $\F_q$ which together represent the $h$ global parity checks. The rest of the matrix is filled with zeros. By Proposition~\ref{prop:MR_LRC_paritycheck}, $C$ is an MR LRC iff (1) any `$a$' columns of each matrix $A_i$ are linearly independent and (2) any submatrix of $H$ formed by selecting $a$ columns in each local group and any $h$ additional columns is full rank.


\subsection{Construction: Proof of Theorem~\ref{thm:main_construction}}
In this section, we will prove Theorem~\ref{thm:main_construction} by presenting a construction of MR LRCs over fields of size $q=O\left(\max(g,r)\right)^{\min\{h,r-a\}}.$
The construction presented here is inspired from~\cite{MK19}, where they achieve a field size of $O\left(\max(g,r)\right)^{r-a}$.

Let ${q_0}\ge \max\{g+1,r\}$ be a prime power. Choose $\alpha_1,\alpha_2,\dots,\alpha_r\in \F_{q_0}$ to be distinct. Define
\begin{equation}
\label{eq:A-def}
A_\ell=
\begin{bmatrix}
	1 & 1 & \dots & 1\\
	\alpha_1 & \alpha_2 &\dots & \alpha_r\\
	\alpha_1^2 & \alpha_2^2 &\dots & \alpha_r^2\\
	\vdots & \vdots & & \vdots\\
	\alpha_1^{a-1} & \alpha_2^{a-1} &\dots & \alpha_r^{a-1}
\end{bmatrix}.
\end{equation}
Note that $A_1=A_2=\dots=A_g$.  Let $m=\min\{r-a,h\}$ and let $\gamma$ be a generator for $\F_{q_0^m}^*$. Our codes will be defined over the field $\F_q=\F_{q_0^m}$. Define $\beta_1,\beta_2,\dots,\beta_r\in \F_{q_0^m}$ as
\begin{equation}
\label{eq:beta-def}
\beta_i=
\begin{bmatrix}
	\alpha_i^{a}\\
	\alpha_i^{a+1}\\
	\vdots\\
	\alpha_i^{a+m-1}
\end{bmatrix},
\end{equation}
where we are expressing $\beta_i$ in some basis for $\F_{q_0^m}$ (which is a $\F_{q_0}$-vector space of dimension $m$). 
The improvement in our construction over \cite{MK19} comes from choosing $\beta_i$ carefully in our construction. In \cite{MK19}, $\beta_i$ are chosen independently of the local parity check matrix $A_i$ and they are chosen to satisfy $(r-a)$-wise independence over the base field $\F_{q_0}$. By choosing them carefully in combination with the local parity check matrix $A_i$, we only require $m=\min\{h,r-a\}$-wise independence of $\beta_1,\beta_2,\dots,\beta_r$. Moreover~\cite{MK19} constructs a generator matrix for the code, whereas we construct a parity check matrix. 

Define
\begin{equation}
\label{eq:choice-of-B}
B_\ell=
\begin{bmatrix}
	 	\beta_1& \beta_2 & \dots & \beta_{r}\\
	 	\gamma^{\ell}\beta_1^{{q_0}}& \gamma^{\ell}\beta_2^{{q_0}}& \cdots & \gamma^{\ell}\beta_{r}^{{q_0}}\\
	 	\gamma^{\ell(1+{q_0})}\beta_1^{q_0^2}& \gamma^{\ell(1+{q_0})}\beta_2^{q_0^2}& \cdots & \gamma^{\ell(1+{q_0})}\beta_{r}^{q_0^2}\\
	 	\vdots & \vdots &  &\vdots\\
	 	\gamma^{\ell(1+{q_0}+\dots+q_0^{h-2})}\beta_1^{q_0^{h-1}}& \gamma^{\ell(1+{q_0}+\dots+q_0^{h-2})}\beta_2^{q_0^{h-1}}& \cdots & \gamma^{\ell(1+{q_0}+\dots+q_0^{h-2})}\beta_{r}^{q_0^{h-1}}
	 \end{bmatrix}.
\end{equation}
To prove that the above construction is an MR LRC, we will use
properties of the skew field $\F_{q_0^m}[x;\sigma]$ where
$\sigma(a)=a^{q_0}$. We know that $\F_{q_0^m}$ will get partitioned
into ${q_0}-1$ conjugacy classes as shown in
Example~\ref{example:Frobenious}. If $\gamma\in \F_{q_0^m}^*$ is a
generator of $\F_{q_0^m}^*$, then
$\{1,\gamma,\gamma^2,\dots,\gamma^{{q_0}-2}\}$ fall in distinct
conjugacy classes. Intuitively, in the construction each local group
corresponds to one conjugacy class. This is possible since we chose
${q_0}\ge g+1.$ The stabilizer subfield of each conjugacy class is
$\F_{q_0}$ as shown in Example~\ref{example:Frobenious}. Therefore we
choose the matrices $B_i$ for local group $i$ as a (skew) Vandermonde
matrix where the evaluation points $\beta_1,\cdots,\beta_r$ are from
the conjugacy class of $\gamma^{i}$, but are linearly independent over
the stabilizer subfield $\F_{q_0}.$

\begin{claim}
\label{claim:MRLRC_main}
	The above construction is an MR LRC over fields of size ${q}=q_0^{\min\{h,r-a\}}.$
\end{claim}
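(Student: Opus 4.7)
The plan is to verify both conditions of Proposition~\ref{prop:MR_LRC_paritycheck}. The local MDS condition is immediate: each $A_\ell$ is a Vandermonde over distinct $\alpha_i \in \F_{q_0}$, so any $a$ of its columns are $\F_{q_0^m}$-linearly independent. For the global condition, I fix a selection $S = \bigsqcup_\ell S_\ell$ with $|S_\ell| = a + h_\ell$ and $\sum_\ell h_\ell = h$, noting that $h_\ell \le m = \min\{h, r-a\}$. Assuming $(\lambda_{\ell, j}) \in \F_{q_0^m}^{ga+h}$ lies in the kernel of $H|_S$, my goal is to derive $\lambda = 0$. The local constraints are absorbed via the standard Reed--Solomon dual parametrization $\lambda_{\ell, j} = P_\ell(\alpha_j)/G_\ell'(\alpha_j)$, where $G_\ell(x) = \prod_{j \in S_\ell}(x - \alpha_j)$ and $P_\ell \in \F_{q_0^m}[x]$ has degree $< h_\ell$. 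This leaves exactly $\sum_\ell h_\ell = h$ free scalars $p_{k'}^{(\ell)}$, the coefficients of $P_\ell$.

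Next I interpret the global constraint inside the skew polynomial ring $\F_{q_0^m}[t;\sigma]$ with $\sigma(a)=a^{q_0}$. A direct calculation shows $(B_\ell)_{i,j} = N_i(\conj{\gamma^\ell}{\beta_j})\,\beta_j$, and packaging the $h$ coordinate equations against an arbitrary skew polynomial $F$ of degree $\le h-1$ yields the equivalent condition
\[
  \sum_\ell \sum_{j \in S_\ell} \lambda_{\ell, j}\, F(\conj{\gamma^\ell}{\beta_j})\,\beta_j = 0 \quad \text{for all such } F.
\]
The crucial step is to exploit the structured choice $\beta_j = \sum_{k=0}^{m-1}\alpha_j^{a+k} e_k$: because $\alpha_j \in \F_{q_0}$ commutes with $t$, I can write $F(t)\beta_j = \sum_k \alpha_j^{a+k} F(t) e_k$, and the product rule (Lemma~\ref{lem:product_evaluation}) gives $F(\conj{\gamma^\ell}{\beta_j})\,\beta_j = \sum_k \alpha_j^{a+k}\,F(\conj{\gamma^\ell}{e_k})\,e_k$. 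Substituting and interchanging sums, the constraint collapses to $\sum_{\ell, k} F(\conj{\gamma^\ell}{e_k})\,e_k\, M_k^{(\ell)} = 0$ for all $F$, where $M_k^{(\ell)} = \sum_j \lambda_{\ell, j}\,\alpha_j^{a+k}$.

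A routine residue computation on $P_\ell(x)x^{a+k}/G_\ell(x)$ identifies $M_k^{(\ell)} = \sum_{k'=0}^{h_\ell - 1} p_{k'}^{(\ell)}\, h_{k+k'-h_\ell+1}(\alpha_j : j \in S_\ell)$, with $h_s$ the complete symmetric polynomial (zero for $s<0$). Since these symmetric coefficients lie in $\F_{q_0}$ and thus commute with $F(t)$, one more application of the product rule collapses the coefficient of $p_{k'}^{(\ell)}$ to $F(\conj{\gamma^\ell}{\beta^{(\ell, k')}})\,\beta^{(\ell, k')}$, where
\[
  \beta^{(\ell, k')} := \sum_{k=0}^{m-1} e_k\, h_{k + k' - h_\ell + 1}(\alpha_j : j \in S_\ell).
\]
Taking $F = t^i$ for $i = 0, 1, \ldots, h-1$ then produces an $h \times h$ linear system in the $p_{k'}^{(\ell)}$'s whose coefficient matrix factors as $V_h\bigl(\conj{\gamma^\ell}{\beta^{(\ell, k')}} : (\ell, k')\bigr) \cdot \mathrm{diag}\bigl(\beta^{(\ell, k')}\bigr)$.

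The main obstacle is proving this $h \times h$ matrix is nonsingular, and I expect to resolve it via Lemma~\ref{lem:rank_Vandermonde}. The key observation is that $\beta^{(\ell, k')} = e_{h_\ell - 1 - k'} + \sum_{k > h_\ell - 1 - k'} e_k\, h_{k + k' - h_\ell + 1}$ (since $h_0 = 1$), so as $k'$ ranges over $0, 1, \ldots, h_\ell - 1$ the leading basis elements $e_{h_\ell - 1 - k'}$ are distinct. Hence $\{\beta^{(\ell, k')}\}_{k'}$ is $\F_{q_0}$-linearly independent with each $\beta^{(\ell, k')} \neq 0$. Since $\gamma^1, \ldots, \gamma^g$ lie in $g$ distinct conjugacy classes, each with centralizer $\F_{q_0}$ (Example~\ref{example:Frobenious}), Lemma~\ref{lem:rank_Vandermonde} yields invertibility of the skew Vandermonde; together with invertibility of the diagonal factor this forces all $p_{k'}^{(\ell)} = 0$, so $\lambda = 0$ as required.
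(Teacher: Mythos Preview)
Your proof is correct, and both the overall strategy (reduce to an $h\times h$ skew Vandermonde times a nonzero diagonal, then invoke Lemma~\ref{lem:rank_Vandermonde}) and the final linear-independence verification go through as you describe. The route you take to that reduction, however, is quite different from the paper's.

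The paper works directly with the submatrix $H(E)$: it picks an arbitrary size-$a$ subset $S_i$ inside each local erasure block, uses the invertible $A_i(S_i)$ to clear $A_i(T_i)$ by $\F_{q_0}$-column operations, and observes that such operations preserve the column structure of $B_i$ (each column stays of the form $[\tilde\beta,\ \gamma^\ell\tilde\beta^{q_0},\ \ldots]^T$ for some $\F_{q_0}$-combination $\tilde\beta$). The resulting $h\times h$ Schur-complement block is then immediately recognized as a skew Vandermonde times a diagonal, and the needed $\F_{q_0}$-independence of the $\tilde\beta$'s is read off from the ordinary Vandermonde $F_i=\bigl[\begin{smallmatrix}A(S_i\cup T_i)\\ \beta(S_i\cup T_i)\end{smallmatrix}\bigr]$ being MDS.

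You instead parametrize the kernel of the local blocks by the Reed--Solomon dual $\lambda_{\ell,j}=P_\ell(\alpha_j)/G_\ell'(\alpha_j)$, push the global constraints through the $\F_{q_0}$-linearity of $y\mapsto F({}^y\gamma^\ell)\,y$ (Lemma~\ref{lem:sum_of_conjugates}), and use the residue identity $\sum_j \alpha_j^s/G_\ell'(\alpha_j)=h_{s-|S_\ell|+1}$ to land on explicit elements $\beta^{(\ell,k')}=\sum_k h_{k+k'-h_\ell+1}\,e_k$. Your triangularity argument for their independence is the analogue of the paper's Schur-complement step; both are really extracting the same fact, namely that the bottom $m$ rows of the $(a+m)$-Vandermonde $[A;\beta]$ restricted to $S_\ell$ have full rank modulo the top $a$ rows. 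The paper's column-operation argument is shorter and avoids the residue machinery; your dual parametrization is more explicit and makes the dependence on the coefficients of $P_\ell$ transparent, which could be useful if one wanted to extend the argument (for instance, to other choices of local parity matrices $A_\ell$).
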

\begin{proof}
	For a matrix $M$ and a subset $X$ of its columns, we will use $M(X)$ to denote the submatrix of $M$ formed by columns in $X.$ Given an erasure pattern $E$ of size $|E|=ag+h$, composed of $a$ erasures in each local group and $h$ additional erasures, we want to argue that the submatrix $H(E)$ is full rank. WLOG, assume that the $h$ additional erasures happen in local groups $1,2,\dots,t\in [g]$ for $t\le h.$ Let $E_i$ be the set of erasures that happen in the $i^{th}$ local group. Let $S_i\subset E_i$ be an arbitrary subset of size $|S_i|=a$ and let $T_i=E_i\setminus S_i.$ Note that $|T_i|\le m$ for all $i$. We need to show that $H(E)$ (which is an $(ag+h)\times (ag+h)$ matrix) is full rank where
	\begin{align*}
	H(E)=
	\left[
	\begin{array}{c|c|c|c}
	A_1(S_1\cup T_1) & 0 & \cdots & 0\\
	\hline
	0 &A_2(S_2\cup T_2) & \cdots & 0\\
	\hline
	\vdots & \vdots & \ddots & \vdots \\
	\hline
	0 & 0 & \cdots & A_g(S_g \cup T_g) \\
	\hline
	B_1(S_1\cup T_1) & B_2(S_2\cup T_2) & \cdots & B_g(S_g \cup T_g) \\
	\end{array}
	\right].
	\end{align*}
	Note that $A_1(S_1),A_2(S_2),\cdots, A_g(S_g)$ are $a\times a$ matrices of full rank. By doing column operations on $H(E)$, in each local group we can use the columns of $A_i(S_i)$ to remove the columns of $A_i(T_i)$. This results in the lower block $B_i(T_i)$ to change into a Schur complement as follows:
	\begin{align*}
		\left[
		\begin{array}{c|c}
			A_i(S_i) & A_i(T_i)\\
			\hline
			B_i(S_i) & B_i(T_i)\\ 
		\end{array}
		\right] \rightarrow 
		\left[
		\begin{array}{c|c}
			A_i(S_i) & 0\\
			\hline
			B_i(S_i) & B_i(T_i)-B_i(S_i)A_i(S_i)^{-1}A_i(T_i)\\ 
		\end{array}
		\right].
	\end{align*}
	 Note that $T_i=\phi$ for $i>t$.  So by doing row and column operations on $H(E)$, we can set it in a block diagonal form, where the diagonal blocks are given by $A_1(S_1),A_2(S_2),\dots,A_g(S_g)$ and one additional $h\times h$ block given by
	\begin{align*}
	C=
	\left[
	\begin{array}{c|c|c}
	B_1(T_1)-B_1(S_1)A_1(S_1)^{-1}A_1(T_1) & \cdots & B_t(T_t)-B_t(S_t)A_t(S_t)^{-1}A_t(T_t) \\
	\end{array}
	\right].
	\end{align*}
	Note that all the entries in $A(S_i)^{-1}A_i(T_i)$ are in the base field $\F_{q_0}.$ Also column operations on $B_i$ with $\F_{q_0}$ coefficients retain its structure with $\beta$'s replaced by their corresponding $\F_{q_0}$-linear combinations.
	Therefore by Lemma~\ref{lem:rank_Vandermonde}, it is enough to show that the following $t$ matrices $D_1,D_2,\dots,D_t$ are full rank:
	\begin{align*}
	D_i=
	\begin{bmatrix}
	\beta(T_i)-\beta(S_i)A_i(S_i)^{-1}A_i(T_i)
	\end{bmatrix}
	\end{align*}
	where $\beta=[\beta_1,\dots,\beta_r]$ is a $m\times r$ matrix over $\F_{q_0}$. Note that $[D_1|D_2|\dots|D_t]$ is just the first row of $C$ (with entries in $\F_{q_0^m}$) expressed as a matrix over $\F_{q_0}$. Consider following matrices given by 
	\begin{align*}
		F_i=\left[
		\begin{array}{c|c}
			A_i(S_i) & A_i(T_i)\\
			\hline
			\beta(S_i) & \beta(T_i)\\ 
		\end{array}
		\right]
	\end{align*}
	where each $F_i$ is of size $(a+m)\times (a+|T_i|)$. Each $F_i$ is a Vandermonde matrix by construction. Since $|T_i|\le m$, each $F_i$ is full rank. Now if we do column operations to get $F_i$ into block diagonal form we get:

	\begin{align*}
	\left[
	\begin{array}{c|c}
		A_i(S_i) & 0\\
		\hline
		\beta(S_i) & \beta(T_i)-\beta(S_i)A_i(S_i)^{-1}A(T_i)\\ 
	\end{array}
	\right]=
	\left[
	\begin{array}{c|c}
		A_i(S_i) & 0\\
		\hline
		\beta(S_i) & D_i\\ 
	\end{array}
	\right].
	\end{align*}
	This implies that $D_1,D_2,\dots,D_t$ are full rank over $\F_{q_0}$ which completes the proof.
\end{proof}

A slightly better construction which only requires $q_0 \ge \max\{g+1,r-1\}$ can be obtained by choosing
$$
A_\ell=
\begin{bmatrix}
1& \alpha_2^{m+a-1} & \alpha_3^{m+a-1} &\dots & \alpha_r^{m+a-1}\\
0& \alpha_2^{m+a-2} & \alpha_3^{m+a-2} &\dots & \alpha_r^{m+a-2}\\
\vdots&\vdots & \vdots & & \vdots\\
0&\alpha_2^{m+1} & \alpha_3^{m+1} &\dots & \alpha_r^{m+1}\\
0&\alpha_2^{m} & \alpha_3^{m} &\dots & \alpha_r^{m}\\
\end{bmatrix}
$$
and $\beta_1,\beta_2,\dots,\beta_r \in \F_{q_0}^m$ as:
\begin{align*}
\beta_1=
\begin{bmatrix}
	0\\
	\vdots\\
	0\\
	0
\end{bmatrix}
\text{ and }
\beta_i=
\begin{bmatrix}
	\alpha_i^{m-1}\\
	\vdots\\
	\alpha_i\\
	1
\end{bmatrix}
\text{ for } i\in \{2,3,\dots,r\}.	
\end{align*}

\subsection{Construction: Proof of Theorem~\ref{thm:construction_a1}}

\newcommand{\tH}{\tilde{H}}
When $a=1$ and $g$ is a fixed constant, we can improve the construction from the previous section using ideas from BCH codes.
Let $q_0\ge g+1$ be a prime power. Define
$$
A_\ell=
\begin{bmatrix}
	1 & 1 & \cdots & 1
\end{bmatrix}.
$$
Note that $A_1=A_2=\dots=A_g$. Let $H_{s\times r}$ be the parity check matrix of the $[r,r-s,d]_{\F_{q_0}}$-code $C_0$. By scaling the columns of $H$ and permuting the rows (which doesn't change the distance of $C_0$), we can assume that the first row of $H$ is $[1 1 \cdots 1]$. Let $\tH_{(s-1)\times r}$ be the submatrix of $H$ formed by removing the first row. Now define $\beta_1,\beta_2,\dots,\beta_r\in \F_{q_0}^s$ as the columns of $\tH$, i.e.,
$$\begin{bmatrix}\beta_1 & \beta_2 & \cdots & \beta_r\end{bmatrix}=\tH.$$

Here we are expressing $\beta_i$ in some basis for $\F_{q_0^{s-1}}$ (which is a $\F_{q_0}$-vector space of dimension $s-1$). Let $\gamma$ be a generator of $\F_{q_0^{s-1}}^*$.
Define $B_\ell$ as in (\ref{eq:choice-of-B}).

\begin{claim}
	\label{claim:MRLRC_main_a1}
	The above construction is an MR LRC over fields of size $q=q_0^{s-1}$.
\end{claim}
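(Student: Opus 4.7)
The plan is to follow the reduction from the proof of Claim~\ref{claim:MRLRC_main} verbatim, and to replace only the Vandermonde/MDS argument for the augmented matrix $F_i$ with the minimum-distance hypothesis on $C_0$. Given an erasure pattern of total size $g+h$ consisting of one erasure per local group (yielding singletons $S_i$) together with $h$ further erasures distributed among local groups $1,\ldots,t \leq h$ (yielding sets $T_i$ of total size $h$, each with $|T_i|\leq \min\{h,r-1\}$), I would first perform the same block column reduction of $H(E)$ as in Claim~\ref{claim:MRLRC_main}. Since the diagonal blocks $A_i(S_i) = [1]$ are trivially invertible, the problem reduces to showing that the $h\times h$ matrix $C = [C_1 \mid \cdots \mid C_t]$ with $C_i = B_i(T_i) - B_i(S_i)A_i(S_i)^{-1}A_i(T_i)$ is invertible.

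Because the entries of $A_i(S_i)^{-1}A_i(T_i)$ lie in $\F_{q_0}$, column operations on $B_i$ with these coefficients preserve its skew-Vandermonde structure (raising an $\F_{q_0}$-scalar to the $q_0^j$ leaves it unchanged), and Lemma~\ref{lem:rank_Vandermonde} together with the conjugacy-class decomposition of $\F_{q_0^{s-1}}$ from Example~\ref{example:Frobenious} reduce invertibility of $C$ to showing that, for each $i \in [t]$, the columns of the Schur complement
\[
D_i = \beta(T_i) - \beta(S_i)A_i(S_i)^{-1}A_i(T_i)
\]
are linearly independent over $\F_{q_0}$. Here we use that $\gamma,\gamma^2,\ldots,\gamma^t$ lie in distinct conjugacy classes, which is afforded by the hypothesis $q_0\geq g+1$.

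The new step, which is where the BCH-flavoured choice of $\beta_i$ pays off, is to verify this linear independence. I would form the augmented matrix
\[
F_i = \left[\begin{array}{c|c} A_i(S_i) & A_i(T_i)\\ \hline \beta(S_i) & \beta(T_i)\end{array}\right]
\]
and observe that, by the very construction of $\beta$ from the parity check matrix $H$ of $C_0$, we have $F_i = H(S_i \cup T_i)$: the top row of $F_i$ is the all-ones row of $A_\ell$, which was identified with the first row of $H$, and the remaining rows are the $\beta_j$'s, which are the columns of $\tilde H$. Since $|S_i \cup T_i| = 1 + |T_i| \leq \min\{h,r-1\}+1 \leq d-1$, any such collection of columns of $H$ is linearly independent, so $F_i$ has full column rank over $\F_{q_0}$. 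Eliminating the first column by pivoting on $A_i(S_i)=[1]$ transforms $F_i$ into a block lower triangular matrix whose bottom-right block is $D_i$, so $D_i$ also has full column rank over $\F_{q_0}$, as required.

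The only real bookkeeping point I anticipate is verifying that the preliminary column scalings applied to $H$ (to turn its full-weight row into the all-ones row) do not disturb the structure of $B_\ell$: multiplying $\beta_i$ by some $c_i \in \F_{q_0}^*$ changes the $(j,i)$-entry of $B_\ell$ by a factor $c_i^{q_0^j} = c_i$ (since $c_i$ is fixed by the Frobenius on $\F_{q_0}$), which is just a global scaling of the $i$-th column of $B_\ell$ and preserves invertibility of every submatrix. The stated field size $q=q_0^{s-1}$ is inherited directly from the ambient field $\F_{q_0^{s-1}}$ in which the $\beta_i$'s and $\gamma$ live. No other step looks subtle, and both the reduction and the distance-based rank argument above are essentially the only ingredients needed.
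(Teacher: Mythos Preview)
Your proposal is correct and follows essentially the same route as the paper: reduce exactly as in Claim~\ref{claim:MRLRC_main} and then replace the Vandermonde/MDS argument for $F_i$ by the observation that $F_i$ is a set of at most $\min\{h,r-1\}+1\le d-1$ columns of the parity check matrix $H$ of $C_0$, hence has full column rank by the minimum-distance hypothesis. The paper's proof is a one-line reference to Claim~\ref{claim:MRLRC_main} plus exactly this distance observation; your extra bookkeeping about the column scalings is harmless but unnecessary, since the construction already works with the scaled $H$.
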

\begin{proof}
	The proof is analogous to the proof of Claim~\ref{claim:MRLRC_main}. Let $m=\min\{h,r-1\}.$ We only need $\F_{q_0}$-linear independence of any $m+1$ columns of 
	\begin{align*}
		H=\begin{bmatrix}
			1 & 1 & \cdots & 1\\
			\beta_1 & \beta_2 &\cdots &\beta_r
		\end{bmatrix}.
	\end{align*}

	This follows from the fact that the code $C_0$ has minimum distance at least $m+2$, and therefore any $m+1$ columns of the parity check matrix $H$ must be linearly independent.
\end{proof}

To get the asymptotic field size bound, we instantiate the code $C_0$ with BCH codes.
\begin{proposition}
 There exist $[r,r-s,d]_{\F_{q_0}}$ BCH code with $$s=1+\big((d-2)-\lfloor(d-2)/q_0\rfloor\big)\ceil{\log_{q_0}r}.$$
\end{proposition}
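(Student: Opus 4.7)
The plan is to take the narrow-sense BCH code over $\F_{q_0}$ of designed distance $d$ and adjoin the extra zero $\alpha^0=1$; this simultaneously produces the required full-weight row of the parity-check matrix and preserves the BCH distance guarantee. The bound on $s$ then reduces to a classical estimate on the size of a union of $q_0$-cyclotomic cosets modulo $r$.

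First I set $m=\lceil \log_{q_0} r \rceil$ so that $r\mid q_0^m-1$, and I pick $\alpha\in\F_{q_0^m}^{*}$ of multiplicative order $r$. I let $g(x)\in\F_{q_0}[x]$ be the monic least common multiple of the minimal polynomials of $\alpha^0,\alpha^1,\ldots,\alpha^{d-2}$ over $\F_{q_0}$, and take $C_0$ to be the length-$r$ cyclic code generated by $g$. Since $\alpha^0,\alpha,\ldots,\alpha^{d-2}$ are $d-1$ consecutive powers of $\alpha$ among the zeros of $g$, the BCH bound gives $\dist(C_0)\ge d$. Choosing the standard evaluation-style parity-check matrix of $C_0$ indexed by the zeros of $g$, the row associated with the zero $\alpha^0=1$ is exactly $(1,1,\ldots,1)$, supplying the full-weight row required by Theorem~\ref{thm:construction_a1}.

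Finally I bound $s=\deg(g)=\bigl|\bigcup_{i=0}^{d-2}\Gamma_i\bigr|$, where $\Gamma_i=\{iq_0^k\bmod r : k\ge 0\}$ is the $q_0$-cyclotomic coset of $i$ modulo $r$. The coset $\Gamma_0=\{0\}$ contributes $1$, and for $i\ge 1$ one has $|\Gamma_i|\le m$ because the Frobenius $x\mapsto x^{q_0}$ has order $m$ on $\F_{q_0^m}$. To count the distinct cosets among $\Gamma_1,\ldots,\Gamma_{d-2}$, the key observation is that for each $j\in\{1,\ldots,\lfloor(d-2)/q_0\rfloor\}$, the product $jq_0$ lies in $\{1,\ldots,d-2\}$ and satisfies $\Gamma_{jq_0}=\Gamma_j$; removing these $\lfloor(d-2)/q_0\rfloor$ redundant indices leaves a set of size $(d-2)-\lfloor(d-2)/q_0\rfloor$ that still meets every coset (iterate $i\mapsto i/q_0$ whenever $q_0\mid i$ and the quotient remains in the range). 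Combining the two estimates yields
\[
s \;\le\; 1 + \bigl((d-2)-\lfloor(d-2)/q_0\rfloor\bigr)\lceil\log_{q_0}r\rceil,
\]
which is what is claimed.

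The argument is entirely standard BCH machinery, so there is no real obstacle; the one point that warrants a sentence of care is that $d-2<r$ forces $iq_0\bmod r = iq_0$ whenever $iq_0\le d-2$, which is exactly what makes the elimination map $i\mapsto iq_0$ well-defined and legitimately coset-preserving inside the range $\{1,\ldots,d-2\}$.
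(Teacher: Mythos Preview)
Your argument has a genuine gap at the very first step: setting $m=\lceil\log_{q_0}r\rceil$ does \emph{not} force $r\mid q_0^{m}-1$. The divisibility $r\mid q_0^{m}-1$ holds exactly when $m$ is a multiple of the multiplicative order of $q_0$ modulo $r$, and that order can be vastly larger than $\lceil\log_{q_0}r\rceil$ (for instance $q_0=2$, $r=11$ gives $\lceil\log_2 11\rceil=4$ while $\mathrm{ord}_{11}(2)=10$). Without $r\mid q_0^{m}-1$ there is no primitive $r$-th root of unity in $\F_{q_0^{m}}$, so you cannot build a length-$r$ cyclic BCH code inside $\F_{q_0^{m}}$, and your degree bound $|\Gamma_i|\le m$ collapses. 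If instead you take $m$ to be the actual multiplicative order of $q_0$ mod $r$, the cyclotomic-coset count you wrote is fine, but the resulting bound $s\le 1+\bigl((d-2)-\lfloor(d-2)/q_0\rfloor\bigr)m$ no longer has $m=\lceil\log_{q_0}r\rceil$ and the proposition as stated is not proved.

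The paper sidesteps this entirely by \emph{not} using a cyclic code. It sets $\ell=\lceil\log_{q_0}r\rceil$ (only needing $q_0^{\ell}\ge r$), picks $r$ distinct elements $\theta_1,\dots,\theta_r\in\F_{q_0^{\ell}}$, and writes down an explicit parity-check matrix whose rows are the vectors $(\theta_1^{j},\dots,\theta_r^{j})$ for $j\in\{0,1,\dots,d-2\}$ with the multiples of $q_0$ deleted. Each surviving row over $\F_{q_0^{\ell}}$ contributes $\ell$ rows over $\F_{q_0}$, giving the stated $s$; the deleted rows are $\F_{q_0}$-linear combinations of the retained ones (since $\theta^{jq_0}=(\theta^{j})^{q_0}$ and Frobenius is $\F_{q_0}$-linear), so reinstating them recovers a full Vandermonde and gives distance $\ge d$. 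This is really an alternant / subfield-subcode-of-Reed-Solomon argument rather than a cyclic BCH argument, and it is precisely what lets $\ell=\lceil\log_{q_0}r\rceil$ work.
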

\begin{proof}
   Let $\ell=\ceil{\log_{q_0}r}$ so that $q_0^\ell \ge r.$ Choose distinct $\theta_1,\theta_2,\dots,\theta_r\in \F_{q_0^\ell}$. The parity check matrix of the BCH code is given by:
	$$H_{in}=
	\begin{bmatrix}
	    1& 1 & \dots & 1\\
		\theta_1 & \theta_2 &\dots & \theta_r\\
		\vdots&\vdots&\ddots&\vdots\\
		\theta_1^{q_0-1}& \theta_2^{q_0-1} &\dots & \theta_r^{q_0-1}\\
		\theta_1^{q_0+1}& \theta_2^{q_0+1} &\dots & \theta_r^{q_0+1}\\
		\vdots&\vdots&\ddots&\vdots\\
		\theta_1^{d-2}& \theta_2^{d-2} &\dots & \theta_r^{d-2}
	\end{bmatrix},
	$$
	where we removed powers which are multiples of $q_0$. Each row of $H$ other than the first row of 1's should be thought of as $\ell$ rows over the base field $\F_{q_0}.$ Therefore the codimension of the code is $s\le 1+\ell((d-2)-\floor{(d-2)/q_0}).$ Finally, the distance of the code is at least $d$. This is because to argue about $\F_{q_0}$ linear independence of any $d-1$ columns, we can add back the rows whose powers are multiples of $q_0$ to $H$ which is a Vandermonde matrix over $\F_{q_0^\ell}$.
\end{proof}
Therefore we can choose $s=1+\big(m-\lfloor m/q_0\rfloor\big)\ceil{\log_{q_0}r}$ where $m=\min\{h,r-1\}$. Therefore we get a field size of $$q=q_0^{s-1} \le (O\left(n\right))^{m-\floor{m/q_0}}$$.

\section*{Acknowledgment}
We thank Sergey Yekhanin for several illuminating discussions about MR-LRCs and Umberto Mart{\'{\i}}nez{-}Pe{\~{n}}as for helpful comments on an earlier version of this paper.
\bibliographystyle{alpha}
\bibliography{references}

\appendix

\section{Examples of Skew Polynomial Rings}
\label{sec:examples_skewpolyring}
In Section~\ref{sec:prelim}, we discussed a few examples of skew polynomial rings such as when the derivation is the zero map, i.e., $\delta(a)=0$ for all $a\in \K$.  In this case, the skew ring is denoted by $\K[t;\sigma]$ and is said to be of endomorphism type. Here we give a few more interesting examples.
\begin{example}[Skew Polynomial Rings]
	\label{example:skewpolynomialring}
	\begin{enumerate}
		\item Let $\K$ be any field and let $\sigma:\K\to\K$ be an endomorphism. Then for any $\lambda\in \K,$ $\delta(a)=\lambda(\sigma(a)-a)$ is a $\sigma$-derivation.\footnote{If $\K$ is a division ring, then $\delta(a)=\sigma(a)\lambda - \lambda a$ is a $\sigma$-derivation.} These are called inner-derivations and the skew polynomial ring defined using such a derivation is isomorphic to the skew polynomial ring over $\K$ with the same $\sigma$ and $\delta=0.$\footnote{The isomorphism is $\phi: \K[t;\sigma,\delta]\to \K[\tilde t; \sigma]$ defined as $\phi(t)=\tilde t - \lambda$ and $\phi|_\K \equiv \Id$.} The concept of $q$-derivatives~\cite{bostan2012power} is a special case of this for $\K=\F(x)$. For some fixed $q\in \F\setminus \{1\}$, the $q$-derivative $f\in \F(x)$ is defined as $(f(qx)-f(x))/(qx-x).$ This is a derivation w.r.t. the endomorphism $\sigma: f(x) \to f(qx).$
		\item Let $\K=\F(x)$ and $\sigma$ be the identity map. Then $\delta(f(x))$ defined as the formal derivative of $f(x)$ is a $\sigma$-derivation. This can be extended to rational functions in a consistent way using power series.  When $\sigma$ is the identity map, the skew ring is denoted by $\K[t;\delta]$ and is said to be of derivation type.
		\item Let $\K$ be the set of smooth real-valued functions over $\R$ and $\sigma$ be the identity map. Then $\delta(f(x))$ defined as the derivative $f'(x)$ is a $\sigma$-derivation. This is an important skew polynomial ring for the study of linear differential equations. For a skew polynomial $g(t)=g_dt^d+\dots+g_1t+g_0 \in \K[t;\delta]$ and a smooth function $f:\R\to \R$, $\conj{0}{f}$ is a root of $g(t)$ iff $f$ satisfies the linear differential equation $$g_d D^df + \dots + g_1 Df+g_0f=0$$ where $D=\frac{d}{dx}$ is the derivative operator. Theorem~\ref{thm:fundamentalthm_roots_skewpolynomials} implies that the set of roots to $g(t)$ forms a vector space of dimension at most $d$ over the centralizer subfield $\K_0=\{f: \conj{0}{f}=0\}=\{f:f'=0\}\cong \R.$ This is consistent with the well-known fact that the space of solutions of a degree $d$ homogeneous linear differential equation has dimension at most $d$.
	\end{enumerate}
\end{example}

The following two propositions classify skew polynomial rings over fields and finite fields.
\begin{proposition}
	\label{prop:skewrings_fields}
	When $\K$ is a field (as opposed to being a division ring), up to isomorphisms, the only possible skew polynomial rings over $\K$ are either of endomorphism type (i.e., $\delta\equiv 0$) or derivation type (i.e., $\sigma\equiv\Id$). 
\end{proposition}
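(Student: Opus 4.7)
The plan is to reduce to the trivial case $\sigma = \Id$, in which $\K[t;\sigma,\delta]$ is already of derivation type, and otherwise show that any $\sigma$-derivation $\delta$ must be \emph{inner}, i.e.\ of the form $\delta(a) = \lambda(\sigma(a) - a)$ for some fixed $\lambda \in \K$. Once this is established, the explicit isomorphism $\phi \colon \K[t;\sigma,\delta] \to \K[\tilde{t};\sigma]$ sending $t \mapsto \tilde{t} - \lambda$ and fixing $\K$ pointwise, as recorded in Example~\ref{example:skewpolynomialring}, converts the ring into one of endomorphism type, completing the proof.

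So suppose $\sigma \neq \Id$ and fix some $c \in \K$ with $\sigma(c) \neq c$. For an arbitrary $a \in \K$, apply the Leibniz rule $\delta(xy) = \sigma(x)\delta(y) + \delta(x)y$ to both orderings of the product $ca = ac$ (which coincide since $\K$ is commutative), giving
\[\sigma(c)\delta(a) + \delta(c)a \;=\; \sigma(a)\delta(c) + \delta(a)c.\]
Rearranging and invoking the commutativity of $\K$ a second time to rewrite $\delta(c)a$ as $a\delta(c)$, this collapses to
\[\delta(a)(\sigma(c) - c) \;=\; \delta(c)(\sigma(a) - a).\]
Since $\sigma(c) - c$ is a nonzero element of the field $\K$, dividing yields $\delta(a) = \lambda(\sigma(a) - a)$ with $\lambda := \delta(c)/(\sigma(c) - c)$, a scalar independent of $a$. (One checks immediately that the resulting $\lambda$ is also independent of the auxiliary choice of $c$, by applying the formula with $a$ replaced by any other element $c'$ with $\sigma(c') \neq c'$.)

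The only delicate point, and the reason the proposition is restricted to fields rather than general division rings, is the double use of commutativity of $\K$ in the middle step. Over a noncommutative division ring one cannot pass $\delta(c)$ across $a$, and inner derivations take the more general shape $\delta(a) = \sigma(a)\lambda - \lambda a$; correspondingly, there can exist division-ring skew polynomial rings that are genuinely neither of endomorphism nor of derivation type. Beyond this commutativity step the argument is formal: the reduction to endomorphism type follows by unpacking the isomorphism from Example~\ref{example:skewpolynomialring} and checking that $\phi(t)\phi(a) = \phi(ta)$ reduces precisely to the identity $\delta(a) = \lambda(\sigma(a) - a)$ just derived.
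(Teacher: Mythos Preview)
Your proof is correct and follows essentially the same approach as the paper: both exploit commutativity of $\K$ to equate $\delta(ca)=\delta(ac)$, expand via the Leibniz rule, and solve to show $\delta$ is the inner derivation $\delta(a)=\lambda(\sigma(a)-a)$ with $\lambda=\delta(c)/(\sigma(c)-c)$, then invoke the isomorphism $t\mapsto \tilde t-\lambda$ from Example~\ref{example:skewpolynomialring}. Your added remarks on well-definedness of $\lambda$ and the failure over noncommutative division rings are nice elaborations but do not change the method.
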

\begin{proof}
	This is because if $\sigma\ne \Id$, then there exists some element $a_0\in \K$ such that $\sigma(a_0)\ne a_0$. Now using commutativity of $\K$, we have $\delta(aa_0)=\delta(a_0a)$ for any $a\in \K$. Expanding both sides, we get that for any $a\in \K$, $\delta(a)=\lambda (\sigma(a)-a)$ where $\lambda = \delta(a_0)/(\sigma(a_0)-a_0)$ is a fixed constant, i.e., $\delta$ is an inner-derivation. As we discussed above, this skew polynomial ring is isomorphic to the skew polynomial ring with $\delta\equiv 0$ and the same endomorphism $\sigma.$
\end{proof}

\begin{proposition}
	\label{prop:skewrings_finitefields}
	When $\K=\F_q$ is a finite field, up to isomorphisms, the only possible skew polynomial rings are of the endomorphism type (i.e., $\delta\equiv 0$). 
\end{proposition}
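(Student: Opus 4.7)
The plan is to combine Proposition~\ref{prop:skewrings_fields} with a short characteristic-$p$ computation. By Proposition~\ref{prop:skewrings_fields}, over any field any skew polynomial ring is isomorphic either to an endomorphism-type ring ($\delta \equiv 0$) or to a derivation-type ring ($\sigma \equiv \Id$). The first case is already the conclusion we want, so it suffices to show that over $\F_q$ the derivation-type case also collapses to the endomorphism type.

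The main step I would prove is that when $\sigma \equiv \Id$ on $\F_q$, every $\sigma$-derivation $\delta$ is identically zero. With $\sigma = \Id$ the defining property becomes the ordinary Leibniz rule $\delta(ab) = a\delta(b) + \delta(a)b$, and a straightforward induction on $n$ then gives $\delta(a^n) = n a^{n-1}\delta(a)$ for every $a\in \F_q$ and $n\ge 1$. I would then specialize to $n = q$: on one hand, $a^q = a$ for every $a\in \F_q$, so $\delta(a^q) = \delta(a)$; on the other hand, if $p$ denotes the characteristic of $\F_q$, then $p\mid q$, which forces $\delta(a^q) = q\, a^{q-1}\delta(a) = 0$. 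Comparing the two expressions yields $\delta(a)=0$ for every $a\in \F_q$, so $\delta \equiv 0$.

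Once $\delta \equiv 0$ and $\sigma \equiv \Id$, the ring $\F_q[t;\Id,0]$ is simply the ordinary commutative polynomial ring $\F_q[t]$, which is an endomorphism-type skew ring (with the trivial endomorphism). Combined with Proposition~\ref{prop:skewrings_fields}, this gives that every skew polynomial ring over $\F_q$ is, up to isomorphism, of endomorphism type. The argument is short and I do not foresee any substantive obstacle; the only minor point to be careful about is the case split from Proposition~\ref{prop:skewrings_fields} and the observation that $q = p^m$ is divisible by $p$, which is what kills the coefficient in $q\, a^{q-1}\delta(a)$.
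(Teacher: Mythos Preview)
Your proposal is correct and follows essentially the same approach as the paper: both invoke Proposition~\ref{prop:skewrings_fields} to reduce to the derivation-type case and then kill $\delta$ via the power rule $\delta(a^n)=na^{n-1}\delta(a)$ in characteristic $p$. The only cosmetic difference is that the paper takes $n=p$ (so $\delta(a^p)=0$, and since Frobenius is surjective on $\F_q$ this forces $\delta\equiv 0$), whereas you take $n=q$ and use $a^q=a$ directly; your variant is arguably a bit more self-contained.
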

\begin{proof}
	By Proposition~\ref{prop:skewrings_fields}, we already know that the skew polynomial ring has to be either of endomorphism type or derivation type. So we just have to rule out the derivation type. Suppose there is a skew polynomial ring of derivation type, i.e., $\sigma\equiv \Id$ and $\delta\ne 0$. Suppose $\mathrm{char}(\F_q)=p$. Then by repeatedly applying chain rule for $\delta$, for any $a\in \K$, $$\delta(a^p)=a\delta(a^{p-1})+\delta(a)a^{p-1}=\cdots = p\delta(a)a^{p-1}=0.$$ This is a contradiction.
\end{proof}

\section{Missing Proofs from Section~\ref{sec:prelim}}
\label{sec:missing_proofs}
\begin{lemma}[Lemma~\ref{lem:conjugacy_equivalence}]
	\begin{enumerate}
		\item $\conj{(\conj a c)}{d}=\conj{a}{dc}$
		\item Conjugacy is an equivalence relation, i.e., we can partition $\K$ into conjugacy classes where elements in each part are conjugates of each other, but elements in different parts are not conjugates.
	\end{enumerate}
\end{lemma}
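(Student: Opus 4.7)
The plan is to prove part (1) by a direct computation, after which part (2) will follow essentially for free. For part (1), I would unfold the left-hand side $\conj{(\conj{a}{c})}{d}$ using the definition $\conj{a}{c} = \sigma(c)ac^{-1} + \delta(c)c^{-1}$ and linearity of $\sigma$, to get
\[
\conj{(\conj{a}{c})}{d} = \sigma(d)\bigl(\sigma(c)ac^{-1} + \delta(c)c^{-1}\bigr)d^{-1} + \delta(d)d^{-1} = \sigma(d)\sigma(c)a c^{-1}d^{-1} + \sigma(d)\delta(c)c^{-1}d^{-1} + \delta(d)d^{-1}.
\]
For the right-hand side I would use the multiplicativity $\sigma(dc) = \sigma(d)\sigma(c)$ and the Leibniz rule $\delta(dc) = \sigma(d)\delta(c) + \delta(d)c$ to obtain
\[
\conj{a}{dc} = \sigma(d)\sigma(c)a c^{-1}d^{-1} + \bigl(\sigma(d)\delta(c) + \delta(d)c\bigr)c^{-1}d^{-1}.
\]
The second term breaks into $\sigma(d)\delta(c)c^{-1}d^{-1} + \delta(d)d^{-1}$, so the two expressions agree term by term.

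For part (2), I would verify the three equivalence-relation axioms in turn. Reflexivity reduces to $\conj{a}{1} = a$, which uses the two small identities $\sigma(1) = 1$ and $\delta(1) = 0$. These are not axioms but are forced by the given properties: $\sigma(1) = \sigma(1)^2$ gives $\sigma(1) \in \{0,1\}$, and $\sigma(1) = 0$ would make $\sigma$ identically zero; similarly $\delta(1) = \sigma(1)\delta(1) + \delta(1)\cdot 1 = 2\delta(1)$ forces $\delta(1) = 0$. Symmetry then follows from part (1) by taking $d = c^{-1}$: if $b = \conj{a}{c}$, then $\conj{b}{c^{-1}} = \conj{(\conj{a}{c})}{c^{-1}} = \conj{a}{c^{-1}c} = \conj{a}{1} = a$. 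Transitivity is just part (1) read as a composition rule: if $b = \conj{a}{c}$ and $e = \conj{b}{d}$, then $e = \conj{(\conj{a}{c})}{d} = \conj{a}{dc}$.

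I do not expect any substantive obstacle here; the proof is largely careful bookkeeping with the one-line formula for $\conj{\cdot}{\cdot}$. The only mildly subtle point is the need to extract $\sigma(1) = 1$ and $\delta(1) = 0$ from the multiplicative and Leibniz axioms before reflexivity goes through.
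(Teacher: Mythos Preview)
Your proposal is correct and follows essentially the same approach as the paper: a direct expansion of both sides in part~(1) using multiplicativity of $\sigma$ and the Leibniz rule for $\delta$, followed by deducing symmetry and transitivity from part~(1) in part~(2). You are actually slightly more careful than the paper in that you explicitly justify reflexivity via $\sigma(1)=1$ and $\delta(1)=0$, which the paper uses implicitly.
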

\begin{proof}
	(1) follows easily from the definition of conjugation and the using the fact that $\delta(cd)=\sigma(c)\delta(d)+\delta(c)d.$
	\begin{align*}
		\conj{(\conj{a}{c})}{d}&= \sigma(d)\cdot \conj{a}{c} \cdot d^{-1}+\delta(d)d^{-1}\\
		&= \sigma(d)(\sigma(c)ac^{-1}+\delta(c)c^{-1})d^{-1}+\delta(d)d^{-1}\\
		&= \sigma(dc)ac^{-1}d^{-1}+\sigma(d)\delta(c)c^{-1}d^{-1}+\delta(d)d^{-1}\\
		&= \sigma(dc)a(dc)^{-1}+(\sigma(d)\delta(c)+\delta(d)c)c^{-1}d^{-1}\\
		&= \sigma(dc)a(dc)^{-1}+\delta(dc)(dc)^{-1}\\
		&=\conj{a}{dc}.
	\end{align*}
	We now prove (2). Suppose $a$ is a conjugate of $b$, i.e., $a=\conj{b}{x}$ for some $x\in \K^*.$ Then $\conj{a}{x^{-1}}=\conj{(\conj{b}{x})}{x^{-1}}=\conj{b}{x^{-1}x}=b.$ Therefore $b$ is a conjugate of $a.$ Suppose $a$ is a conjugate of $b$, with $a=\conj{b}{x}$, and $c$ is a conjugate of $b$, with $b=\conj{c}{y}$. Then $a=\conj{b}{x}=\conj{(\conj{c}{y})}{x}=\conj{c}{xy}.$ So $a$ is a conjugate of $c.$
\end{proof}

\begin{lemma}[Lemma~\ref{lem:centralizer_subfield}]
\begin{enumerate}
	\item $\K_a$ is a subfield of $\K.$\footnote{When $\K$ is a division ring, $\K_a$ will be a sub-division ring of $\K.$}
	\item If $a,b\in \K$ are conjugates, then $\K_a=\K_b$. \footnote{When $\K$ is a division ring and not a field, we have $\K_{(\conj{a}{x})}=x\K_ax^{-1}$.}
\end{enumerate}
\end{lemma}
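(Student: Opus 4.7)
The plan is to prove both parts by working directly from the definition $\conj{a}{c}=\sigma(c)ac^{-1}+\delta(c)c^{-1}$, together with Lemma~\ref{lem:conjugacy_equivalence}~(1), which is the only previously proved fact about conjugation.

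For part (1), I first rewrite the condition $c\in\K_a^*$ as the cleaner identity $\sigma(c)a+\delta(c)=ac$ (multiplying on the right by $c$). I would then check each subfield axiom on this reformulated condition. Closure under addition is immediate from the linearity of $\sigma$ and $\delta$. Closure under multiplication uses the $\sigma$-Leibniz rule: for $c,d\in\K_a^*$, expand $\sigma(cd)a+\delta(cd) = \sigma(c)\sigma(d)a+\sigma(c)\delta(d)+\delta(c)d = \sigma(c)(\sigma(d)a+\delta(d))+\delta(c)d = \sigma(c)(ad)+\delta(c)d = (\sigma(c)a+\delta(c))d = (ac)d = a(cd)$ (the commutativity of $\K$ is used when moving $a$ past $d$). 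The element $1$ belongs to $\K_a$ once one observes $\delta(1)=0$, which follows from $\delta(1)=\delta(1\cdot 1)=\sigma(1)\delta(1)+\delta(1)=2\delta(1)$. For additive inverses I verify $\conj{a}{-c}=\conj{a}{c}$ directly since $\sigma$ and $\delta$ are linear and the two signs cancel. For multiplicative inverses I apply Lemma~\ref{lem:conjugacy_equivalence}~(1) with $d=c^{-1}$: $\conj{(\conj{a}{c})}{c^{-1}}=\conj{a}{c^{-1}c}=\conj{a}{1}=a$, and the assumption $\conj{a}{c}=a$ turns the left side into $\conj{a}{c^{-1}}$, so $c^{-1}\in\K_a$.

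For part (2), write $b=\conj{a}{x}$ for some $x\in\K^*$. The key computation is to show that for any $c\in\K^*$,
\[
\conj{b}{c}\;=\;\conj{(\conj{a}{c})}{x},
\]
which reduces the statement $\K_a=\K_b$ to a fixed-element identity. To establish it, apply Lemma~\ref{lem:conjugacy_equivalence}~(1) once to get $\conj{b}{c}=\conj{(\conj{a}{x})}{c}=\conj{a}{cx}$, then use commutativity of $\K$ to replace $cx$ by $xc$, and apply Lemma~\ref{lem:conjugacy_equivalence}~(1) again in the other direction to obtain $\conj{a}{xc}=\conj{(\conj{a}{c})}{x}$. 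Given this identity, if $c\in\K_a$ then $\conj{a}{c}=a$ gives $\conj{b}{c}=\conj{a}{x}=b$, so $c\in\K_b$. For the converse, I use that conjugation by a fixed nonzero $x$ is injective in the first argument: if $\conj{u}{x}=\conj{v}{x}$ then $\sigma(x)ux^{-1}=\sigma(x)vx^{-1}$, hence $u=v$ since $\sigma(x)\ne 0$. Applied to $\conj{(\conj{a}{c})}{x}=\conj{b}{c}=b=\conj{a}{x}$ this forces $\conj{a}{c}=a$, so $c\in\K_a$.

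The only subtle step is the multiplicative-inverse case in part (1), where one has to recognize that Lemma~\ref{lem:conjugacy_equivalence}~(1) can be inverted via the choice $d=c^{-1}$; everything else is a direct unwinding of definitions. Commutativity of $\K$ enters in exactly two places (the multiplicative closure in part (1), and the swap $cx\mapsto xc$ in part (2)), which is why the division-ring footnote $\K_{(\conj{a}{x})}=x\K_ax^{-1}$ replaces the equality $\K_a=\K_b$ in that more general setting; the same proof scheme still applies but the commutativity shortcut is replaced by conjugation by $x$.
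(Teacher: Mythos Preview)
Your proof is correct and takes essentially the same approach as the paper, relying on Lemma~\ref{lem:conjugacy_equivalence}(1) together with the reformulation $\sigma(c)a+\delta(c)=ac$ throughout. The only minor deviation is in the reverse inclusion $\K_b\subset\K_a$ of part~(2): the paper simply invokes symmetry of the conjugacy relation (since $a$ is also a conjugate of $b$, the forward argument applies with roles swapped), which is slightly slicker than your injectivity argument, though both are equally valid.
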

\begin{proof}
	(1) Let $x,y\in \K_a\setminus \{0\}$ i.e. $\conj{a}{x}=\conj{a}{y}=a$. Then
	\begin{align*}
		\conj{a}{x+y}(x+y)&=\sigma(x+y)a + \delta(x+y)\\
		&=\sigma(x)a+\sigma(y)a + \delta(x)+\delta(y)\\
		&=\conj{a}{x}x+\conj{a}{y}y\\
		&=ax+ay=a(x+y).
	\end{align*}
	Therefore $\conj{a}{x+y}=a$. Also $\conj{a}{yx}=\conj{(\conj{a}{x})}{y}=a.$ And finally $\conj{a}{x^{-1}}=\conj{(\conj{a}{x})}{x^{-1}}=\conj{a}{x^{-1}x}=a.$

	(2) Suppose $b=\conj{a}{d}$ and let $c\in \K_a.$Then $\conj{b}{c}=\conj{(\conj{a}{d})}{c}=\conj{a}{cd}=\conj{a}{dc}=\conj{(\conj{a}{c})}{d}=\conj{a}{d}=b.$ Therefore $\K_a\subset \K_b$. By symmetry, $\K_b\subset \K_a.$
\end{proof}

\begin{lemma}[Product evaluation rule (Lemma~\ref{lem:product_evaluation})]
	If $g(a)=0$, then $(fg)(a)=0$. If $g(a)\ne 0$ then $$(fg)(a)=f\left(\conj{a}{g(a)}\right)g(a).$$
\end{lemma}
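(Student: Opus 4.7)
I would split into cases according to whether $g(a)$ vanishes. If $g(a)=0$, then by definition of evaluation $g(t)=q_g(t)(t-a)$ for some $q_g$, hence $f(t)g(t)=\bigl(f(t)q_g(t)\bigr)(t-a)$, which has right-remainder zero upon division by $t-a$; thus $(fg)(a)=0$. If $b:=g(a)\ne 0$, write $g(t)=q_g(t)(t-a)+b$, so $f(t)g(t)=\bigl(f(t)q_g(t)\bigr)(t-a)+f(t)\cdot b$, and the task reduces to evaluating $(f\cdot b)(a)$. It therefore suffices to establish the identity
\[
(f\cdot b)(a)=f\!\left(\conj{a}{b}\right)\,b \qquad \text{for every } b\in \K^{*}.
\]

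Since evaluation is $\K$-linear in the coefficients of $f$, and $f(\conj{a}{b})=\sum_i f_i\,N_i(\conj{a}{b})$ by Lemma~\ref{lem:evaluation}, it is enough to prove the monomial statement
\[
(t^n\cdot b)(a)=N_n\!\left(\conj{a}{b}\right)\cdot b \qquad \text{for every } n\ge 0.
\]
I would prove this by induction on $n$. The base case $n=0$ is immediate since $N_0\equiv 1$. For the inductive step, set $c:=(t^n b)(a)$; by the induction hypothesis $c=N_n(\conj{a}{b})\,b$. Writing $t^n b=p(t)(t-a)+c$ and left-multiplying by $t$, the multiplication rule gives $t\cdot c=\sigma(c)t+\delta(c)=\sigma(c)(t-a)+\sigma(c)a+\delta(c)$, so $t^{n+1}b=\bigl(tp(t)+\sigma(c)\bigr)(t-a)+\sigma(c)\,a+\delta(c)$, and hence
\[
(t^{n+1}b)(a)=\sigma(c)\,a+\delta(c).
\]

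It remains to check that this coincides with $N_{n+1}(\conj{a}{b})\,b$. Expanding via the recursion $N_{n+1}(y)=\sigma(N_n(y))\,y+\delta(N_n(y))$ at $y=\conj{a}{b}$ and using $\conj{a}{b}\cdot b=\sigma(b)a+\delta(b)$ (from the definition of conjugation), one obtains
\[
N_{n+1}(\conj{a}{b})\,b \;=\; \sigma(N_n(\conj{a}{b}))\sigma(b)\,a \;+\; \sigma(N_n(\conj{a}{b}))\delta(b) \;+\; \delta(N_n(\conj{a}{b}))\,b.
\]
Multiplicativity of $\sigma$ rewrites the first term as $\sigma(c)\,a$, and the Leibniz rule $\delta(xy)=\sigma(x)\delta(y)+\delta(x)y$ applied with $x=N_n(\conj{a}{b})$ and $y=b$ collapses the remaining two terms into $\delta(c)$, closing the induction. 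The main obstacle is precisely this last bookkeeping step: the identity only closes because the recursion defining $N_{n+1}$ is engineered to dovetail with the Leibniz rule for the $\sigma$-derivation $\delta$ and with the conjugation formula $\conj{a}{b}\cdot b=\sigma(b)a+\delta(b)$ --- which is, retroactively, why the power functions and conjugation are defined the way they are.
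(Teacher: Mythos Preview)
Your proof is correct, but it takes a different route from the paper's. Both arguments handle the case $g(a)=0$ identically. For the case $b:=g(a)\ne 0$, the paper does not reduce to monomials; instead it writes $f(t)=q_f(t)\bigl(t-\conj{a}{b}\bigr)+f(\conj{a}{b})$ and expands $f(t)g(t)$ directly, using the single identity
\[
\bigl(t-\conj{a}{b}\bigr)\cdot b \;=\; \sigma(b)t+\delta(b)-\bigl(\sigma(b)a+\delta(b)\bigr) \;=\; \sigma(b)\,(t-a)
\]
to absorb the cross term into a right multiple of $t-a$. This one-line trick finishes the proof without any induction or reference to the power functions $N_i$. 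Your approach, by contrast, reduces to constants and then inducts on monomials, in effect proving the identity $(t^n\cdot b)(a)=N_n(\conj{a}{b})\,b$; this is exactly the content of Lemma~\ref{lem:phi_a_map}(1) in the paper (stated there as $\phi_a^n(b)=N_n(\conj{a}{b})\,b$ with $\phi_a(y)=\sigma(y)a+\delta(y)$), which the paper proves separately and later. So the paper's route is shorter and avoids the power-function machinery, while yours is more elementary and, as a side benefit, yields that auxiliary lemma for free.
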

\begin{proof}
	If $g(a)=0$, then $g(t)=b(t)(t-a)$ for some $b(t)\in \K[t;\sigma,\delta]$. Therefore $f(t)g(t)=f(t)b(t)(t-a)$, and so $(fg)(a)=0$. Suppose $g(a)\ne 0.$ Let $g(t)=b(t)(t-a)+g(a)$ and $f(t)=a(t)\left(t-\conj{a}{g(a)}\right)+f\left(\conj{a}{g(a)}\right)$. Then 
	\begin{align*}
		f(t)g(t) &= f(t)\cdot (b(t)(t-a)+g(a))\\
		&= f(t)b(t)(t-a)+f(t)g(a)\\
		&=f(t)b(t)(t-a)+\left(a(t)\left(t-\conj{a}{g(a)}\right)+f\left(\conj{a}{g(a)}\right)\right)g(a)\\
		&=f(t)b(t)(t-a)+a(t)\left(tg(a)-\conj{a}{g(a)}\cdot g(a)\right)+f\left(\conj{a}{g(a)}\right)g(a)\\
		&=f(t)b(t)(t-a)+a(t)\left(\sigma(g(a))t+\delta(g(a))-\sigma(g(a))a-\delta(g(a))\right)+f\left(\conj{a}{g(a)}\right)g(a)\\
		&=f(t)b(t)(t-a)+a(t)\sigma(g(a))(t-a)+f\left(\conj{a}{g(a)}\right)g(a)\\
		&=\left(f(t)b(t)+a(t)\sigma(g(a))\right)(t-a)+f\left(\conj{a}{g(a)}\right)g(a).
	\end{align*}
	Therefore $(fg)(a)=f\left(\conj{a}{g(a)}\right)g(a).$
\end{proof}

\section{Roots of Skew Polynomials}
\label{sec:proof_fundamental_thm}
The most important and useful fact about usual polynomials is that a degree $d$ non-zero polynomial can have at most $d$ roots. It turns out that this statement is false for skew polynomials! A skew polynomial can have many more roots than its degree. But when counted in the right way, we can recover an analogous statement for skew polynomials. In this section, we will prove the ``fundamental theorem" about roots of skew polynomials which shows that a degree $d$ skew polynomial cannot have more than $d$ roots when counted the right way.
We will begin with showing that any non-zero degree $d$ skew polynomial can have at most $d$ roots in distinct conjugacy classes. 

\begin{lemma}
	\label{lem:distinct_conjugate_roots}
	Let $f\in \K[t;\sigma,\delta]$ be a degree $d$ non-zero polynomial. Then $f$ can have at most $d$ roots in distinct conjugacy classes.
\end{lemma}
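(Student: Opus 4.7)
The plan is to prove Lemma~\ref{lem:distinct_conjugate_roots} by induction on the degree $d$ of $f$, using the right-division Euclidean algorithm together with the product evaluation rule (Lemma~\ref{lem:product_evaluation}) to "peel off" one root at a time and reduce to a polynomial of degree $d-1$.

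The base case $d=0$ is immediate: a non-zero constant has no roots, so the claim $0 \le 0$ holds vacuously. For the inductive step, suppose $f$ has $k$ roots $a_1, a_2, \ldots, a_k$ lying in pairwise distinct conjugacy classes. Pick $a_1$ and use right-division by $t - a_1$ to write $f(t) = g(t)(t - a_1)$, where $\deg g = d - 1$ (this is where I use that $a_1$ is a root, so the remainder vanishes).

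The key step is to produce $k-1$ roots of $g$ sitting in distinct conjugacy classes. Fix any $i \ge 2$. Since $a_i$ and $a_1$ lie in different conjugacy classes they are in particular distinct, so $(t - a_1)(a_i) = a_i - a_1 \in \K^*$. Applying the product evaluation rule to $f = g \cdot (t - a_1)$ at the point $a_i$ gives
\[
0 = f(a_i) = g\bigl(\conj{a_i}{(a_i - a_1)}\bigr) \cdot (a_i - a_1),
\]
and since $a_i - a_1 \ne 0$ we conclude $g\bigl(\conj{a_i}{(a_i - a_1)}\bigr) = 0$. Thus $b_i := \conj{a_i}{(a_i - a_1)}$ is a root of $g$. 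By the very definition of conjugation, $b_i$ lies in the same conjugacy class as $a_i$; therefore $b_2, \ldots, b_k$ inhabit the same $k-1$ distinct conjugacy classes as $a_2, \ldots, a_k$. The inductive hypothesis applied to $g$ (degree $d-1$, non-zero because $f$ is non-zero and the skew polynomial ring has no zero-divisors) yields $k - 1 \le d - 1$, hence $k \le d$.

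There is no real obstacle here: the only subtleties are ensuring $a_i - a_1 \ne 0$ (guaranteed by distinct conjugacy classes) and observing that conjugation preserves conjugacy class (immediate from the definition), both of which are painless. All the work is done by the product evaluation rule, which is exactly the substitute for the familiar "$(fg)(a) = f(a)g(a)$" identity that fails in the skew setting.
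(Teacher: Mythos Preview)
Your proof is correct and follows essentially the same approach as the paper: induction on degree, factoring out one linear factor via right division, then applying the product evaluation rule (Lemma~\ref{lem:product_evaluation}) to produce roots of the quotient in the remaining distinct conjugacy classes. The only cosmetic difference is that the paper assumes $d+1$ roots and derives a contradiction, whereas you bound the number of roots $k$ directly by $d$.
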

\begin{proof}
	We will prove it using induction on the degree. For the base case, it is clear that a degree $0$ polynomial which is a non-zero constant cannot have any roots. Suppose $a_0,a_1,\dots,a_d\in \K$ be roots of $f$ in distinct conjugacy classes. Since $f(a_0)=0,$ we can write $f(t)=h(t)(t-a_0)$ where $\deg(h)=d-1.$ By Lemma~\ref{lem:product_evaluation}, $f(a_i)=h(\conj{a_i}{a_i-a_0})(a_i-a_0)$. Therefore $b_i=\conj{a_i}{a_i-a_0}$ for $i\in \set{1,\dots,d}$ are $d$ roots of $h$ and they lie in distinct conjugacy classes because $a_i$ lie in distinct conjugacy classes. Thus by induction $h=0$ and therefore $f=0$ which is a contradiction.
\end{proof}

Now let us try to understand, the roots of a skew polynomial in the same conjugacy class. Let $f\in \K[t;\sigma,\delta]$ be a non-zero polynomial and fix some $a\in \K$ and let $\K_a$ be the centralizer of $a$ (which is a subfield of $\K$). Define $V_f(a)=\{y\in \K^*: f(\conj{a}{y})=0\}\cup \set{0}$. Lemma~\ref{lem:conjugate_roots_vectorspace} shows that $V_f(a)$ is a vector space over $\K_a.$
The next lemma shows that the dimension of $V_f(a)$ can be at most $\deg(f).$
\begin{lemma}
	\label{lem:conjugate_roots_dim}
	Let $f\in \K[t;\sigma,\delta]$ be a degree $d$ non-zero polynomial and fix some $a\in \K$ and let $\F=\K_a$ be the centralizer subfield of $a$. Define $V_f(a)=\{y\in \K^*: f(\conj{a}{y})=0\}\cup \set{0}$. Then $V_f(a)$ is a vector space over $\F$ of dimension at most $d.$
\end{lemma}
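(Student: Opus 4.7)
The plan is to induct on $d = \deg f$. The base case $d = 0$ is immediate since a nonzero constant has no roots, giving $V_f(a) = \{0\}$. For the inductive step, suppose for contradiction that $\dim_\F V_f(a) \geq d+1$, and pick $y_0, y_1, \ldots, y_d \in V_f(a)$ linearly independent over $\F$. Writing $b_i := \conj{a}{y_i}$, I would factor $f(t) = h(t)(t - b_0)$ by Euclidean right-division, so that $\deg h = d-1$ (and $h \neq 0$ since the skew polynomial ring has no zero divisors). The goal is then to exhibit $d$ elements of $V_h(a)$ that are linearly independent over $\F$, contradicting the inductive hypothesis applied to $h$.

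The first step is to verify $b_i \neq b_0$ for each $i \geq 1$: equality would force $y_i y_0^{-1}$ into the centralizer $\K_{b_0} = \K_a = \F$ (using Lemma~\ref{lem:centralizer_subfield}), contradicting linear independence of $\{y_0, y_i\}$. With this in hand, the product rule (Lemma~\ref{lem:product_evaluation}) applied to $f = h\cdot(t - b_0)$ at $b_i$ yields
$$0 \;=\; f(b_i) \;=\; h\bigl(\conj{b_i}{(b_i - b_0)}\bigr)\,(b_i - b_0),$$
and Lemma~\ref{lem:conjugacy_equivalence}(1) rewrites $\conj{b_i}{(b_i - b_0)}$ as $\conj{a}{(b_i - b_0)y_i}$. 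Since $b_i - b_0 \neq 0$, this gives $h(\conj{a}{(b_i - b_0)y_i}) = 0$, so $w_i := (b_i - b_0)\,y_i$ lies in $V_h(a)$ for $i = 1, \ldots, d$.

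The main obstacle is showing these $w_i$ are linearly independent over $\F$. Here I would invoke the $\F$-linearity of $D_{t,a}(y) = \conj{a}{y}\,y = \sigma(y)a + \delta(y)$ furnished by Lemma~\ref{lem:sum_of_conjugates}. A hypothetical relation $\sum_{i=1}^d \lambda_i w_i = 0$ rearranges, via $\lambda_i b_i y_i = \lambda_i D_{t,a}(y_i) = D_{t,a}(\lambda_i y_i)$ and additivity, to $D_{t,a}(z) = b_0\, z$ with $z := \sum_{i=1}^d \lambda_i y_i$. When $z \neq 0$, this equation reads $\conj{a}{z} = b_0 = \conj{a}{y_0}$, which by Lemma~\ref{lem:conjugacy_equivalence}(1) and $\K_{b_0} = \F$ places $z y_0^{-1}$ in $\F$, i.e.\ $z \in \F y_0$; in either case (whether $z = 0$ or $z$ is a nonzero $\F$-multiple of $y_0$), the linear independence of $y_0, y_1, \ldots, y_d$ compels every $\lambda_i$ to vanish. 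This exhibits $d$ linearly independent elements of $V_h(a)$, contradicting $\dim_\F V_h(a) \leq d - 1$ from the inductive hypothesis, so the assumption $\dim_\F V_f(a) \geq d+1$ was false.
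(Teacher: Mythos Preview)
Your proof is correct and follows essentially the same approach as the paper's: induction on degree, factoring out $(t-b_0)$, pushing the remaining $d$ roots into $V_h(a)$ via the product rule, and verifying $\F$-linear independence of the resulting elements through the $\F$-linearity of $y\mapsto \conj{a}{y}\,y$. The only cosmetic difference is that the paper first normalizes $y_0=1$ by replacing $a$ with $\conj{a}{y_0}$, whereas you keep $y_0$ general and work with $b_0=\conj{a}{y_0}$ throughout; the two arguments are otherwise identical.
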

\begin{proof}
	We will use induction on the degree. For the base case, it is clear that for a degree $0$ polynomial, which is a non-zero constant, $\dim_\F(V_f(a))=0$. Suppose for contradiction that there exists $y_0,y_1,\dots,y_d\in V_f(a)$ which are linearly independent over $\F$. WLOG, we can assume that $y_0=1$ (by redefining $a$ to be equal to $\conj{a}{y_0}$). Since $f(a)=0,$ we can write $f(t)=h(t)(t-a)$ where $\deg(h)=d-1.$ By Lemma~\ref{lem:product_evaluation}, $f(\conj{a}{y_i})=h(\conj{a}{y_i(\conj{a}{y_i}-a)})(\conj{a}{y_i}-a)$. Since $y_0=1$ and $y_i$ is linearly independent from $y_0$ over $\F$, $y_i\notin \F$. Therefore $\conj{a}{y_i}-a\ne 0$, and so $b_i=\conj{a}{y_i(\conj{a}{y_i}-a)}$ for $i\in \set{1,\dots,d}$ are $d$ roots of $h$. If we show that $y_i(\conj{a}{y_i}-a)$ for $i\in \set{1,\dots,d}$ are linearly independent over $\F$, then we are done by induction.

	Suppose they are not independent. Then there exists $c_1,\dots,c_d \in \F$ s.t. $\sum_{i=1}^d c_iy_i(\conj{a}{y_i}-a)=0$. Therefore,
	\begin{align*}
		a\sum_{i=1}^d c_iy_i &= \sum_{i=1}^d c_iy_i\cdot \conj{a}{y_i}\\
							 &= \sum_{i=1}^d c_iy_i \cdot \conj{a}{c_iy_i} \tag{$c_i\in \F=\K_a$}\\
							 &= \left(\sum_{i=1}^d c_iy_i\right)\conj{a}{\left(\sum_{i=1}^d c_iy_i\right)} \tag{$\conj{a}{x+y}(x+y)=\conj{a}{x}x+\conj{a}{y}y$ for all $x,y\in \K^*$}
	\end{align*}
	Since $y_1,\dots,y_d$ are independent over $\F$, $\sum_{i=1}^d c_iy_i\ne 0.$ Therefore $\conj{a}{\left(\sum_{i=1}^d c_iy_i\right)}=a$ i.e. $\sum_{i=1}^d c_iy_i \in \K_a=\F$. But this contradicts the fact that $\set{y_0=1,y_1,\dots,y_d}$ are linearly independent over $\F.$
\end{proof}

We will now prove the ``fundamental theorem" about roots of skew polynomials. It immediately implies Lemma~\ref{lem:distinct_conjugate_roots} and Lemma~\ref{lem:conjugate_roots_dim} as corollaries. But we have proved them before, just to convey some intuition. 
\begin{theorem}[Theorem~\ref{thm:fundamentalthm_roots_skewpolynomials}]
	Let $f\in \K[t;\sigma,\delta]$ be a degree $d$ non-zero polynomial. Let $A$ be the set of roots of $f$ in $\K$ and let $A=\cup_i A_i$ be a partition of $A$ into conjugacy classes. Fix some representatives $a_i\in A_i$. Let $V_i=\{y: \conj{a_i}{y}\in A_i\}\cup \{0\}$ which is a linear subspace over $\F_i=\K_{a_i}$ by Lemma~\ref{lem:conjugate_roots_vectorspace}. Then $$\sum_i \dim_{\F_i}(V_i)\le d.$$
\end{theorem}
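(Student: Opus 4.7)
The plan is to proceed by induction on the degree $d$. The base case $d=0$ is immediate: a nonzero constant has no roots, so both sides of the inequality vanish. For the inductive step, one may assume $A \ne \emptyset$, choose a conjugacy class $A_1$ containing a root, and factor $f(t) = h(t)(t - a_1)$ with $\deg h = d - 1$. The strategy is to produce, for each $i$, an $\F_i$-linear map from $V_i$ into the analogous subspace $V_i^{(h)} := \{y : \conj{a_i}{y} \text{ is a root of } h\} \cup \{0\}$ for the polynomial $h$, so that the inductive hypothesis applied to $h$ transfers into the desired bound on the sum for $f$.

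Concretely, I would define $\phi_i : V_i \to \K$ by $\phi_i(y) := y(\conj{a_i}{y} - a_1)$. The product evaluation rule (Lemma~\ref{lem:product_evaluation}) gives
\[
f(\conj{a_i}{y}) \;=\; h\bigl(\conj{a_i}{\phi_i(y)}\bigr)\,(\conj{a_i}{y} - a_1)
\]
whenever $\conj{a_i}{y} \ne a_1$, so for $y \in V_i$ with this property one concludes $\phi_i(y) \in V_i^{(h)}$. The $\F_i$-linearity of $\phi_i$ is then checked directly: additivity relies on the identity $\conj{a_i}{y_1 + y_2}(y_1 + y_2) = \conj{a_i}{y_1}y_1 + \conj{a_i}{y_2}y_2$, a one-line consequence of the definitions of $\sigma$ and $\delta$ already used inside the proof of Lemma~\ref{lem:conjugate_roots_vectorspace}, while homogeneity uses that every $c \in \F_i = \K_{a_i}$ also lies in the centralizer of $\conj{a_i}{y}$ by Lemma~\ref{lem:centralizer_subfield}(2).

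With linearity established I would bound the image dimension case by case. For $i \ne 1$, since $a_i$ is not conjugate to $a_1$ we have $\conj{a_i}{y} \ne a_1$ for every nonzero $y$, so $\phi_i$ is injective and $\dim_{\F_i} V_i^{(h)} \ge \dim_{\F_i} V_i$. For $i = 1$, the kernel of $\phi_1$ is precisely $\F_1 \subseteq V_1$ (the elements with $\conj{a_1}{y} = a_1$), a one-dimensional $\F_1$-subspace, so $\dim_{\F_1} V_1^{(h)} \ge \dim_{\F_1} V_1 - 1$. Summing over all $i$ and invoking the inductive hypothesis for $h$ (with the same representatives $a_i$, and noting that $h$ may have roots in further conjugacy classes which only enlarge the right-hand sum), one obtains
\[
\sum_i \dim_{\F_i} V_i \;-\; 1 \;\le\; \sum_i \dim_{\F_i} V_i^{(h)} \;\le\; d - 1,
\]
giving the desired bound $\sum_i \dim_{\F_i} V_i \le d$.

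The main technical hurdle I anticipate is the case $i = 1$: one must carefully separate the contributions of $y \in \F_1 \subseteq V_1$, which the factor $(t-a_1)$ effectively ``absorbs,'' from those of $y \in V_1 \setminus \F_1$, which correspond to genuine roots of $h$ — so that exactly one unit of dimension is lost and is matched by the one unit of degree lost in passing from $f$ to $h$. Verifying that $\phi_i$ really lands in $V_i^{(h)}$ with the advertised kernel, rather than being sent to zero or into a different conjugacy class, is the crux of the argument and the place where the identities about conjugation from Lemmas~\ref{lem:conjugacy_equivalence} and~\ref{lem:centralizer_subfield} are used most essentially.
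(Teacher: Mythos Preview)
Your proof is correct and follows essentially the same approach as the paper's: induct on $d$, factor $f(t)=h(t)(t-a_1)$, and use the map $y\mapsto y(\conj{a_i}{y}-a_1)$ together with the product rule to transfer dimension bounds from $h$ back to $f$, losing exactly one in the class $i=1$. Your packaging is slightly cleaner---you first verify that $\phi_i$ is $\F_i$-linear and then read off its kernel, whereas the paper picks a basis of $V_i$ and checks directly (via the same identity $\conj{a}{x+y}(x+y)=\conj{a}{x}x+\conj{a}{y}y$) that the images are linearly independent---but the underlying argument is identical.
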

\begin{proof}

	We will use induction on the degree. For the base case, it is clear that for a degree $0$ polynomial, which is a non-zero constant, $\dim_{\F_i}(V_i)=0$ for every $i$. We will now show the induction step.

	For each $i$, let $d_i=\dim_{\F_i}(V_i).$ Fix some basis $y(i,1),y(i,2),\dots,y(i,d_i)\in \K^*$ which span $V_i$ with coefficients in $\F_i=\K_{a_i}$. WLOG, we can assume that $y(i,1)=1$ for every $i$, by reassigning $a_i=\conj{a_i}{y(i,1)}$.

	Fix some conjugacy class $i^*$ s.t. $d_{i*}\ge 1$. Since $f(a_{i^*})=0,$ we can write $f(t)=h(t)(t-a_{i^*})$ where $\deg(h)=d-1.$ Now let $A_i'$ be the roots of $h$ in conjugacy class $i$ and $V_i'=\{y:\conj{a_i}{y}\in A_i'\}\cup \{0\}$. We claim that $\dim_{\F_i}(V_i')\ge \dim_{\F_i}(V_i)$ for every $i\ne i^*$ and $\dim_{\F_{i^*}}(V_{i^*}')\ge \dim_{\F_{i^*}}(V_{i^*})-1$. By induction $\sum_i \dim_{\F_i}(V_i') \le d-1$. Therefore we have $\sum_i \dim_{\F_i}(V_i) \le d.$ We will now prove the claim in two parts.

	\begin{claim}
		$\dim_{\F_i}(V_i')\ge \dim_{\F_i}(V_i)$ for every $i\ne i^*$.
	\end{claim}
	\begin{proof}
		Fix some conjugacy class $i\ne i^*$. By Lemma~\ref{lem:product_evaluation}, $$f\left(\conj{a_i}{y(i,j)}\right)=h\left(\conj{a_i}{y(i,j)\left(\conj{a_i}{y(i,j)}-a_{i^*}\right)}\right)\left(\conj{a_i}{y(i,j)}-a_{i^*}\right).$$ Since $a_i,a_{i^*}$ are in different conjugacy classes, $\conj{a_i}{y(i,j)}-a_{i^*}\ne 0$. So $b_j=\conj{a_i}{y(i,j)(\conj{a_i}{y(i,j)}-a_{i^*})}$ for $j\in \set{1,\dots,d_i}$ are $d_i$ roots of $h$ in the $i^{th}$ conjugacy class $A_i'$. If we show that $y(i,j)(\conj{a_i}{y(i,j)}-a_{i^*})$ for $j\in \set{1,\dots,d_i}$ are linearly independent over $\F_i$, then this proves the claim.

		Suppose they are not independent. Then there exists $c_1,\dots,c_{d_i} \in \F_i$ s.t. $\sum_{j=1}^{d_i} c_jy(i,j)(\conj{a_i}{y(i,j)}-a_{i^*})=0$. Therefore,
		\begin{align*}
			a_{i^*}\sum_{j=1}^{d_i}c_jy(i,j) &= \sum_{j=1}^{d_i}c_jy(i,j)\cdot \conj{a_i}{y(i,j)}\\
								 &= \sum_{j=1}^{d_i}c_jy(i,j) \cdot \conj{a_i}{c_jy(i,j)} \tag{$c_j\in \F_i=\K_{a_i}$}\\
								 &= \left(\sum_{i=1}^{d_i}c_jy(i,j)\right)\conj{a_i}{\left(\sum_{j=1}^{d_i} c_jy(i,j)\right)} \tag{$\conj{a}{x+y}(x+y)=\conj{a}{x}x+\conj{a}{y}y$ for all $x,y\in \K^*$}
		\end{align*}
		Since $y(i,1),\dots,y(i,d_i)$ are independent over $\F_i$, $\sum_{j=1}^{d_i} c_jy(i,j)\ne 0.$ Therefore $\conj{a_i}{\left(\sum_{j=1}^{d_i} c_jy(i,j)\right)}=a_{i^*}$.
		This is a contradiction because $a_i,a_{i^*}$ are in different conjugate classes.		 	
	\end{proof}

	\begin{claim}
		$\dim_{\F_{i^*}}(V_{i^*}')\ge \dim_{\F_{i^*}}(V_{i^*})-1$.
	\end{claim}
	\begin{proof}
		 The proof is exactly similar to that of the previous claim, up until the last.
		 Let $j\in \set{2,3,\dots,d_{i^*}}$. By Lemma~\ref{lem:product_evaluation}, $$f\left(\conj{a_{i^*}}{y(i^*,j)}\right)=h\left(\conj{a_{i^*}}{y(i^*,j)\left(\conj{a_{i^*}}{y(i^*,j)}-a_{i^*}\right)}\right)\left(\conj{a_{i^*}}{y(i^*,j)}-a_{i^*}\right).$$ Since $y(i^*,1)=1$ and $y(i^*,j)$ are linearly independent over $\F_{i^*}$, $y(i^*,j)\notin \F_{i^*}$. Therefore $\conj{a_{i^*}}{y(i^*,j)}-a_{i^*}\ne 0$. So $b_j=\conj{a_{i^*}}{y(i^*,j)(\conj{a_{i^*}}{y(i^*,j)}-a_{i^*})}$ for $j\in \set{2,\dots,d_{i^*}}$ are $d_{i^*}-1$ roots of $h$ in the ${i^*}^{th}$ conjugacy class $A_{i^*}'$. If we show that $y(i^*,j)(\conj{a_{i^*}}{y(i^*,j)}-a_{i^*})$ for $j\in \set{2,\dots,d_{i^*}}$ are linearly independent over $\F_{i^*}$, then this proves the claim.

		 Suppose they are not independent. Then there exists $c_2,\dots,c_{d_{i^*}} \in \F_{i^*}$ s.t.
                 \[ \sum_{j=2}^{d_{i^*}} c_jy(i^*,j)(\conj{a_{i^*}}{y(i^*,j)}-a_{i^*})=0 \ . \]
                 Therefore,
		\begin{align*}
			a_{i^*}\sum_{j=2}^{d_{i^*}}c_jy(i^*,j) &= \sum_{j=2}^{d_{i^*}}c_jy(i^*,j)\cdot \conj{a_{i^*}}{y(i^*,j)}\\
								 &= \sum_{j=2}^{d_{i^*}}c_jy(i^*,j) \cdot \conj{a_{i^*}}{c_jy(i^*,j)} \tag{$c_j\in \F_{i^*}=K_{a_{i^*}}$}\\
								 &= \left(\sum_{j=2}^{d_{i^*}}c_jy(i^*,j)\right)\conj{a_{i^*}}{\left(\sum_{j=2}^{d_{i^*}} c_jy(i^*,j)\right)} \tag{$\conj{a}{x+y}(x+y)=\conj{a}{x}x+\conj{a}{y}y$ for all $x,y\in \K^*$}
		\end{align*}
		Since $y(i^*,1),\dots,y(i^*,d_{i^*})$ are independent over $\F_{i^*}$, $\sum_{j=2}^{d_{i^*}} c_jy(i^*,j)\ne 0.$ Therefore
                \[ \conj{a_{i^*}}{\left(\sum_{j=2}^{d_{i^*}} c_jy(i^*,j)\right)}=a_{i^*} \ , \]
                and thus $\sum_{j=2}^{d_{i^*}} c_jy(i^*,j) \in K_{a_{i^*}}=\F_{i^*}$. But this contradicts the fact that
                \[ \set{y(i^*,1)=1,y(i^*,2),\dots,y(i^*,d_{i^*})} \] are linearly independent over $\F_{i^*}.$		
	\end{proof}
	
	The above two claims finish the proof of Theorem~\ref{thm:fundamentalthm_roots_skewpolynomials}.	
\end{proof}


\newcommand{\tbeta}{\widetilde{\beta}}
\section{Constructions of MR LRCs where global parities are outside local groups}

Sometimes, it is better to keep the global parities outside the local groups, i.e., the global/heavy parities do not participate in any local groups. For a given length of the code, this reduces the size of local groups and therefore improves the reconstruction performance (at the cost of slight decrease in durability). Figure~\ref{Fig:LRC_global_outside} shows such an MR LRC. The encoding is done by partitioning the $k$ data symbols into $g$ local groups of size $r-a$ each and adding `$a$' local parities per local group. There are a total of $g$ local groups. Further an additional $h$ global parity checks are added which are placed outside the local groups. The length of the code is therefore $n=k+h+a \cdot \frac{k}{r-a}.$
\begin{figure}[h]
\includegraphics[scale=0.45,trim={0cm 0cm 0cm 0cm},clip]{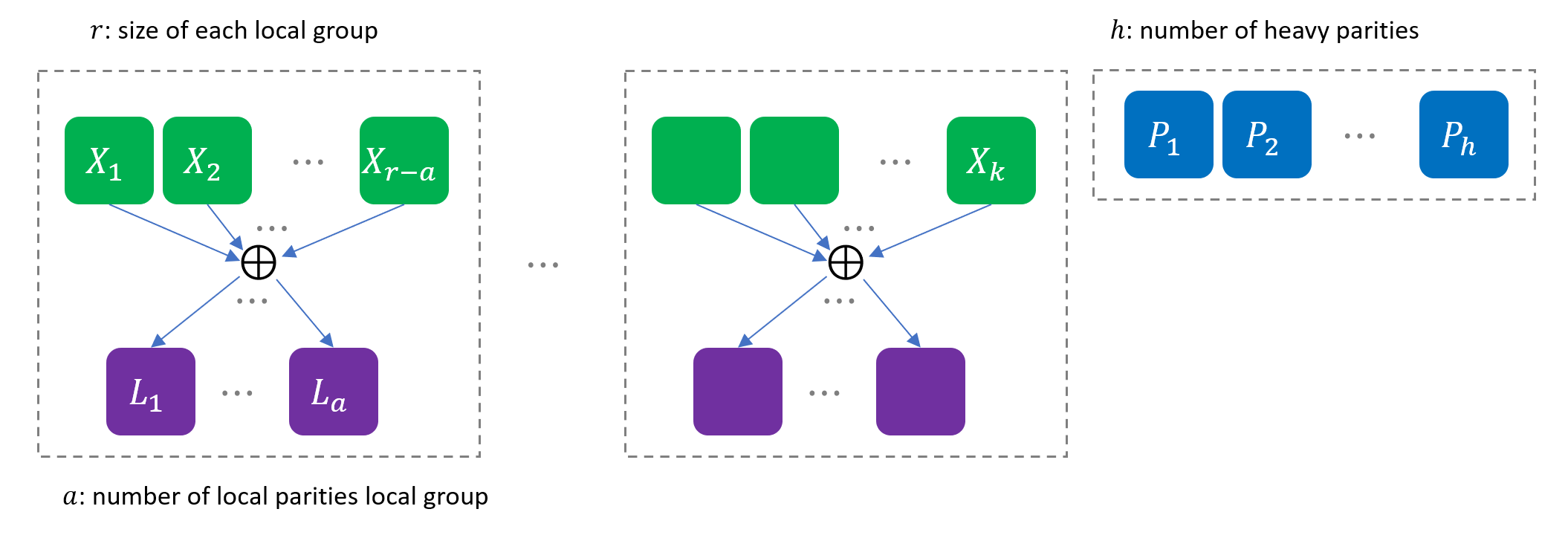}
\caption{An LRC with $k$ data symbols, $h$ global/heavy parities and `$a$' local parities per local group. The global parities are outside the local groups. The length of the code $n=k+h+a \cdot \frac{k}{r-a}$.}
\label{Fig:LRC_global_outside}
\end{figure}
The parity check matrix of an $(n,r,h,a,q)$-MR LRC where the global parities are outside local groups is of the following form:

\begin{equation}\label{fig:MRtopology_global_outside}
H=
\left[
\begin{array}{c|c|c|c|c}
A_1 & 0 & \cdots & 0 & 0\\
\hline
0 &A_2 & \cdots & 0 & 0\\
\hline
\vdots & \vdots & \ddots & \vdots & \vdots \\
\hline
0 & 0 & \cdots & A_g  & 0\\
\hline
B_1 & B_2 & \cdots & B_g & B_{global}\\
\end{array}
\right].
\end{equation}
Here $g=n/r$ is the number of local groups. $A_1,A_2,\dots,A_g$ are $a\times r$ matrices over $\F_q$ which correspond to the local parity checks that each local group satisfies. $B_1,B_2,\dots,B_g$ are $h\times r$ matrices over $\F_q$ and $B_{global}$ is a $h\times h$ matrix; together they represent the $h$ global parity checks that the codewords should satisfy.

The set of correctable erasure patterns correctable by such an MR LRC are exactly those obtained by erasing `$a$' symbols per local group and $h$ additional symbols arbitrarily. Our constructions can be easily modified to obtain the constructions in this setting as well. For simplicity, we will only state the theorems for the case when $h\le r-a$, since this is the regime that is commonly used in practice. The constructions can be easily modified to also work when $h>r-a$.

\begin{theorem}
	\label{thm:main_construction_globaloutside}
	Suppose $h\le r-a$. Let $q_0$ be any prime power such that one of the following is true:
	\begin{enumerate}
		\item $q_0\ge \max\{g+2,r-1\}$ or
		\item $q_0\ge \max\{g+1,r+\ceil{(h-1)/g}-1\}$
	\end{enumerate}
	 where $g=n/r$ is the number of local groups. Then there exists an explicit $(n,r,h,a,q)$-MR LRC with $q=q_0^h$.
\end{theorem}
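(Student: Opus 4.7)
The plan is to augment the construction of Claim~\ref{claim:MRLRC_main} with an additional $h\times h$ block $B_{\mathrm{global}}$ corresponding to the $h$ global coordinates that now lie outside every local group, and to argue that full rank of the parity-check submatrix selected by any correctable erasure pattern follows from essentially the same Schur-complement plus skew-Vandermonde analysis. Since the hypothesis $h\le r-a$ forces $m=\min\{h,r-a\}=h$, all computations are carried out in the skew ring $\F_{q_0^h}[t;\sigma]$ with $\sigma(x)=x^{q_0}$, exactly as in Section~\ref{sec:constructions_main}; the matrices $A_\ell$, $\beta_i$ and $B_\ell$ are kept as in \eqref{eq:A-def}, \eqref{eq:beta-def}, \eqref{eq:choice-of-B}, and only the definition of $B_{\mathrm{global}}$ (together with a mild adjustment to the pool of $\alpha$-nodes in case~(2)) has to be supplied.

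For case~(1), the bound $q_0\ge g+2$ provides at least one nonzero conjugacy class of $\F_{q_0^h}$ beyond the $g$ already consumed by the local groups. I would pick a representative $\gamma^{g+1}$ in that fresh class and a family $\tbeta_1,\dots,\tbeta_h\in \F_{q_0^h}$ forming an $\F_{q_0}$-basis of $\F_{q_0^h}$, and take $B_{\mathrm{global}}$ to be the skew-Vandermonde whose $j$-th column is $\bigl(N_i(\conj{\gamma^{g+1}}{\tbeta_j})\bigr)_{i=0}^{h-1}$. For case~(2), the bound $q_0\ge r+\lceil (h-1)/g\rceil-1$ instead lets me reuse the $g$ existing local-group classes by enlarging each class's pool of base-field Vandermonde nodes beyond $\{\alpha_1,\dots,\alpha_r\}$ by up to $\lceil (h-1)/g\rceil$ additional points of $\F_{q_0}$ (the ``$-1$'' being the same fix-one-coordinate trick used at the end of Section~\ref{sec:constructions_main}); I would then distribute the $h$ columns of $B_{\mathrm{global}}$ among the $g$ conjugacy classes in a balanced pre-assignment and define each extra conjugator by the same formula $\tbeta=(\alpha^a,\alpha^{a+1},\dots,\alpha^{a+h-1})$ that was used for the local $\beta_i$.

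The verification mirrors the proof of Claim~\ref{claim:MRLRC_main}. For a correctable erasure pattern with ``$a$'' per local group and $h$ extras split as $s$ in the global block and $h-s$ among the local groups, doing the Schur complement against each $A_\ell(S_\ell)$ reduces full rank of $H(E)$ to full rank of an $h\times h$ matrix
\[
C=\bigl[\,D_1\mid D_2\mid\cdots\mid D_t\mid B_{\mathrm{global}}(S_{\mathrm{global}})\,\bigr].
\]
This $C$ is a skew-Vandermonde whose evaluation points lie in at most $g+1$ (case~(1)) or $g$ (case~(2)) distinct conjugacy classes, so Lemma~\ref{lem:rank_Vandermonde} reduces everything to $\F_{q_0}$-linear independence of the conjugators within each participating class. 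For a local-group class this independence follows, exactly as in the original proof, from the classical Vandermonde matrix $F_i$ with $a+h$ rows and the (possibly enlarged) distinct $\alpha$-nodes being full rank over $\F_{q_0}$; for the fresh global class in case~(1) it is immediate since $\tbeta_1,\dots,\tbeta_h$ were chosen as an $\F_{q_0}$-basis.

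The main obstacle is the bookkeeping in case~(2), where a single conjugacy class can now carry both locally-indexed and globally-indexed columns after the Schur reduction. One must verify that the per-class count $|T_i|+c_i$ (which sums to $h$ over all classes) never exceeds the $a+h$-row capacity of the extended Vandermonde $F_i$, and that the pre-assignment of global columns to classes can be made with $\max_i c_i^{\mathrm{fixed}}\le \lceil (h-1)/g\rceil$ so that $r+c_i^{\mathrm{fixed}}\le q_0+1$ holds simultaneously for every $i$; this is precisely the content of the hypothesis on $q_0$. A small additional check is that global columns are never touched by the Schur complement (they do not participate in any $A_\ell$), so they pass unchanged through the reduction and can be treated cleanly by the conjugacy-class analysis of Lemma~\ref{lem:rank_Vandermonde}.
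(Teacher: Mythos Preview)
Your treatment of case~(1) is correct and matches the paper's construction: one fresh conjugacy class for the $h$ global columns, with any $\F_{q_0}$-basis $\tbeta_1,\dots,\tbeta_h$ as conjugators, and the proof goes through exactly as in Claim~\ref{claim:MRLRC_main}.

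Case~(2), however, has a genuine gap. You propose $\tbeta=(\alpha^a,\dots,\alpha^{a+h-1})$ for fresh nodes $\alpha$, and you want the per-class independence of the columns of $[D_i\mid \tbeta_j]$ to follow from full-rankness of the enlarged classical Vandermonde $F_i$. But these are not the same thing. After the Schur complement the matrix whose full column rank you need is
\[
W_i=\begin{bmatrix} A_i(S_i) & A_i(T_i) & 0\\[2pt] \beta(S_i) & \beta(T_i) & \tbeta(\text{glob}_i)\end{bmatrix},
\]
with a \emph{zero} block above the global columns (precisely because, as you correctly note, the global coordinates do not participate in any $A_\ell$). This is not the enlarged Vandermonde; it differs from it in the top-right block, and it can be singular even when all $\alpha$'s are distinct. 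Concretely, take $a=1$, $h=2$, $|S_i|=|T_i|=1$ and one global column in this class; then
\[
W_i=\begin{bmatrix}1&1&0\\ \alpha_1&\alpha_2&\alpha_3\\ \alpha_1^2&\alpha_2^2&\alpha_3^2\end{bmatrix},\qquad \det W_i=\alpha_3(\alpha_2-\alpha_1)\bigl(\alpha_3-(\alpha_1+\alpha_2)\bigr),
\]
which vanishes whenever $\alpha_3=\alpha_1+\alpha_2$. Equivalently, $[D_i\mid\tbeta]$ has columns $(\alpha_2-\alpha_1,\alpha_2^2-\alpha_1^2)^T$ and $(\alpha_3,\alpha_3^2)^T$, which are dependent for that same choice. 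So the monomial choice of $\tbeta$ does not work once a conjugacy class carries both local and global columns.

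The paper handles this by \emph{not} taking $\tbeta_j$ to be monomial vectors. Instead it first builds an $(a+h)\times(r+t)$ MDS matrix over $\F_{q_0}$ (Reed--Solomon, then row operations) of the special shape
\[
H_0=\begin{bmatrix} A_{a\times r} & 0_{a\times t}\\ \beta_{h\times r} & \tbeta_{h\times t}\end{bmatrix},
\]
where $t=\lceil (h-1)/g\rceil$, and then reads off $A$, $\beta$, $\tbeta$ from this. The point is that with this choice each $W_i$ is literally a column-submatrix of $H_0$, hence has independent columns by MDS-ness; the zero top-right block is baked into the MDS matrix rather than being an afterthought. The last global column is taken to be $e_h=(0,\dots,0,1)^T$ (it lives in the zero conjugacy class and is automatically independent from everything else), which is why only $h-1$ columns need to be distributed and why the bound involves $\lceil (h-1)/g\rceil$. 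Your balanced pre-assignment idea is exactly what the paper does; it is only the choice of $\tbeta_j$ that has to change.
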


MR LRCs used in practice typically have only one local parity per local group, i.e., $a=1$~\cite{HuangSX}. We can further improve the construction from Theorem~\ref{thm:main_construction_globaloutside} in this regime.

\begin{theorem}
	\label{thm:construction_a1_globaloutside}
	Suppose $h\le r-a$ and the number of local parities $a=1$. Choose a prime power $q_0$ and a positive integer $n_0$ such that one of the following is true:
	\begin{enumerate}
	 	\item  $q_0 \ge g+2$ and $n_0=r$ or
	 	\item $q_0 \ge g+1$ and $n_0=r+\ceil{\frac{h-1}{g}}$.
	 \end{enumerate}  Suppose there exists an $[n_0,n_0-c,d]_{\F_{q_0}}$ linear code $C_0$ where $c$ is its codimension and minimum distance $d \ge h+2$. Further we need the dual code $C_0^\perp$ to have a codeword of weight exactly $r$.\footnote{This is equivalent to $C_0$ having a parity check matrix containing a row with exactly $r$ non-zero entries.} Then there exists an explicit $(n,r,h,a=1,q)$-MR LRC with field size $$q=q_0^{c-1}$$ where the global parities are outside the local groups.
\end{theorem}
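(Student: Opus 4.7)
The plan is to adapt the construction of Theorem~\ref{thm:construction_a1} to accommodate the additional $B_{global}$ block in the parity check matrix~\eqref{fig:MRtopology_global_outside}. As a first step I will reuse the inner code $C_0$: since $C_0^\perp$ contains a weight-$r$ codeword, after scaling and permuting columns of a $c\times n_0$ parity check matrix $H_0$ I may assume that its first row equals $(1,\ldots,1,0,\ldots,0)$ with $r$ leading ones. Let $\tilde{H}_0$ denote the $(c-1)\times n_0$ submatrix formed by the remaining rows, viewed columnwise as elements of $\F_{q_0^{c-1}}\cong\F_{q_0}^{c-1}$. I will work in the skew polynomial ring $\F_{q_0^{c-1}}[t;\sigma]$ with $\sigma(x)=x^{q_0}$, fixing a generator $\gamma$ of $\F_{q_0^{c-1}}^*$ so that the $q_0-1$ nonzero conjugacy classes under $\sigma$ are represented by $1,\gamma,\ldots,\gamma^{q_0-2}$, each with centralizer $\F_{q_0}$.

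For case~1 ($q_0\ge g+2$, $n_0=r$) I will use $g+1$ distinct nontrivial conjugacy classes: $\gamma^\ell$ for local group $\ell\in[g]$ and $\gamma^0=1$ for the global-parity block. Set $A_\ell=(1,1,\ldots,1)$, let $\beta_1,\ldots,\beta_r$ be the columns of $\tilde{H}_0$, and define $B_\ell$ as the skew-Vandermonde matrix whose $i$-th column is $(N_j(\conj{\gamma^\ell}{\beta_i}))_{j=0}^{h-1}$, mirroring~\eqref{eq:choice-of-B}. For $B_{global}$ I will use an analogous $h\times h$ skew-Vandermonde matrix based at $\gamma^0$ evaluated at any $h$ elements of $\F_{q_0^{c-1}}$ that are $\F_{q_0}$-linearly independent; such elements exist because Singleton applied to $C_0$ together with $d\ge h+2$ forces $c-1\ge h$. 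For case~2 ($q_0\ge g+1$, $n_0=r+\lceil(h-1)/g\rceil$) I will recycle the $g$ available nontrivial classes: class $\gamma^\ell$ hosts the $r$ local columns of group $\ell$ together with a batch of global-parity columns evaluated at the extra $\lceil(h-1)/g\rceil$ coordinates supplied by the enlarged $\tilde{H}_0$ in that class, distributing the $h$ global slots among the $g$ classes by pigeonhole.

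To prove the MR property I will follow the Schur-complement argument from Claim~\ref{claim:MRLRC_main_a1}. Given an erasure pattern with one pivot $s_\ell$ per local group and $h$ additional erasures split into $T_1,\ldots,T_g$ (within local groups) and $T_{gl}$ (within the global block), the nonzero scalar $A_\ell(S_\ell)=1$ lets me use $\F_{q_0}$-coefficient column operations to zero out every $T_\ell$ column in the $A_\ell$-rows, replacing $B_\ell(T_\ell)$ by its Schur complement $D_\ell$; the columns of $B_{global}$ are unaffected since they vanish in every $A_\ell$-row. Full rank of $H(E)$ thus reduces to full rank of $[\,D_1\mid\cdots\mid D_g\mid B_{global}(T_{gl})\,]$, a skew-Vandermonde collection whose columns in each conjugacy class are $\F_{q_0}$-linear combinations of the $\beta_i$. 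By Lemma~\ref{lem:rank_Vandermonde} this reduces to showing that, within each conjugacy class, the involved parameters are $\F_{q_0}$-linearly independent, which follows from the minimum-distance property $d\ge h+2$ of $C_0$ applied to the at most $h+1$ columns of $H_0$ (including the all-ones row, which is accounted for via the $A_\ell$-reduction) that participate in any single class.

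The hard part will be case~2, where a single conjugacy class $\gamma^\ell$ simultaneously hosts the local columns of group $\ell$ and a slice of the global-parity columns, forcing independence to hold for the merged collection after Schur reduction. I plan to handle this by viewing the all-ones row together with $\tilde{H}_0$ as the full parity check matrix of $C_0$ on $n_0=r+\lceil(h-1)/g\rceil$ coordinates, splitting its columns across the $g$ classes so that at most $h+1$ columns of $H_0$ participate in any one class after merging; the distance bound $d\ge h+2$ then provides the required $\F_{q_0}$-linear independence, and the choice $\lceil(h-1)/g\rceil$ is what the pigeonhole distribution of the $h$ global-parity slots (after accounting for one slot that can be absorbed into an existing class without an extra coordinate) demands.
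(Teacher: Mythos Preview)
Your plan matches the paper's construction almost exactly: an extra conjugacy class for $B_{global}$ in Case~1, and in Case~2 the $t=\lceil(h-1)/g\rceil$ additional columns of $H_0$ are used to fold the global columns into the $g$ existing classes, with the Schur-complement reduction and the bound $d\ge h+2$ giving $\F_{q_0}$-independence of the at most $h+1$ relevant $H_0$-columns in each class.

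Two points to sharpen. First, your stated formula for the $i$-th column of $B_\ell$, namely $\bigl(N_j({}^{\beta_i}\gamma^\ell)\bigr)_{j}$, is not what \eqref{eq:choice-of-B} actually gives: the correct column is $\bigl(N_j({}^{\beta_i}\gamma^\ell)\,\beta_i\bigr)_{j}$, and that extra factor $\beta_i$ is precisely what makes the map $y\mapsto N_j({}^{y}\gamma^\ell)\,y$ $\F_{q_0}$-linear (Lemma~\ref{lem:sum_of_conjugates}), so that $\F_{q_0}$-column operations in the Schur step preserve the skew-Vandermonde form. Without it the reduction to Lemma~\ref{lem:rank_Vandermonde} breaks. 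Second, your ``one slot absorbed without an extra coordinate'' in Case~2 is too vague: with $t=\lceil(h-1)/g\rceil$ one has only $gt\ge h-1$ global slots across the classes, possibly one short. The paper handles the missing column by adjoining the single vector $e_h=(0,\dots,0,1)^T$ to $B_{global}$; this column sits outside every conjugacy-class block, and when it is among the erased positions one uses it to eliminate the last row, reducing to an $(h-1)\times(h-1)$ skew-Vandermonde system to which the same independence argument applies.
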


\subsection{Construction: Proof of Theorem~\ref{thm:main_construction_globaloutside}}

Let us recall the parity check matrix of an $(n,r,h,a,q)$-LRC where the global parities are outside local groups is of the form given in (\ref{fig:MRtopology_global_outside}). By Proposition~\ref{prop:MR_LRC_paritycheck}, $C$ is an MR LRC iff (1) any `$a$' columns of each matrix $A_i$ are linearly independent and (2) any submatrix of $H$ formed by selecting $a$ columns in each local group and any $h$ additional columns is full rank.

\noindent \textbf{Case 1: } Let ${q_0}\ge \max\{g+2,r-1\}$ be a prime power.\\
Since we have one extra conjugacy class (note that $q_0-1 \ge g+1$), we will use it to define $B_{global}$. Let $H_0$ be an $(a+h)\times r$ MDS matrix over $\F_{q_0}$ which can be constructed using a Reed-Solomon code. Partition $H_0$ as follows:
 $$
 H_0=\left[\begin{array}{c}
 	A_{a\times r}\\
 	\hline
 	\beta_{h \times r}
 \end{array}\right].
 $$
Define $A_1,A_2,\dots,A_g=A$ where $A$ is formed by the first $a$ rows of $H_0$. Let $\beta_1,\beta_2,\dots,\beta_r \in \F_{q_0}^h$ be the columns of $\beta_{h\times r}$, i.e., 
$$
\beta=\begin{bmatrix}
	\beta_1 & \beta_2 &\cdots & \beta_r
\end{bmatrix}.
$$
Let $\tbeta_i = e_i \in F_{q_0}^h$ for $1\le i \le h$ where $e_1,e_2,\dots,e_h$ are coordinate basis vectors.
Note that $\beta_1,\dots,\beta_r$ and $\tbeta_1,\dots,\tbeta_h$ can also be thought of as elements of $\F_{q_0^h}$ by fixing some basis of $\F_{q_0^h}$ as a vector space over $\F_{q_0}.$ 

Define for $1\le \ell \le g,$
\begin{equation}
\label{eq:choice-of-B-globaloutside}
B_\ell=
\begin{bmatrix}
	 	\beta_1& \beta_2 & \dots & \beta_{r}\\
	 	\gamma^{\ell}\beta_1^{{q_0}}& \gamma^{\ell}\beta_2^{{q_0}}& \cdots & \gamma^{\ell}\beta_{r}^{{q_0}}\\
	 	\gamma^{\ell(1+{q_0})}\beta_1^{q_0^2}& \gamma^{\ell(1+{q_0})}\beta_2^{q_0^2}& \cdots & \gamma^{\ell(1+{q_0})}\beta_{r}^{q_0^2}\\
	 	\vdots & \vdots &  &\vdots\\
	 	\gamma^{\ell(1+{q_0}+\dots+q_0^{h-2})}\beta_1^{q_0^{h-1}}& \gamma^{\ell(1+{q_0}+\dots+q_0^{h-2})}\beta_2^{q_0^{h-1}}& \cdots & \gamma^{\ell(1+{q_0}+\dots+q_0^{h-2})}\beta_{r}^{q_0^{h-1}}
	 \end{bmatrix},
\end{equation}
\begin{equation}
\label{eq:choice-of-B-case1-globaloutside}
B_{global}=
\begin{bmatrix}
	 	\tbeta_1& \tbeta_2 & \dots & \tbeta_{h}\\
	 	\gamma^{(g+1)}\tbeta_1^{{q_0}}& \gamma^{(g+1)}\tbeta_2^{{q_0}}& \cdots & \gamma^{(g+1)}\tbeta_{h}^{{q_0}}\\
	 	\gamma^{(g+1)(1+{q_0})}\tbeta_1^{q_0^2}& \gamma^{(g+1)(1+{q_0})}\tbeta_2^{q_0^2}& \cdots & \gamma^{(g+1)(1+{q_0})}\tbeta_{h}^{q_0^2}\\
	 	\vdots & \vdots &  &\vdots\\
	 	\gamma^{(g+1)(1+{q_0}+\dots+q_0^{h-2})}\tbeta_1^{q_0^{h-1}}& \gamma^{(g+1)(1+{q_0}+\dots+q_0^{h-2})}\tbeta_2^{q_0^{h-1}}& \cdots & \gamma^{(g+1)(1+{q_0}+\dots+q_0^{h-2})}\tbeta_{h}^{q_0^{h-1}}
	 \end{bmatrix}.
\end{equation}

\begin{claim}
	The above construction is an MR LRC over fields of size ${q}=q_0^{h}.$
\end{claim}
\begin{proof}[Proof Sketch]
The proof is very similar to that of Claim~\ref{claim:MRLRC_main}. The only difference is that some of the $h$ additional erasures can happen in the global parities. Since we defined $B_{global}$ so that it belongs to a conjugacy class distinct from those of $B_1,B_2,\dots,B_g$, the proof follows similarly.
\end{proof}

\noindent \textbf{Case 2: } Let ${q_0}\ge \max\{g+1,r+\ceil{\frac{h-1}{g}}\}$ be a prime power.\\
In this case, we don't have an extra (non-zero) conjugacy class to define $B_{global}$. Therefore, we will partition $B_{global}$ into $g$ parts and fold in the parts into the existing $g$ conjugacy classes. Note that we can always include the last column of $B_{global}$ as $[0,0,\dots,1]^T$. Therefore we only need to fold in $h-1$ columns of $B_{global}$ into existing $g$ conjugacy classes. Let $t=\ceil{\frac{h-1}{g}}$ and let $n_0 = r + t$.
Let $H_0$ be an $(a+h)\times n_0$ MDS matrix over $\F_{q_0}$ of the following form
 \begin{equation}
 \label{eqn:H0_def}
 H_0=\left[\begin{array}{c|c}
 	A_{a\times r} & 0\\
 	\hline
 	\beta_{h \times r} & \tbeta_{h \times t}
 \end{array}\right].
 \end{equation}
Note that we can construct an MDS matrix of this form, by first starting with a Reed-Solomon MDS matrix over $\F_{q_0}$ and doing row operations to get this form. Moreover, note that the matrix $A_{a\times r}$ is itself an MDS matrix.

Define $A_1,A_2,\dots,A_g=A$. Let $\beta_1,\beta_2,\dots,\beta_r \in \F_{q_0}^h$ be the columns of $\beta_{h\times r}$, i.e., 
$$
\beta=\begin{bmatrix}
	\beta_1 & \beta_2 &\cdots & \beta_r
\end{bmatrix}.
$$ and define $\tbeta_1,\dots,\tbeta_t\in \F_{q_0}^h$ be the columns of $\tbeta_{h\times t}$.
Note that $\beta_1,\dots,\beta_r$ and $\tbeta_1,\dots,\tbeta_t$ can also be thought of as elements of $\F_{q_0^h}$ by fixing some basis of $\F_{q_0^h}$ as a vector space over $\F_{q_0}.$

Define for $1\le \ell \le g,$ $B_\ell$ as in (\ref{eq:choice-of-B-globaloutside}) and
\begin{equation}
\label{eq:choice-of-B-case2-globaloutside}
B_{global}^\ell=
\begin{bmatrix}
	 	\tbeta_1& \tbeta_2 & \dots & \tbeta_{t}\\
	 	\gamma^{\ell}\tbeta_1^{{q_0}}& \gamma^{\ell}\tbeta_2^{{q_0}}& \cdots & \gamma^{\ell}\tbeta_{t}^{{q_0}}\\
	 	\gamma^{\ell(1+{q_0})}\tbeta_1^{q_0^2}& \gamma^{\ell(1+{q_0})}\tbeta_2^{q_0^2}& \cdots & \gamma^{\ell(1+{q_0})}\tbeta_{t}^{q_0^2}\\
	 	\vdots & \vdots &  &\vdots\\
	 	\gamma^{\ell(1+{q_0}+\dots+q_0^{h-2})}\tbeta_1^{q_0^{h-1}}& \gamma^{\ell(1+{q_0}+\dots+q_0^{h-2})}\tbeta_2^{q_0^{h-1}}& \cdots & \gamma^{\ell(1+{q_0}+\dots+q_0^{h-2})}\tbeta_{t}^{q_0^{h-1}}
	 \end{bmatrix}.
\end{equation}
Define 
\begin{equation}
\label{eq:choice-of-tBglobal-case2}
	\widetilde{B}_{global}=
\left[\begin{array}{c|c|c|c|c}
	B_{global}^1 & B_{global}^2 & \cdots & B_{global}^g & e_h
\end{array}
\right]
\end{equation}

where $e_h \in F_{q_0}^h$ is the coordinate vector $e_h=[0,0,\dots,0,1]^T$. Note that $\widetilde{B}_{global}$ is an $h\times (gt+1)$ matrix and $gt+1 \ge h$. Finally define $B_{global}$ to be an $h\times h$ matrix formed by arbitrary $h$ columns of $\widetilde{B}_{global}$.

\begin{claim}
	The above construction is an MR LRC over fields of size ${q}=q_0^{h}.$
\end{claim}
\begin{proof}[Proof Sketch]
The proof is very similar to that of Claim~\ref{claim:MRLRC_main}. The only difference is that some of the $h$ additional erasures can happen in the global parities. We will need to crucially use the fact the top right corner of the matrix $H_0$ in (\ref{eqn:H0_def}) used to define $A$'s and $B$'s is zero. Therefore using some `$a$' columns of $A$ to remove the rest of the columns in the upper half of $H_0$, does not affect the $\tbeta$ matrix since the top right corner is already forced to be zero.
\end{proof}

\subsection{Construction: Proof of Theorem~\ref{thm:construction_a1_globaloutside}}

Let us recall the parity check matrix of an $(n,r,h,a,q)$-LRC where the global parities are outside local groups is of the form given in (\ref{fig:MRtopology_global_outside}). By Proposition~\ref{prop:MR_LRC_paritycheck}, $C$ is an MR LRC iff (1) any `$a$' columns of each matrix $A_i$ are linearly independent and (2) any submatrix of $H$ formed by selecting $a$ columns in each local group and any $h$ additional columns is full rank.

\noindent \textbf{Case 1: }  ${q_0}\ge g+2$ and $n_0=r$.\\
Since we have one extra conjugacy class (note that $q_0-1 \ge g+1$), we will use it to define $B_{global}$. Let $C_0$ be an $[n_0,n_0-c,d]_{\F_{q_0}}$ linear code with minimum distance $d\ge h+2$. Let $H_0$ be the parity check matrix of $C_0$ which is a $c \times r$ matrix over $\F_{q_0}$. By the hypothesis that the dual code of $C_0$ has a full weight vector, we can assume that the first row of $H_0$ is all $1$'s vector (scaling the columns if necessary). Partition $H_0$ as follows:
 $$
 H_0=\left[\begin{array}{c c c c}
 	1 & 1 & \cdots & 1\\
 	\hline
 	\beta_1 & \beta_2 & \cdots & \beta_r
 \end{array}\right].
 $$
Define $$A_1,A_2,\dots,A_g=\begin{bmatrix}1&1&\cdots&1\end{bmatrix}.$$ Note that $c\ge h+1$ since any $h+1$ columns of $H_0$ are linearly independent. Therefore we have $h\le c-1$, and so we can define for $1\le i \le h$, 
$$\tbeta_i = e_i$$ where $e_i\in \F_{q_0}^{c-1}$ is the $i^{th}$ coordinate basis vector.

Note that $\beta_1,\dots,\beta_r$ and $\tbeta_1,\dots,\tbeta_h$ can also be thought of as elements of $\F_{q_0^{c-1}}$ by fixing some basis of $\F_{q_0^{c-1}}$ as a vector space over $\F_{q_0}.$

Define for $1\le \ell \le g,$ $B_\ell$ as in (\ref{eq:choice-of-B-globaloutside}) and $B_{global}$ as in (\ref{eq:choice-of-B-case1-globaloutside}).

\begin{claim}
	The above construction is an MR LRC over fields of size ${q}=q_0^{c-1}.$
\end{claim}
\begin{proof}[Proof Sketch]
The proof is very similar to that of Claim~\ref{claim:MRLRC_main}. The only difference is that some of the $h$ additional erasures can happen in the global parities. Since we defined $B_{global}$ so that it belongs to a conjugacy class distinct from those of $B_1,B_2,\dots,B_g$, the proof follows similarly. We will also use the fact that $C_0$ has minimum distance at least $h+2$ and so any $h+1$ columns of $H_0$ are linearly independent.
\end{proof}

\noindent \textbf{Case 2: } $q_0\ge g+1$ and $n_0=r+\ceil{\frac{h-1}{g}}$.\\
In this case, we don't have an extra (non-zero) conjugacy class to define $B_{global}$. Therefore, we will partition $B_{global}$ into $g$ parts and fold in the parts into the existing $g$ conjugacy classes. Note that we can always include the last column of $B_{global}$ as $[0,0,\dots,1]^T$. Therefore we only need to fold in $h-1$ columns of $B_{global}$ into existing $g$ conjugacy classes. Let $t=\ceil{\frac{h-1}{g}}$ and let $n_0 = r + t$.

Let $C_0$ be an $[n_0,n_0-c,d]_{\F_{q_0}}$ linear code with minimum distance $d\ge h+2$. Let $H_0$ be the parity check matrix of $C_0$ which is a $c \times n_0$ matrix over $\F_{q_0}$. By the hypothesis that the dual code of $C_0$ has a vector of weight exactly $r$, we can assume that the first row of $H_0$ has exactly $r$ ones and $t$ zeros (after scaling the columns if necessary). Partition $H_0$ as follows:
 \begin{equation}
 \label{eqn:H0_def_a1}
 H_0=\left[\begin{array}{c c c c| c c c c}
 	1 & 1 & \cdots & 1 & 0 & 0 &\cdots & 0\\
 	\hline
 	\beta_1 & \beta_2 & \cdots & \beta_r & \tbeta_1 & \tbeta_2 & \cdots & \tbeta_t
 \end{array}\right].
 \end{equation}
Define $$A_1,A_2,\dots,A_g=\begin{bmatrix}1&1&\cdots&1\end{bmatrix}.$$
Note that $\beta_1,\dots,\beta_r$ and $\tbeta_1,\dots,\tbeta_t$ can also be thought of as elements of $\F_{q_0^{c-1}}$ by fixing some basis of $\F_{q_0^{c-1}}$ as a vector space over $\F_{q_0}.$
Define for $1\le \ell \le g,$ $B_\ell$ as in (\ref{eq:choice-of-B-globaloutside}), $B_{global}^\ell$ as in (\ref{eq:choice-of-B-case2-globaloutside}) and $\widetilde{B}_{global}$ as in (\ref{eq:choice-of-tBglobal-case2}).

Note that $\widetilde{B}_{global}$ is an $h\times (gt+1)$ matrix and $gt+1 \ge h$. Finally define $B_{global}$ to be an $h\times h$ matrix formed by arbitrary $h$ columns of $\widetilde{B}_{global}$.

\begin{claim}
	The above construction is an MR LRC over fields of size ${q}=q_0^{c-1}.$
\end{claim}
\begin{proof}[Proof Sketch]
The proof is very similar to that of Claim~\ref{claim:MRLRC_main}. The only difference is that some of the $h$ additional erasures can happen in the global parities. We will need to crucially use the fact the top right corner of the matrix $H_0$ in (\ref{eqn:H0_def_a1}) is zero. Therefore using some `$a$' ones to remove the rest of the ones in the upper half of $H_0$, does not affect the $\tbeta$ matrix since the top right corner is already forced to be zero.
\end{proof}

\section{Skew Polynomial Wronskian and Moore matrices}
\label{sec:wronskian-moore}
In this section, we will discuss generalizations of Wronskian and Moore matrices using skew polynomials. The non-singularity of special cases of these matrices has been instrumental in works on list decoding~\cite{GW-derivative-codes,GW13} and algebraic pseudorandomness such as constructions of rank condensers and subspace designs~\cite{FG15,GK-combinatorica,GXY-tams}. We will need the following simple lemmas.

\renewcommand{\L}{\mathbb{L}}
\begin{lemma}
	\label{lem:linear_independence_lowdegree}
	Let $\F(x)$ be the field of rational functions in $x$ and let $\L=\F(x^r)$ which is a subfield of $\F(x)$.\footnote{$\F(x^r)$ is the set of rational functions of the form $f(x^r)$ for $f\in \F(x)$ i.e. rational functions which only have terms whose powers are multiples of $r$.} Let $g_1,g_2,\dots,g_m \in \F[x]^{<r}$ be polynomials of degree strictly less than $r$. Then $g_1,g_2,\dots,g_m$ are $\L$-linearly independent iff they are $\F$-linearly independent.
\end{lemma}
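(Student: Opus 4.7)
One direction is immediate: if $g_1,\dots,g_m$ satisfy a nontrivial $\F$-linear relation then, since $\F\subset\L$, the same relation is a nontrivial $\L$-linear relation. The content of the lemma is the converse, so the plan is to start from a hypothetical nontrivial $\L$-linear relation $\sum_{i=1}^m h_i(x^r)\, g_i(x)=0$ with $h_i\in\L=\F(x^r)$ not all zero, and extract from it a nontrivial $\F$-linear relation among the $g_i$'s.

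First I would clear denominators: multiply through by a common denominator lying in $\F[x^r]$, which does not destroy the property that not all coefficient functions are zero. After this reduction I may assume each $h_i$ is a polynomial in $x^r$, say $h_i(x^r)=\sum_{j\geq 0} c_{ij}\,x^{rj}$ with $c_{ij}\in\F$ and not all $c_{ij}$ zero. Then the relation becomes
\[
\sum_{j\geq 0} x^{rj}\Bigl(\sum_{i=1}^m c_{ij}\,g_i(x)\Bigr)=0
\]
as an identity in $\F[x]$.

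The key step, and the only one that really uses the degree hypothesis, is the following degree-separation argument. For each fixed $j$, the polynomial $\sum_i c_{ij} g_i(x)$ has degree strictly less than $r$, so the nonzero monomials appearing in $x^{rj}\bigl(\sum_i c_{ij}g_i(x)\bigr)$ all have degrees in the interval $[rj,\,rj+r-1]$. These intervals are pairwise disjoint as $j$ varies, so there can be no cancellation between different $j$'s, and I conclude $\sum_i c_{ij}\,g_i(x)=0$ in $\F[x]$ for every $j$. Since by assumption some $c_{ij_0}\ne 0$, this is a nontrivial $\F$-linear relation among the $g_i$'s, completing the proof.

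I do not expect any real obstacle: the only subtlety is justifying the denominator clearing and then noticing that the degree bound $\deg g_i<r$ exactly lines up with the grading of $\F[x]$ by residues modulo $r$ induced by viewing $\F[x]$ as a free $\F[x^r]$-module with basis $1,x,\dots,x^{r-1}$. In fact, this last observation gives a slicker reformulation: $\{1,x,\dots,x^{r-1}\}$ is an $\L$-basis of $\F(x)$, so $\F[x]^{<r}$ is mapped injectively into $\F(x)/$-hmm, more cleanly, the natural map $\L\otimes_\F \F[x]^{<r}\to \F(x)$ is injective, hence $\F$-independence in $\F[x]^{<r}$ transfers to $\L$-independence in $\F(x)$. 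I would mention this viewpoint in a remark but present the explicit computation above for concreteness.
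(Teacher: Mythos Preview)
Your proposal is correct and follows essentially the same approach as the paper: clear denominators to reduce to polynomial coefficients in $\F[x^r]$, then use the degree bound $\deg g_i<r$ to separate the relation by residues modulo $r$ and read off an $\F$-linear dependence. The only cosmetic difference is that the paper additionally clears common factors so that the constant terms $c_i(0)$ already give the nontrivial $\F$-relation, whereas you (equivalently) pick any $j$ with some $c_{ij}\ne 0$.
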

\begin{proof}
	One direction is obvious since $\F$ is a subfield of $\L$. To prove the other direction, suppose $g_1,g_2,\dots,g_m$ are $\L$-linearly dependent, i.e., $\sum_i c_i(x^r) g_i(x) = 0$ for some $c_i\in \F(x).$ WLOG, by clearing denominators and common factors, we can assume that $c_i$ are also polynomials (i.e., $c_i\in \F[x]$) with no common factor. By comparing the coefficients of powers of $x$ between $0$ and $r-1$, we immediately get that $\sum_i c_{i}(0)g_i(x) = 0$. Note that all $c_{i}(0)$ cannot be zero simultaneously since then $x$ would be a common factor for all $c_i$. Therefore we get a non-trivial $\F$-linear dependency for $g_1,g_2,\dots,g_m$.
\end{proof}


\begin{lemma}
\label{lem:phi_a_map}
	Let $\K[x;\sigma,\delta]$ be a skew polynomial ring. For $a\in \K$, define $\phi_a: \K \to \K$ as $\phi_a(y)=\sigma(y)a+\delta(y)$. Then
	\begin{enumerate}
		\item $\phi_a^i(y)=N_i(\conj{a}{y})y$ where $\phi_a^i$ is $\phi_a$ composed with itself $i$ times and
		\item $\phi_a$ is a linear map over the subfield $\K_a$.
	\end{enumerate}
\end{lemma}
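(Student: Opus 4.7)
The plan is to establish part~1 by induction on $i$, and then to deduce part~2 as an immediate corollary by recognizing $\phi_a$ as a special case of the operator $D_{f,a}$ from Lemma~\ref{lem:sum_of_conjugates}. The key observation bridging everything is the identity
\[
\phi_a(y) \;=\; \sigma(y)a + \delta(y) \;=\; \bigl(\sigma(y)ay^{-1} + \delta(y)y^{-1}\bigr)\cdot y \;=\; \conj{a}{y}\cdot y
\]
for $y\ne 0$ (and $\phi_a(0)=0$ trivially), which already handles the $i=1$ case since $N_1(b) = \sigma(N_0(b))b + \delta(N_0(b)) = b$.

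For the inductive step of part~1, suppose $\phi_a^{\,i}(y) = N_i(\conj{a}{y})\,y$ and set $w = N_i(\conj{a}{y})$. Applying $\phi_a$, using that $\sigma$ is a ring endomorphism and $\delta$ satisfies the twisted Leibniz rule $\delta(wy) = \sigma(w)\delta(y)+\delta(w)y$, I would compute
\begin{align*}
\phi_a^{\,i+1}(y) \;=\; \phi_a(wy)
&\;=\; \sigma(w)\sigma(y)a \;+\; \sigma(w)\delta(y) \;+\; \delta(w)y \\
&\;=\; \sigma(w)\bigl(\sigma(y)a+\delta(y)\bigr) \;+\; \delta(w)y \\
&\;=\; \sigma(w)\,\conj{a}{y}\,y \;+\; \delta(w)\,y \;=\; \bigl(\sigma(w)\,\conj{a}{y}+\delta(w)\bigr)\,y.
\end{align*}
By Definition~\ref{def:power_functions} applied at $b=\conj{a}{y}$, namely $N_{i+1}(b) = \sigma(N_i(b))\,b + \delta(N_i(b))$, the bracketed coefficient is exactly $N_{i+1}(\conj{a}{y})$, which closes the induction.

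For part~2, specialize part~1 to $i=1$ to rewrite $\phi_a(y) = N_1(\conj{a}{y})\,y = f(\conj{a}{y})\,y$ where $f(t)=t\in \K[x;\sigma,\delta]$. Thus $\phi_a$ coincides with the operator $D_{f,a}$ defined in Lemma~\ref{lem:sum_of_conjugates}, which was already shown there to be $\K_a$-linear. (Alternatively, additivity of $\phi_a$ is immediate from linearity of $\sigma$ and $\delta$, and $\K_a$-scalar invariance can be checked directly from $\sigma(c)a+\delta(c)=ac$ for $c\in \K_a$, but routing through Lemma~\ref{lem:sum_of_conjugates} is cleaner.)

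The only real obstacle is the book-keeping in the inductive step of part~1, i.e.\ making sure the Leibniz rule for $\delta$, the multiplicativity of $\sigma$, and the recursion defining $N_i$ all line up. Once the identity $\phi_a(y)=\conj{a}{y}\cdot y$ is in hand, both parts follow mechanically.
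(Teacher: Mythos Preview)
Your proof is correct and follows essentially the same approach as the paper: an induction for part~1 using the recursion defining $N_i$, multiplicativity of $\sigma$, and the Leibniz rule for $\delta$ (you run the computation from $\phi_a^{i+1}(y)$ toward $N_{i+1}(\conj{a}{y})y$, the paper runs it in the reverse direction, but the algebra is identical), and for part~2 the paper directly verifies $\phi_a(yc)=\phi_a(y)c$ via the same chain you sketch as an alternative, which is equivalent to your invocation of Lemma~\ref{lem:sum_of_conjugates} with $f(t)=t$.
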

\begin{proof}
	(1) This can be proved by induction, it is true for $i=1$. 
	\begin{align*}
	N_{i+1}(\conj{a}{y})y &= \sigma(N_i(\conj{a}{y}))\conj{a}{y}y + \delta(N_i(\conj{a}{y}))y\\
	 &= \sigma(N_i(\conj{a}{y}))(\sigma(y)a+\delta(y)) + \delta(N_i(\conj{a}{y}))y\\
	 & = \sigma(N_i(\conj{a}{y})y)a + \sigma(N_i(\conj{a}{y}))\delta(y) + \delta(N_i(\conj{a}{y}))y\\
	 & = \sigma(N_i(\conj{a}{y})y)a + \delta(N_i(\conj{a}{y})y)\\
	 & = \phi_a (N_i(\conj{a}{y})y) = \phi_a (\phi_a^i(y)) = \phi_a^{i+1}(y).
	 \end{align*}	
	 (2) $\K_a$-linearity follows since $\forall c\in \K_a$, $$\phi_a(yc)=N_1(\conj{a}{yc})yc = N_1(\conj{(\conj{a}{c})}{y})yc=N_1(\conj{a}{y})yc=\phi_a(y)c \ .  \qedhere $$
\end{proof}

Using Lemma~\ref{lem:phi_a_map}, one can linearize the evaluation of skew-polynomials on any conjugacy class. This gives a bijection between evaluation of skew-polynomials on a particular conjugacy class and \emph{linearized polynomials} which found several applications in coding theory and linear-algebraic pseudorandomness~\cite{mahdavifar2013algebraic,guruswami2018lossless,berlekamp2015algebraic}. In fact this is a ring isomorphism and the product operation denoted by $\otimes$ in \cite{mahdavifar2013algebraic} is equivalent to the product operation for skew polynomials in the appropriate skew polynomial ring.

\subsection{Wronskian matrix}
\label{sec:wronskian}
The theory of skew polynomials allows us to calculate rank of Wronskian matrices. Let $\K[x;\delta]$ be a skew-polynomial of derivation type i.e. $\sigma\equiv \Id$ is the identity map.
\begin{definition}[Wronskian]
		Let $c_1,\dots,c_n\in \K^*$. Define the Wronskian $$W_n(c_1,\dots,c_n)=
	\begin{bmatrix}
		c_1 & c_2 &\cdots &c_n\\
		\delta(c_1)& \delta(c_2) & \cdots & \delta(c_n)\\
		\delta^2(c_1)& \delta^2(c_2) & \cdots & \delta^2(c_n)\\
		\vdots & \vdots & & \vdots\\
		\delta^{n-1}(c_1)& \delta^{n-1}(c_2) & \cdots & \delta^{n-1}(c_n)\\
	\end{bmatrix}.$$
\end{definition}

\begin{corollary}
	\label{cor:wronskian}
	 $W_n(c_1,\dots,c_n)$ is full-rank iff $c_1,\dots,c_n$ are linearly independent over $\F=\K_0$, the centralizer of $0.$
\end{corollary}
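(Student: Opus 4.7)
The plan is to reduce the Wronskian to a skew Vandermonde matrix in the conjugacy class of $0$, and then apply Lemma~\ref{lem:rank_Vandermonde} together with Lemma~\ref{lem:sum_of_conjugates}. First I would identify the centralizer: since $\sigma\equiv\Id$, the conjugation formula gives $\conj{0}{c}=\delta(c)c^{-1}$, so $\F=\K_0=\{c\in\K : \delta(c)=0\}$ is precisely the field of $\delta$-constants, and every $\conj{0}{c_i}$ lies in the single conjugacy class of $0$.

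Next I would apply Lemma~\ref{lem:phi_a_map} with $a=0$: noting that $\phi_0(y)=\sigma(y)\cdot 0+\delta(y)=\delta(y)$, the lemma yields $\delta^j(c_i)=\phi_0^j(c_i)=N_j(\conj{0}{c_i})\,c_i$ for every $i,j$. Reading off the entries of the Wronskian, this gives the factorization
\[
W_n(c_1,\dots,c_n)=V_n\bigl(\conj{0}{c_1},\dots,\conj{0}{c_n}\bigr)\cdot\operatorname{diag}(c_1,\dots,c_n).
\]
Because each $c_i\in\K^*$, the diagonal factor is invertible, so $W_n$ is full rank if and only if the skew Vandermonde $V_n(\conj{0}{c_1},\dots,\conj{0}{c_n})$ is.

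For the ``if'' direction, suppose $c_1,\dots,c_n$ are $\F$-linearly independent. Since all the evaluation points lie in the conjugacy class of $0$, Lemma~\ref{lem:rank_Vandermonde} (with $a_1=0$ and twisting elements $c_1,\dots,c_n$) immediately gives that the Vandermonde is full rank, hence so is $W_n$.

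For the converse, suppose there is a nontrivial relation $\sum_{i=1}^n \lambda_i c_i=0$ with $\lambda_i\in\F$, and I claim the corresponding columns of $W_n$ satisfy the same relation. By Lemma~\ref{lem:sum_of_conjugates}, for each $j$ the map $D_{t^j,0}(y)=N_j(\conj{0}{y})\,y$ is $\F$-linear on $\K$; applying it to $\sum_i\lambda_i c_i=0$ yields
\[
\sum_{i=1}^n \lambda_i\,N_j(\conj{0}{c_i})\,c_i \;=\; \sum_{i=1}^n \lambda_i\,\delta^j(c_i)\;=\;0
\]
for every $j=0,1,\dots,n-1$, which is exactly the same $\F$-linear relation among the columns of $W_n$. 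Hence $W_n$ is singular, completing the equivalence. The main (very mild) subtlety is checking that the $\F$-linearity in Lemma~\ref{lem:sum_of_conjugates} applies here even though some $c_i$ may coincide or satisfy $\conj{0}{c_i}=\conj{0}{c_j}$; but the identity above is purely an equation among scalars in $\K$ and holds without any distinctness assumption.
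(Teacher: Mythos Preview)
Your proof is correct and follows essentially the same route as the paper: use Lemma~\ref{lem:phi_a_map} with $a=0$ to rewrite $\delta^j(c_i)=N_j(\conj{0}{c_i})c_i$, factor $W_n$ as a skew Vandermonde times an invertible diagonal, and invoke Lemma~\ref{lem:rank_Vandermonde}. You are simply more thorough than the paper, which states only ``the claim follows from Lemma~\ref{lem:rank_Vandermonde}'' and leaves the (easy) converse implicit; your explicit treatment of that direction via the $\F$-linearity of $y\mapsto N_j(\conj{0}{y})y$ (equivalently, the $\F$-linearity of $\delta^j$) is a nice touch.
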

\begin{proof}
	By Lemma~\ref{lem:phi_a_map}, $\delta^i(c)=N_i(\conj{0}{c})c.$ Thus the claim follows from Lemma~\ref{lem:rank_Vandermonde}.
\end{proof}

Note that when $\delta$ is the formal derivative of polynomials, the above is the usual Wronskian of polynomials. Applying the above corollary in this special case, we can relate the non-singularity of the Wronskian to the linear independence of the polynomials.
\begin{proposition}
	\label{prop:wronskian_linear_independence}
	Let $f_1,f_2,\dots,f_s\in \F[x]$ be polynomials of degree at most $d$. Suppose $\delta^j(f_i)$ is the $j^{th}$ derivative of $f_i$. Define 
	$$M= 
	\begin{bmatrix}
		f_1(x) & f_2(x) &\dots & f_s(x)\\
		\vdots  & \vdots&  & \vdots\\
		\delta^j(f_1)(x) &\delta^j(f_2)(x) & \dots & \delta^j(f_s)(x)\\
		\vdots & \vdots & & \vdots\\
		\delta^{s-1}(f_1)(x) &\delta^{s-1}(f_2)(x) & \dots & \delta^{s-1}(f_s)(x)\\
	\end{bmatrix}.
	$$ Then the following are true:
	\begin{enumerate}
		\item If $\charF(\F)=p$ then\footnote{$\charF(\F)$ is the characteristic of $\F$.}, $\det(M)\ne 0$ iff $f_1,f_2,\dots,f_2$ are linearly independent over $\F(x^p)$.
		\item If $\charF(\F)>d$ or $\charF(\F)=0$ then, $\det(M)\ne 0$ iff $f_1,f_2,\dots,f_s$ are linearly independent over $\F$.
	\end{enumerate}
\end{proposition}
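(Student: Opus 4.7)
The plan is to apply Corollary~\ref{cor:wronskian} to the skew polynomial ring $\K[t;\delta]$ with $\K = \F(x)$ and $\delta$ the formal derivative extended from $\F[x]$ to $\F(x)$ via the quotient rule. In this setting, the matrix $M$ of the proposition is exactly $W_s(f_1,\ldots,f_s)$ in the sense of Section~\ref{sec:wronskian}, viewed as a matrix over the field $\F(x)$. Since $\F(x)$ is the fraction field of $\F[x]$, the polynomial $\det(M) \in \F[x]$ is non-zero iff $M$ is full-rank over $\F(x)$, which by Corollary~\ref{cor:wronskian} is equivalent to $f_1,\ldots,f_s$ being linearly independent over the centralizer subfield $\K_0$.

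The next step is to identify $\K_0$ explicitly. Since $\sigma \equiv \Id$ in a derivation-type skew ring, $\conj{0}{c} = \delta(c) \cdot c^{-1}$ for $c \neq 0$, hence
\[
\K_0 = \{\, c \in \F(x) : \delta(c) = 0 \,\} = \kernel(\delta).
\]
A standard computation with formal derivatives on $\F(x)$ gives $\kernel(\delta) = \F$ when $\charF(\F) = 0$ and $\kernel(\delta) = \F(x^p)$ when $\charF(\F) = p$. Plugging these into the equivalence from the previous paragraph yields part (1) directly, and also yields part (2) in the characteristic-zero case.

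To finish part (2) in the positive-characteristic case, I would invoke Lemma~\ref{lem:linear_independence_lowdegree}. When $\charF(\F) = p > d$, each $f_i$ lies in $\F[x]^{<p}$, so the lemma (applied with $r = p$, $\L = \F(x^p)$) says that $f_1,\ldots,f_s$ are $\F(x^p)$-linearly independent iff they are $\F$-linearly independent. Combined with part (1), this gives part (2). There is no real obstacle; the only thing to be careful about is the identification of $\kernel(\delta)$ in characteristic $p$ (where $\delta(x^{kp}) = 0$ rules out anything stronger than $\F(x^p)$) and the hypothesis $p > d$, which is precisely what is needed so that Lemma~\ref{lem:linear_independence_lowdegree} applies with $r = p$.
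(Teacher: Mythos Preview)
Your proposal is correct and follows essentially the same approach as the paper: apply Corollary~\ref{cor:wronskian} in the skew ring $\F(x)[t;\delta]$, identify the centralizer $\K_0=\kernel(\delta)$ as $\F$ or $\F(x^p)$ depending on the characteristic, and then invoke Lemma~\ref{lem:linear_independence_lowdegree} with $r=p$ to pass from $\F(x^p)$-independence to $\F$-independence when $p>d$. The only thing the paper spells out that you defer to a ``standard computation'' is the verification that $\kernel(\delta)=\F(x^p)$ on $\F(x)$ in characteristic $p$, which it does by writing $g=a/b$ in lowest terms and arguing that $\delta(a)b=a\delta(b)$ forces $\delta(a)=\delta(b)=0$.
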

\begin{proof}
	It is clear that if $f_1,f_2,\dots,f_d$ are linearly dependent over $\F$, then $\det M=0$. Now we will prove the converse.

	Consider the skew polynomial ring defined in Example~\ref{example:skewpolynomialring} where $\K=\F(x), \sigma\equiv \Id$ and $\delta(f)$ is the derivative of $f$. 
	By Corollary~\ref{cor:wronskian}, $\det(M)$ is zero iff $f_1,f_2,\dots,f_s$ are linearly independent over $\K_0,$ the centralizer of $0.$ We have
	$$\K_0=\{g: \conj{0}{g}=0\}\cup\{0\}=\{g: \delta(g)=0\}.$$
	If $\charF(\F)=0,$ then $\K_0=\F$ and we are done. If $\charF(\F)=p$ for some prime $p,$ then we claim below that $\K_0=\F(x^p)$, which finishes the proof using  Lemma~\ref{lem:linear_independence_lowdegree}.
        \end{proof}
	\begin{claim}
		If $\charF(\F)=p$, then $\K_0=\F(x^p).$
	\end{claim}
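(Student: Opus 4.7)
The plan is to prove the two inclusions $\F(x^p) \subseteq \K_0$ and $\K_0 \subseteq \F(x^p)$ separately, using only the quotient rule for $\delta$ and the standard characterization of polynomials with zero derivative in characteristic $p$.

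For the inclusion $\F(x^p) \subseteq \K_0$, I would take any $g = f(x^p)/h(x^p)$ with $f,h \in \F[x]$. Applying the chain rule, $\delta(f(x^p)) = p \, x^{p-1} f'(x^p) = 0$ because $p = 0$ in $\F$, and similarly $\delta(h(x^p)) = 0$. The quotient rule then gives $\delta(g) = 0$, so $g \in \K_0$.

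For the reverse inclusion, I would take $g \in \K_0$ and write $g = u/v$ in lowest terms with $u,v \in \F[x]$ and $\gcd(u,v)=1$. The condition $\delta(g) = 0$ becomes $u'v - uv' = 0$ by the quotient rule, i.e.\ $u'v = uv'$. Since $\gcd(u,v)=1$, $v$ must divide $v'$; but $\deg(v') < \deg(v)$, which forces $v' = 0$, and then $u' = 0$ follows as well. So it suffices to characterize the kernel of $\delta$ on $\F[x]$: if $u(x) = \sum_i a_i x^i$, then $u'(x) = \sum_i i a_i x^{i-1} = 0$ iff $a_i = 0$ whenever $p \nmid i$, i.e.\ $u \in \F[x^p]$. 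Combining this for both $u$ and $v$ shows $g \in \F(x^p)$.

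The main ``obstacle'' is really just being careful about applying the quotient rule in a field of positive characteristic and invoking coprimality correctly; there is no deep issue. Once this claim is established, Lemma~\ref{lem:linear_independence_lowdegree} with $r = p$ translates linear independence over the centralizer $\K_0 = \F(x^p)$ into linear independence over $\F$ for polynomials of degree less than $p$, which completes the proof of Proposition~\ref{prop:wronskian_linear_independence}.
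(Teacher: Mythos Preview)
Your proof is correct and follows essentially the same route as the paper's: reduce the rational case to the polynomial case via a lowest-terms representation and coprimality, then use that $\deg(v') < \deg(v)$ forces $v'=0$ (and hence $u'=0$), together with the standard fact that a polynomial has zero formal derivative in characteristic $p$ iff it lies in $\F[x^p]$. The only cosmetic differences are that you spell out the easy inclusion $\F(x^p)\subseteq\K_0$ explicitly and phrase the key step via the quotient rule rather than the product rule.
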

	\begin{proof}
		$\K_0=\{g\in \F(x): \delta(g)=0\}$. If $g\in \F[x]$, then it is easy to see that $\delta(g)=0$ iff $g\in \F[x^p].$ Now suppose $g$ is a rational function of the form $g=a/b$ where $a,b\in \F[x]$ do not have any common factors. By product rule, $\delta(g)=0\iff \delta(a)b = a \delta(b)$. Since $a,b$ do not have any common factors, this implies that $a$ divides $\delta(a)$ and $b$ divides $\delta(b).$ Since degree of $\delta(a)$ is smaller than $a$, this is not possible unless $\delta(a)=0$ and similarly we can conclude that $\delta(b)=0$. Therefore $a,b\in \F[x^p]$ and so $g\in \F(x^p).$
	\end{proof}
	

Using the above, we can now deduce the following result which is the basis of list-size bound for list decoding univariate multiplicity codes~\cite{GW-derivative-codes} and the analysis of the associated subspace design constructed in \cite{GK-combinatorica}.

\begin{proposition}
	\label{prop:listdecoding_multiplicity}
	Let $\charF(\F)=p$. Let $\delta$ be the derivative operator on polynomials in $\F[x]$ and $\delta^i(\cdot)$ be the $i^{th}$ derivative of a polynomial. Let $Q(x,y_0,y_1,\dots,y_{s-1})=A(x)+\sum_{i=0}^{s-1} A_i(x)y_i$ where $A(x),A_i(x)\in \F[x]$ and not all $A_i$ are zero. The set of all $f\in \F[x]$ of degree less than $p$, such that 
	\begin{equation}
		\label{eqn:Q_multiplicity}
	Q(x,f(x),\delta(f)(x),\dots,\delta^{s-1}(f)(x))=0,	
	\end{equation}
	 form an $\F$-affine subspace of $\F[x]$ of dimension at most $s-1$.
\end{proposition}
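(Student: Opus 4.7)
The plan is to reduce to a homogeneous linear differential equation and then apply the Wronskian criterion from Proposition~\ref{prop:wronskian_linear_independence}. First I would observe that the set of solutions is an affine subspace: if $f_1,f_2\in \F[x]$ both satisfy \eqref{eqn:Q_multiplicity}, then their difference $g = f_1-f_2$ satisfies the homogeneous equation
\[
L(g) \;:=\; \sum_{i=0}^{s-1} A_i(x)\,\delta^i(g)(x) \;=\; 0.
\]
Hence the solution set is either empty (in which case the claim is vacuous) or is a coset $f_0 + V$, where $V \subseteq \F[x]^{<p}$ is the $\F$-vector space of polynomials of degree less than $p$ lying in the kernel of $L$. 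Thus it suffices to show $\dim_\F V \le s-1$.

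Next I would argue this bound by contradiction: suppose $g_1,g_2,\dots,g_s \in V$ are $\F$-linearly independent polynomials of degree $< p$. Since $\charF(\F)=p$, Lemma~\ref{lem:linear_independence_lowdegree} applied with $r=p$ upgrades this to $\F(x^p)$-linear independence. Proposition~\ref{prop:wronskian_linear_independence}(1) then tells us that the Wronskian matrix
\[
M \;=\; \bigl[\,\delta^{i}(g_j)(x)\,\bigr]_{0 \le i \le s-1,\; 1 \le j \le s}
\]
has nonzero determinant as an element of $\F(x)$.

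On the other hand, the fact that each $g_j$ lies in $V$ means $L(g_j) = 0$, which is precisely the statement that the row vector $(A_0(x), A_1(x), \ldots, A_{s-1}(x))$ annihilates $M$ on the left. Non-singularity of $M$ over $\F(x)$ then forces every $A_i$ to vanish identically, contradicting the hypothesis that not all $A_i$ are zero. Hence $\dim_\F V \le s-1$, and the affine subspace of solutions has dimension at most $s-1$.

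The main obstacle, and the reason the proposition restricts to $\deg f < p$, is the characteristic-$p$ subtlety: the usual Wronskian criterion fails for $\F$-linearly independent polynomials in positive characteristic (e.g.\ independence only persists up to $\F(x^p)$-dependence). Lemma~\ref{lem:linear_independence_lowdegree} bridges this gap precisely when degrees are strictly below $p$, so the core of the proof is simply noticing that the degree hypothesis on $f$ is tailored to allow the skew-polynomial Wronskian criterion to apply without loss.
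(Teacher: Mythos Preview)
Your proof is correct and follows essentially the same approach as the paper: reduce to the homogeneous equation, assume $s$ independent solutions, and derive a contradiction via the Wronskian criterion. The only cosmetic difference is that you explicitly invoke Lemma~\ref{lem:linear_independence_lowdegree} together with part~(1) of Proposition~\ref{prop:wronskian_linear_independence}, whereas the paper cites Proposition~\ref{prop:wronskian_linear_independence} directly (its part~(2), which applies since the $g_j$ have degree $<p=\charF(\F)$, already packages that lemma).
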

\begin{proof}
	Equation (\ref{eqn:Q_multiplicity}) can be rewritten as $A+\sum_{i=0}^{s-1} A_i \delta^{i}(f)=0.$ Suppose that the set of solutions to this equation in $\F[x]^{<p}$ form an $\F$-affine subspace of $\F[x]$ of dimension at least $s$. Then there exist solutions $f_0,f_1,\dots,f_s \in \F[x]^{<p}$ where $f_1-f_0,\dots,f_s-f_0$ are $\F$-linearly independent. Let $g_i = f_i-f_0.$ Then for $j\in [s]$ we have, $\sum_{i=0}^{s-1}A_i \delta^i(g_j)=0.$ Therefore the determinant of the matrix $[\delta^i(g_j)]_{ij}$ is zero. Therefore by Proposition~\ref{prop:wronskian_linear_independence}, $g_1,g_2,\dots,g_s$ should be $\F$-linearly dependent, which is a contradiction.
\end{proof}

We also remark that solving equation~(\ref{eqn:Q_multiplicity}) when $A=0$ is equivalent to finding roots of a skew polynomial of degree $s-1$ in a conjugacy class. This also intuitively explains why the set of solutions is an affine subspace of dimension at most $s-1$.
Consider the skew polynomial ring $\K[t;\delta]$ of derivation type where $\K=\F(x)$, $\sigma\equiv \Id$ and $\delta$ is the derivative operator. Then by Lemma~\ref{lem:phi_a_map}, $N_i(\conj{0}{f})f=\delta^{i}(f).$ Therefore the Equation~(\ref{eqn:Q_multiplicity}), when $A=0$, can be rewritten as:
$$\sum_{i=0}^{s-1} A_i \delta^{i}(f)=0 \iff \sum_{i=0}^{s-1} A_i N_{i}(\conj{0}{f})f=0.$$
Define $G(t)\in \K[t;\delta]$ as $G(t)=\sum_{i=0}^{s-1} A_i t^i$ which is a skew polynomial of degree at most $s-1.$ Then $G(\conj{0}{f}) f=\sum_{i=0}^{s-1}A_i N_i(\conj{0}{f})f.$ Therefore the solutions of (\ref{eqn:Q_multiplicity}) when $A=0$ are precisely $\{0\}\cup \{f: G(\conj{0}{f}) = 0\}.$

\subsection{Moore matrix}
\label{sec:moore}
The theory of skew polynomials also allows us to calculate the rank of Moore matrices.
Let $\K[t;\sigma]$ be a skew polynomial ring of endomorphism type i.e. $\delta\equiv 0$. This is completely analogous to Wronskian matrices (Section~\ref{sec:wronskian}) once we use the skew polynomial framework.

\begin{definition}[Moore matrix]
		 Let $c_1,\dots,c_n\in \K^*$. Define the Moore matrix $$M_n(c_1,\dots,c_n)=
	\begin{bmatrix}
		c_1 & c_2 &\cdots &c_n\\
		\sigma(c_1)& \sigma(c_2) & \cdots & \sigma(c_n)\\
		\sigma^2(c_1)& \sigma^2(c_2) & \cdots & \sigma^2(c_n)\\
		\vdots & \vdots & & \vdots\\
		\sigma^{n-1}(c_1)& \sigma^{n-1}(c_2) & \cdots & \sigma^{n-1}(c_n)\\
	\end{bmatrix}.
	$$
\end{definition}
\begin{corollary}
	\label{cor:moore_determinant}
 $M_n(c_1,\dots,c_n)$ is full-rank iff $c_1,\dots,c_n$ are linearly independent over $\F=\K_1,$ the centralizer of $1.$
\end{corollary}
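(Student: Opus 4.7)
The plan is to mirror the proof of Corollary~\ref{cor:wronskian} exactly, with $\sigma$ taking the role that $\delta$ played in the derivation-type case. The key observation is that when $\delta\equiv 0$ and $a=1$, Lemma~\ref{lem:phi_a_map} specializes nicely: the map $\phi_1(y) = \sigma(y)\cdot 1 + \delta(y) = \sigma(y)$ is just $\sigma$ itself, so $\sigma^i(c) = \phi_1^i(c) = N_i(\conj{1}{c})\,c$ for every $c\in\K^*$ and every $i\ge 0$.

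Using this identity, I would first factor $c_j$ out of the $j$-th column of $M_n(c_1,\dots,c_n)$ to rewrite it as the column whose $i$-th entry is $N_i(\conj{1}{c_j})$. This exhibits $M_n(c_1,\dots,c_n)$ as a skew Vandermonde matrix $V_n(\conj{1}{c_1},\dots,\conj{1}{c_n})$ with its columns scaled by the nonzero constants $c_1,\dots,c_n$. Since all the points $\conj{1}{c_j}$ lie in the single conjugacy class of $1$, Lemma~\ref{lem:rank_Vandermonde} tells me that $V_n(\conj{1}{c_1},\dots,\conj{1}{c_n})$ is full rank iff $c_1,\dots,c_n$ are linearly independent over the centralizer subfield $\F=\K_1$. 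This gives the forward direction.

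For the converse, I would note that $\K_1=\{c\in\K : \sigma(c)=c\}\cup\{0\}$ (since $\conj{1}{c}=\sigma(c)\,c^{-1}=1$ is equivalent to $\sigma(c)=c$), so every $\lambda\in\K_1$ commutes with $\sigma$ in the sense that $\sigma^i(\lambda y)=\lambda\sigma^i(y)$ for all $i$. Consequently, the assignment $c\mapsto (c,\sigma(c),\dots,\sigma^{n-1}(c))^{T}$ is $\K_1$-linear, so any $\K_1$-linear relation among $c_1,\dots,c_n$ immediately yields the same relation among the columns of $M_n(c_1,\dots,c_n)$, forcing it to be singular.

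There is no real obstacle here; the work has already been done in Lemmas~\ref{lem:phi_a_map} and~\ref{lem:rank_Vandermonde} and in the identification of the centralizer $\K_1$ with the fixed field of $\sigma$. The only care needed is to record that the $c_j$ are assumed nonzero (so factoring them out of columns is legitimate) and to observe that in the endomorphism-type setting the formula $N_i(a)=\sigma^{i-1}(a)\cdots\sigma(a)\,a$ makes the reduction to Vandermonde form completely transparent.
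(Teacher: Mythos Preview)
Your proposal is correct and follows essentially the same approach as the paper: use Lemma~\ref{lem:phi_a_map} with $a=1$ to write $\sigma^i(c)=N_i(\conj{1}{c})\,c$, reduce the Moore matrix to a column-scaled skew Vandermonde matrix, and invoke Lemma~\ref{lem:rank_Vandermonde}. Your explicit treatment of the converse direction via the $\K_1$-linearity of $c\mapsto(\sigma^i(c))_i$ is a welcome addition, since Lemma~\ref{lem:rank_Vandermonde} as stated only gives the ``if'' direction.
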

\begin{proof}
	By Lemma~\ref{lem:phi_a_map}, $\sigma^i(c)=N_i(\conj{1}{c})c$. Thus the claim follows from Lemma~\ref{lem:rank_Vandermonde}.
\end{proof}

We now apply the above to the case when $\K = \F_q(x)$ and $\sigma$ is the automorphism which maps $f(x)\in \F_q(x)$ to $f(\gamma x)$ for a generator $\gamma$ of $\F_q^*$. In this case, the Moore matrix was called the folded Wronskian in \cite{GK-combinatorica}. Analogous Moore matrices for function fields were studied in \cite{GXY-tams}.

\begin{proposition}
	\label{prop:folded_wronskian_linear_independence}
	Let $f_1,f_2,\dots,f_s\in \F_q[x]$ be polynomials of degree at most $d$. Let $\gamma$ be generator for $\F_q^*.$  Define 
	$$M= 
	\begin{bmatrix}
		f_1(x) & f_2(x) &\dots & f_s(x)\\
		\vdots  & \vdots&  & \vdots\\
		f_1(\gamma^j x) &f_2(\gamma^j x) & \dots & f_s(\gamma^j x)\\
		\vdots & \vdots & & \vdots\\
		f_1(\gamma^{s-1} x) &f_2(\gamma^{s-1} x) & \dots & f_s(\gamma^{s-1} x)\\	
	\end{bmatrix}.
	$$
	Then the following are true:
	\begin{enumerate}
		\item $\det(M)\ne 0$ iff $f_1,f_2,\dots,f_2$ are linearly independent over $\F_q(x^{q-1})$.
		\item If $q-1>d$ then, $\det(M)\ne 0$ iff $f_1,f_2,\dots,f_s$ are linearly independent over $\F_q$.
	\end{enumerate}
\end{proposition}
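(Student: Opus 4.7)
The plan is to set up the skew polynomial ring of endomorphism type over $\K = \F_q(x)$ with $\sigma(f(x)) = f(\gamma x)$ (and $\delta \equiv 0$), and then recognize that $M$ is precisely the Moore matrix $M_s(f_1, \ldots, f_s)$ in this ring, since $\sigma^j(f_i)(x) = f_i(\gamma^j x)$. Corollary~\ref{cor:moore_determinant} then reduces the question to linear independence of $f_1, \ldots, f_s$ over the centralizer subfield $\K_1 = \{c \in \K : \sigma(c) = c\}$, because with $\delta \equiv 0$ we have $\conj{1}{c} = \sigma(c) c^{-1}$.

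The first real step is to identify $\K_1$ with $\F_q(x^{q-1})$. The inclusion $\F_q(x^{q-1}) \subseteq \K_1$ is immediate since $\sigma$ acts trivially on any power of $x^{q-1}$. For the reverse inclusion, note that $\sigma$ has order exactly $q-1$ as an automorphism of $\F_q(x)$ (because $\gamma$ generates $\F_q^*$), so by Artin's theorem $[\F_q(x) : \K_1] \geq q-1$; on the other hand $x$ satisfies the polynomial $T^{q-1} - x^{q-1}$ over $\F_q(x^{q-1})$, hence $[\F_q(x) : \F_q(x^{q-1})] \leq q-1$. Sandwiching forces $\K_1 = \F_q(x^{q-1})$. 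This immediately yields part~(1) of the proposition via Corollary~\ref{cor:moore_determinant}.

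For part~(2), under the hypothesis $q - 1 > d$ every $f_i$ has degree strictly less than $q-1$, so Lemma~\ref{lem:linear_independence_lowdegree} (applied with the base field $\F_q$ and $r = q-1$, which gives $\L = \F_q(x^{q-1})$) asserts that $\F_q(x^{q-1})$-linear independence of $f_1, \ldots, f_s$ is equivalent to $\F_q$-linear independence. Combining this equivalence with part~(1) gives the claim.

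The step most likely to require care is verifying $\K_1 = \F_q(x^{q-1})$ for rational functions (not just polynomials), since a priori a fixed rational function could have numerator and denominator that individually fail to lie in $\F_q[x^{q-1}]$ yet cancel to something $\sigma$-invariant. The Galois-theoretic degree-counting argument above sidesteps this cleanly; alternatively one can argue directly by writing $c = a/b$ in lowest terms, using that $\sigma(c) = c$ forces $a(\gamma x) b(x) = a(x) b(\gamma x)$, and then concluding from coprimality of $a, b$ that $\sigma(a) = \lambda a$ and $\sigma(b) = \lambda b$ for a common scalar $\lambda \in \F_q^*$, which rescaling and monomial-inspection shows implies $a, b \in \F_q[x^{q-1}]$. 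Everything else in the proof is a direct application of the skew-polynomial machinery already developed in Section~\ref{sec:moore} and Lemma~\ref{lem:linear_independence_lowdegree}.
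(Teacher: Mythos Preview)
Your proposal is correct and follows essentially the same route as the paper: set up the skew polynomial ring $\F_q(x)[t;\sigma]$ with $\sigma(f(x))=f(\gamma x)$, recognize $M$ as the Moore matrix, invoke Corollary~\ref{cor:moore_determinant}, identify the centralizer $\K_1$, and then apply Lemma~\ref{lem:linear_independence_lowdegree} for part~(2). The only difference is in the sub-step identifying $\K_1=\F_q(x^{q-1})$: the paper argues directly by writing a fixed rational function in lowest terms and using coprimality (exactly your ``alternative'' argument), whereas your primary argument via Artin's theorem and degree-counting is a slicker way to reach the same conclusion.
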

\begin{proof}
	It is clear that if $f_1,f_2,\dots,f_d$ are linearly dependent over $\F_q$, then $\det M=0$. Now we will prove the converse.

	Consider the skew polynomial ring defined in Example~\ref{example:skewpolynomialring} where $\K=\F_q(x), \sigma(g(x))=g(\gamma x)$ and $\delta\equiv 0$. 
	By Corollary~\ref{cor:moore_determinant}, $\det(M)$ is zero iff $f_1,f_2,\dots,f_s$ are linearly independent over $\K_1,$ the centralizer of $1.$ We have
	$$\K_1=\{g: \conj{1}{g}=1\}\cup\{0\}=\{g: g(\gamma x)=g(x)\}.$$
        We now claim that $\K_1 = \F_q(x^{q-1})$ and the rest follows from Lemma~\ref{lem:linear_independence_lowdegree}.
        \end{proof}

	\begin{claim}
		$\K_1=\F_q(x^{q-1}).$
	\end{claim}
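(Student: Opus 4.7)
The plan is to prove the claim by showing both containments. The inclusion $\F_q(x^{q-1}) \subseteq \K_1$ is immediate: for any $h \in \F_q(x)$, setting $g(x) = h(x^{q-1})$ gives $g(\gamma x) = h((\gamma x)^{q-1}) = h(\gamma^{q-1} x^{q-1}) = h(x^{q-1}) = g(x)$, since $\gamma^{q-1} = 1$. So the real content is the reverse containment.

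For $\K_1 \subseteq \F_q(x^{q-1})$, I would start with $g \in \K_1$ and write $g = a/b$ in lowest terms with $a, b \in \F_q[x]$ coprime. The identity $g(\gamma x) = g(x)$ rearranges to $a(\gamma x) b(x) = a(x) b(\gamma x)$. Coprimality of $a, b$ forces $a(x) \mid a(\gamma x)$ and $b(x) \mid b(\gamma x)$; since the map $x \mapsto \gamma x$ preserves degrees, there exist constants $c_1, c_2 \in \F_q^*$ with $a(\gamma x) = c_1 a(x)$ and $b(\gamma x) = c_2 b(x)$. Substituting these back into the original identity yields $c_1 = c_2 = c$ for a single common scalar $c \in \F_q^*$.

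The next step is to classify polynomials $p \in \F_q[x]$ satisfying $p(\gamma x) = c p(x)$. Expanding $p = \sum_i p_i x^i$ and comparing coefficients gives $p_i(\gamma^i - c) = 0$ for every $i$, so each non-zero coefficient $p_i$ forces $\gamma^i = c$. Since $\gamma$ has order exactly $q-1$ in $\F_q^*$, the equation $\gamma^i = c$ picks out a unique residue $\ell \in \{0,1,\dots,q-2\}$ modulo $q-1$, and hence $p(x) = x^\ell P(x^{q-1})$ for some $P \in \F_q[x]$. Applying this to both $a$ and $b$, the very same residue $\ell$ appears (it is determined solely by $c$), so $a(x) = x^\ell A(x^{q-1})$ and $b(x) = x^\ell B(x^{q-1})$. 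Coprimality then forces $\ell = 0$, otherwise $x$ would divide both $a$ and $b$. Therefore $g = A(x^{q-1})/B(x^{q-1}) \in \F_q(x^{q-1})$.

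No step presents a real obstacle. The only point that requires attention is ensuring that the residue classes attached to $a$ and $b$ coincide, and this is automatic because both are read off from the single scalar $c$ using the primitivity of $\gamma$.
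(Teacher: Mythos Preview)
Your proof is correct and follows essentially the same route as the paper: write $g=a/b$ in lowest terms, use coprimality and degree preservation to deduce $a(\gamma x)=c\,a(x)$ and $b(\gamma x)=c\,b(x)$ for a common scalar, and then reduce to the polynomial case. The only difference is cosmetic: the paper normalizes so that $a$ or $b$ has constant term~$1$, which forces $c=1$ immediately, whereas you keep $c$ arbitrary, classify the $c$-semi-invariant polynomials as $x^\ell$ times a polynomial in $x^{q-1}$, and then use coprimality to force $\ell=0$.
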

	\begin{proof}
		$\K_1=\{g\in \F_q(x): g(\gamma x)=g(x)\}$. If $g\in \F_q[x]$, then it is easy to see that $g(\gamma x)=g(x)$ iff $g\in \F_q[x^{q-1}].$ Now suppose $g$ is a rational function of the form $g=a/b$ where $a,b\in \F_q[x]$ do not have any common factors and we can assume that the constant term of $a$ or $b$ is $1$. $g(\gamma x) = g(x) \iff a(\gamma x)b(x) = a(x) b(\gamma x)$. Since $a,b$ do not have any common factors, this implies that $a$ divides $a(\gamma x)$ and $b$ divides $b(\gamma x).$ Since degree of $a(\gamma x)$ is the same as that of $a(x)$ and the degree of $b(\gamma x)$ is the same as that of $b(x)$, this implies that $a(\gamma x)=\lambda a(x)$ and $b(\gamma x) = \lambda b(x)$ for some $\lambda \in \F_q$. Since we assumed that $a$ or $b$ has constant term 1, we can conclude that $\lambda=1$. Therefore $a,b\in \F_q[x^{q-1}]$ and so $g\in \F_q(x^{q-1}).$
	\end{proof}

Using the above, we can now deduce the following result which is the basis of list-size bound for list decoding folded Reed-Solomon codes~\cite{Gur-ccc11,GW13} and the analysis of the subspace design constructed using folded Reed-Solomon codes~\cite{GK-combinatorica}.

\begin{lemma}
	\label{lem:listdecoding_foldedRS}
	Let $\gamma$ be a generator for $\F_q^*$. Let $Q(x,y_0,y_1,\dots,y_{s-1})=A(x)+\sum_{i=0}^{s-1} A_i(x)y_i$ where $A(x),A_i(x)\in \F_q[x]$ and not all $A_i$ are zero. The set of all $f\in \F_q[x]$ of degree less than $q-1$, such that 
	\begin{equation}
		\label{eqn:Q_foldedRS}
		Q(x,f(x),f(\gamma x),\dots,f(\gamma^{s-1} x))=0,
	\end{equation}
	 form an $\F_q$-affine subspace of $\F_q[x]$ of dimension at most $s-1$.
\end{lemma}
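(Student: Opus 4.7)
The plan is to run the same affine-subspace argument as in Proposition~\ref{prop:listdecoding_multiplicity}, but using Proposition~\ref{prop:folded_wronskian_linear_independence} (the folded/Moore Wronskian result) in place of the classical Wronskian proposition. Concretely, rewrite \eqref{eqn:Q_foldedRS} as
\[ A(x) + \sum_{i=0}^{s-1} A_i(x)\, f(\gamma^i x) = 0. \]
Because this equation is affine in $f$, the set of solutions in $\F_q[x]^{<q-1}$ is automatically an $\F_q$-affine subspace of $\F_q[x]$; the only content is the bound on its dimension.

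Suppose for contradiction that the affine dimension is at least $s$. Then there exist solutions $f_0, f_1, \dots, f_s \in \F_q[x]^{<q-1}$ such that $g_j \coloneqq f_j - f_0$ for $j \in \{1,2,\dots,s\}$ are $\F_q$-linearly independent. Linearity of the equation in $f$ means each $g_j$ is a solution of the homogeneous version, i.e.
\[ \sum_{i=0}^{s-1} A_i(x)\, g_j(\gamma^i x) = 0 \qquad \text{for every } j \in \{1,\dots,s\}. \]
This says that the nonzero row vector $(A_0(x), A_1(x), \dots, A_{s-1}(x))$ lies in the left kernel of the $s \times s$ matrix $M$ whose $(i,j)$-entry is $g_j(\gamma^i x)$. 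Hence $\det M = 0$.

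Now apply Proposition~\ref{prop:folded_wronskian_linear_independence} to $g_1, \dots, g_s$: since each has degree at most $q-2 < q-1$, the second clause of that proposition gives that $\det M \ne 0$ iff $g_1, \dots, g_s$ are $\F_q$-linearly independent. But $\det M = 0$, contradicting the $\F_q$-linear independence of the $g_j$'s. The main (very mild) obstacle is just to correctly set up the matrix so that the hypothesis ``not all $A_i$ are zero'' translates into the left-kernel statement; once that is done, Proposition~\ref{prop:folded_wronskian_linear_independence} finishes the job immediately, and the result is an exact analogue of Proposition~\ref{prop:listdecoding_multiplicity} with the folded shift $f(x) \mapsto f(\gamma x)$ playing the role of the derivative $\delta$.
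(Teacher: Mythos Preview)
Your argument is correct and essentially identical to the paper's own proof: both rewrite \eqref{eqn:Q_foldedRS} in the form $A+\sum_i A_i f(\gamma^i x)=0$, pass to differences $g_j=f_j-f_0$ satisfying the homogeneous equation, deduce that the Moore/folded-Wronskian matrix $[g_j(\gamma^i x)]_{ij}$ is singular, and then invoke Proposition~\ref{prop:folded_wronskian_linear_independence} (with $d\le q-2<q-1$) to contradict the $\F_q$-independence of the $g_j$. Your write-up is in fact slightly more careful in spelling out why the solution set is an affine subspace and how the hypothesis ``not all $A_i$ are zero'' is used.
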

\begin{proof}
		Equation (\ref{eqn:Q_foldedRS}) can be rewritten as $A+\sum_{i=0}^{s-1} A_i f(\gamma^i x)=0.$ Suppose that the set of solutions to this equation in $\F_q[x]^{<q-1}$ form an $\F_q$-affine subspace of $\F_q[x]$ of dimension at least $s$. Then there exist solutions $f_0,f_1,\dots,f_s \in \F_q[x]^{<q-1}$ where $f_1-f_0,\dots,f_s-f_0$ are $\F_q$-linearly independent. Let $g_i = f_i-f_0.$ Then for $j\in [s]$ we have, $\sum_{i=0}^{s-1}A_i g_j(\gamma^i x)=0.$ Therefore the determinant of the matrix $[g_j(\gamma^i x)]_{ij}$ is zero. Therefore by Proposition~\ref{prop:folded_wronskian_linear_independence}, $g_1,g_2,\dots,g_s$ should be $\F_q$-linearly dependent, which is a contradiction.
\end{proof}
Just as we did in Section~\ref{sec:wronskian}, we remark that solving Equation~(\ref{eqn:Q_foldedRS}), when $A=0$, is equivalent to finding roots of the degree $s-1$ skew polynomial $G(t)=\sum_{i=0}^{s-1}A_i t^i$ in the conjugacy class of $1$, where the underlying skew polynomial ring is $\K[t;\sigma]$ where $\K=\F(x)$ and $\sigma(f(x))=f(\gamma x).$

\section{Maximum sum rank distance codes}
\label{sec:MSRD}
\newcommand{\sumrk}{\mathrm{sum\text{-}rank}}
In this section, we will present a construction of Maximum Sum-Rank Distance (MSRD) codes due to~\cite{Martinez18} using the skew polynomial framework. We will first define sum-rank distance codes.
	
	Fix some basis $\cB$ for $\F_{q^m}$ as vector space over $\F_q.$ Given $z=(z_1,z_2,\dots,z_r)\in \F_{q^m}^r$, we can think of $z$ as an $m\times r$ matrix with entries in $\F_q$ by expressing each coordinate $z_i$ as a $\F_q^m$ vector using basis $\cB$; define $\rk_{\F_q}(z)$ to be the $\F_q$-rank of that matrix.
	Let $\cP=A_1\sqcup A_2 \sqcup \dots \sqcup A_s$ be a partition of $[n]$ into $s$ parts. Given $x\in \F_{q^m}^n$, let $x=(x_1,x_2,\dots,x_s)$ be the partition of of $x$ according to $\cP$ where $x_i \in \F_{q^m}^{A_i}$. Define $\sumrk_\cP(x)=\sum_{i=1}^s \rk_{\F_q}(x_i).$
\begin{definition}[sum-rank distance]
	Fix some partition $\cP=A_1\sqcup A_2 \sqcup \dots \sqcup A_s$ of $[n]$ into $s$ parts. An $\F_{q^m}$-linear subspace $C$ of $\F_{q^m}^n$ is said to have sum-rank distance $d$ (w.r.t. partition $\cP$) if every non-zero codeword $c\in C$, $\sumrk_\cP(c) \ge d.$
\end{definition}
Note that the sum-rank distance generalizes both Hamming metric (by choosing $\cP=\{1\}\sqcup \{2\} \sqcup \dots \sqcup \{n\}$) and rank metric (by choosing $\cP=[n]$). Moreover for any partition $\cP$ and any $x\in \F_{q^m}^n$, $\sumrk_\cP(x)$ is most the Hamming weight of $x$ (as rank is upper bounded by the number of non-zero columns). Therefore by the Singleton bound, any $k$-dimensional code of $\F_{q^m}^n$, can have sum-rank distance at most $n-k+1$. A code achieving this bound is called an MSRD code. Therefore MSRD codes generalize both MDS codes and Gabidulin codes. Sum-rank distance was introduced by~\cite{NU10} for applications in network coding. We will now present the construction of MSRD codes.


\begin{theorem}[Construction of maximum sum rank distance codes~\cite{Martinez18}]
\label{thm:max_sum_rank_distance_codes}
Let $\gamma$ be a generator for $\F_{q^m}$ and let $\beta_1,\dots,\beta_m \in \F_{q^m}$ be linearly independent over $\F_q.$ Let $n=(q-1)m$. For $k\le n,$  define a $k\times n$ matrix $M=[M_0|M_1|\dots|M_{q-2}]$ where 
$$
M_\ell=
\begin{bmatrix}
	 	\beta_1& \beta_2 & \dots & \beta_m\\
	 	\gamma^{\ell}\beta_1^{q}& \gamma^{\ell}\beta_2^{q}& \cdots & \gamma^{\ell}\beta_m^{q}\\
	 	\gamma^{\ell(1+q)}\beta_1^{q^2}& \gamma^{\ell(1+q)}\beta_2^{q^2}& \cdots & \gamma^{\ell(1+q)}\beta_m^{q^2}\\
	 	\vdots & \vdots &  &\vdots\\
	 	\gamma^{\ell(1+q+\dots+q^{k-2})}\beta_1^{q^{k-1}}& \gamma^{\ell(1+q+\dots+q^{k-2})}\beta_2^{q^{k-1}}& \cdots & \gamma^{\ell(1+q+\dots+q^{k-2})}\beta_m^{q^{k-1}}
	 \end{bmatrix}.
$$
Then $M$ is the generator matrix of a maximum sum rank distance code, i.e., for every non-zero vector $\lambda\in \F_{q^m}^k,$ $\sum_{\ell=0}^{q-2}\rk_{\F_q}(\lambda^T M_\ell)\ge n-k+1.$\footnote{Here we are interpreting a row vector $c\in \F_{q^m}^r$ as an $m\times r$ matrix over $\F_q$. $\rk_{\F_q}(c)$ is the $\F_q$-rank of this matrix. We will also use $\ker_{\F_q}(c)$ in the proof to denote the kernel of the matrix.}
\end{theorem}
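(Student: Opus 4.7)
My plan is to prove the bound by interpreting each row of $\lambda^T M$ as a skew polynomial evaluation and then invoking the fundamental theorem on roots of skew polynomials (Theorem~\ref{thm:fundamentalthm_roots_skewpolynomials}), exactly in the spirit of the LRC construction in Section~\ref{sec:constructions_main}.

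First, I set up the skew polynomial ring $\K[t;\sigma]$ with $\K=\F_{q^m}$, $\sigma(a)=a^q$, and $\delta\equiv 0$. By Example~\ref{example:Frobenious}, the nonzero elements $1,\gamma,\gamma^2,\dots,\gamma^{q-2}$ lie in $q-1$ pairwise distinct conjugacy classes, each with centralizer $\K_{\gamma^\ell}=\F_q$. With $\sigma(a)=a^q$, the power functions are $N_i(a)=a^{1+q+\cdots+q^{i-1}}$, and Lemma~\ref{lem:phi_a_map} gives $\phi_a^i(y)=\sigma^i(y)N_i(a)=N_i(\conj{a}{y})\,y$, which matches the $(i,j)$-entry of $M_\ell$ exactly: the entry $\gamma^{\ell(1+q+\cdots+q^{i-1})}\beta_j^{q^i}$ equals $\phi_{\gamma^\ell}^i(\beta_j)=N_i(\conj{\gamma^\ell}{\beta_j})\beta_j$.

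Next, given a nonzero $\lambda=(\lambda_0,\dots,\lambda_{k-1})\in\F_{q^m}^k$, I form the nonzero skew polynomial
\[
f(t)=\sum_{i=0}^{k-1}\lambda_i t^i\in\K[t;\sigma],\qquad \deg(f)\le k-1.
\]
Using Lemma~\ref{lem:evaluation}, the $j$-th coordinate of the row vector $\lambda^T M_\ell$ works out to
\[
(\lambda^T M_\ell)_j=\sum_{i=0}^{k-1}\lambda_i N_i(\conj{\gamma^\ell}{\beta_j})\,\beta_j=f\bigl(\conj{\gamma^\ell}{\beta_j}\bigr)\beta_j=D_{f,\gamma^\ell}(\beta_j),
\]
in the notation of Lemma~\ref{lem:sum_of_conjugates}. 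That lemma says $D_{f,\gamma^\ell}$ is $\K_{\gamma^\ell}$-linear, i.e., $\F_q$-linear, as a map $\F_{q^m}\to\F_{q^m}$. Since $\beta_1,\dots,\beta_m$ is an $\F_q$-basis of $\F_{q^m}$, the $\F_q$-span of the entries of $\lambda^T M_\ell$ equals the image of $D_{f,\gamma^\ell}$, so
\[
\rk_{\F_q}(\lambda^T M_\ell)=\dim_{\F_q}\img\bigl(D_{f,\gamma^\ell}\bigr)=m-\dim_{\F_q}\ker\bigl(D_{f,\gamma^\ell}\bigr)=m-\dim_{\F_q}V_f(\gamma^\ell),
\]
where $V_f(\gamma^\ell)=\{y\in\K^*: f(\conj{\gamma^\ell}{y})=0\}\cup\{0\}$ is the $\F_q$-subspace of Lemma~\ref{lem:conjugate_roots_vectorspace}.

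Finally, since $\gamma^0,\gamma^1,\dots,\gamma^{q-2}$ lie in pairwise distinct conjugacy classes and $f$ is nonzero of degree at most $k-1$, Theorem~\ref{thm:fundamentalthm_roots_skewpolynomials} yields
\[
\sum_{\ell=0}^{q-2}\dim_{\F_q}V_f(\gamma^\ell)\le\deg(f)\le k-1.
\]
Summing the identity for $\rk_{\F_q}(\lambda^T M_\ell)$ then gives
\[
\sum_{\ell=0}^{q-2}\rk_{\F_q}(\lambda^T M_\ell)\ge (q-1)m-(k-1)=n-k+1,
\]
which is the required MSRD bound. The substantive work is purely in the bookkeeping identifying $M_\ell$-entries with $\phi_{\gamma^\ell}^i$-evaluations and in invoking the $\F_q$-linearity from Lemma~\ref{lem:sum_of_conjugates}; once those are in place the fundamental theorem delivers the bound with no further effort, so I do not expect a real obstacle here.
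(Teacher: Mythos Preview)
Your proof is correct and follows essentially the same approach as the paper: both set up the skew polynomial $f(t)=\sum_i\lambda_i t^i$ in $\F_{q^m}[t;\sigma]$, identify the entries of $\lambda^T M_\ell$ with evaluations $f(\conj{\gamma^\ell}{\beta_j})\beta_j$, and then bound the total root-dimension across the $q-1$ conjugacy classes by $\deg(f)\le k-1$ via Theorem~\ref{thm:fundamentalthm_roots_skewpolynomials}. The only cosmetic difference is that the paper argues by contradiction through $\ker_{\F_q}(\lambda^T M_\ell)$, whereas you proceed directly via rank--nullity and the image of the $\F_q$-linear map $D_{f,\gamma^\ell}$; your version is a bit more explicit in invoking Lemmas~\ref{lem:sum_of_conjugates} and~\ref{lem:phi_a_map}, but the substance is identical.
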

\begin{proof}
	Suppose $\lambda \in \F_{q^m}^k$ is a non-zero vector such that $\sum_{\ell=0}^{q-2}\rk_{\F_q}(\lambda^T M_\ell)\le n-k.$ This is equivalent to $\sum_{\ell=0}^{q-2}\dim_{\F_q}(\ker_{\F_q}(\lambda^T M_\ell))\ge k.$

	Let $\K=\F_{q^m}$, $\sigma(a)=a^q$ and $\delta\equiv 0$. See Example~\ref{example:Frobenious} for the conjugation relation and conjugacy classes in this case. Define $f(t)=\sum_{i=0}^{k-1} \lambda_i t^i$ which is a non-zero skew polynomial of degree at most $k-1$ in $\F_{q^m}[t;\sigma]$. We will find many roots for $f$ which would violate Theorem~\ref{thm:fundamentalthm_roots_skewpolynomials} to get a contradiction.

	Fix some $\ell\in \set{0,1,\dots,q-2}$. Suppose $\dim_{\F_q}(\ker_{\F_q}(\lambda^T M_\ell))=d_\ell$. Let $\mu_1,\dots,\mu_{d_\ell} \in \F_q^m$ be a basis for the kernel. Let $\beta=(\beta_1,\beta_2,\dots,\beta_m)\in \F_{q^m}^m$. Now $\lambda^T M_\ell \mu_i=0$ implies that $\beta^T\mu_i$ is root of $f$. Moreover the $d_\ell$ roots $\beta^T\mu_1,\dots, \beta^T\mu_{d_\ell}\in \F_{q^m}$ are linearly independent over $\F_q$ since $\rk_{\F_q}(\beta)=m$.

	Thus we get $\sum_{\ell=0}^{q-2} d_\ell \ge k$ roots for $f$. And the roots in each conjugacy class are linearly independent over $\F_q$ (which is the centralizer). Therefore by Theorem~\ref{thm:fundamentalthm_roots_skewpolynomials}, we get a contradiction.
\end{proof}

It is easy to see that the above construction can be easily modified to work for any partition $\cP$ of $[n]$ into at most $(q-1)$ parts, where each part has size at most $m.$
In~\cite{MartinezK19_NetworkCoding}, an efficient decoding algorithm for these codes is given.

\end{document}